\def\showauthornotes{2}
\def\showkeys{0}
\def\showdraftbox{1}
\def\showcolorlinks{1}
\def\usemicrotype{1}
\def\showfixme{1}
\def\arxivmode{0}
\def\fastmode{0}
\newcommand{\llangle}{\left\langle}
\newcommand{\rrangle}{\right\rangle}
\newtheorem{theorem}{Theorem}[section]
\newtheorem*{theorem*}{Theorem}
\newtheorem*{proposition*}{Proposition}
\newtheorem{lemma}[theorem]{Lemma}
\newtheorem*{lemma*}{Lemma}
\newtheorem{corollary}[theorem]{Corollary}
\newtheorem*{conjecture*}{Conjecture}
\newtheorem*{fact*}{Fact}
\newtheorem*{exercise*}{Exercise}
\newtheorem*{hypothesis*}{Hypothesis}
\theoremstyle{definition}
\newtheorem{exercise-easy}[theorem]{Exercise}
\newtheorem{exercise-med}[theorem]{Exercise}
\newtheorem{exercise-hard}[theorem]{Exercise$^\star$}
\newtheorem{claim}[theorem]{Claim}
\newtheorem*{claim*}{Claim}
\newtheorem*{remark*}{Remark}
\newtheorem{observation}[theorem]{Observation}
\newtheorem*{observation*}{Observation}
\let\mathbb\varmathbb
\definecolor{bleudefrance}{rgb}{0.01, 0.1, 1.0}
\definecolor{azure}{rgb}{0.0, 0.5, 1.0}
\newcommand{\savehyperref}[2]{\texorpdfstring{\hyperref[#1]{#2}}{#2}}
\newcommand{\Sref}[1]{\hyperref[#1]{\S\ref*{#1}}}
\newcommand{\mynotes}[1]{{\sffamily\small\color{teal}{#1}}\medskip}
\newcommand{\Authornote}[2]{{\sffamily\small\color{Maroon}{[#1: #2]}}\medskip}
\newcommand{\Authornotecolored}[3]{{\sffamily\small\color{#1}{[#2: #3]}}}
\newcommand{\Authorcomment}[2]{{\sffamily\small\color{gray}{[#1: #2]}}}
\newcommand{\Authorstartcomment}[1]{\sffamily\small\color{gray}[#1: }
\newcommand{\Authorfnote}[2]{\footnote{\color{red}{#1: #2}}}
\newcommand{\Authorfixme}[1]{\Authornote{#1}{\textbf{??}}}
\newcommand{\Authormarginmark}[1]{\marginpar{\textcolor{red}{\fbox{\Large #1:!}}}}
\newcommand{\myexplain}[1]{{\sffamily\small\color{red}{\noindent [Explanation:\medskip\newline \begin{quote}#1\hfill]\end{quote}}}\medskip}
\newcommand{\explain}[1]{{\sffamily\small\color{red}{#1}}\medskip}
\newcommand{\mynotes}[1]{}
\newcommand{\Authornote}[2]{}
\newcommand{\Authornotecolored}[3]{}
\newcommand{\Authorcomment}[2]{}
\newcommand{\Authorstartcomment}[1]{}
\newcommand{\Authorfnote}[2]{}
\newcommand{\Authorfixme}[1]{}
\newcommand{\Authormarginmark}[1]{}
\newcommand{\myexplain}[1]{}
\newcommand{\explain}[1]{}
\renewcommand{\myexplain}[1]{{\sffamily\small\color{red}{\noindent \begin{quote}{\bf Explanation:} \medskip\newline #1\end{quote}}}\medskip}
\newcommand{\Esymb}{\mathbb{E}}
\newcommand{\Psymb}{\mathbb{P}}
\DeclareMathOperator*{\E}{\Esymb}
\DeclareMathOperator*{\ProbOp}{\Psymb}
\renewcommand{\Pr}{\ProbOp}
\newcommand{\textparen}[1]{\text{(#1)}}
\newcommand{\because}[1]{\textparen{because #1}}
\renewcommand{\because}[1]{\textparen{because #1}}
\newcommand{\seteq}{\mathrel{\mathop:}=}
\newcommand{\bigmid}{~\big|~}
\newcommand\bdot\bullet
\DeclareMathOperator{\poly}{poly}
\DeclareMathOperator{\argmax}{argmax}
\DeclareMathOperator{\dist}{dist}
\newcommand{\R}{\mathbb R}
\newcommand{\cA}{\mathcal A}
\newcommand{\cD}{\mathcal D}
\newcommand{\cF}{\mathcal F}
\newcommand{\cP}{\mathcal P}
\newcommand{\cQ}{\mathcal Q}
\renewcommand{\leq}{\leqslant}
\renewcommand{\geq}{\geqslant}
\let\epsilon=\varepsilon
\numberwithin{equation}{section}
\newcommand\MYcurrentlabel{xxx}
\newcommand{\MYstore}[2]{%
  \global\expandafter \def \csname MYMEMORY #1 \endcsname{#2}%
}
\newcommand{\MYload}[1]{%
  \csname MYMEMORY #1 \endcsname%
}
\newcommand{\MYnewlabel}[1]{%
  \renewcommand\MYcurrentlabel{#1}%
  \MYoldlabel{#1}%
}
\newcommand{\MYdummylabel}[1]{}
\newcommand{\torestate}[1]{%
  \let\MYoldlabel\label%
  \let\label\MYnewlabel%
  #1%
  \MYstore{\MYcurrentlabel}{#1}%
  \let\label\MYoldlabel%
}
\newcommand{\restatetheorem}[1]{%
  \let\MYoldlabel\label
  \let\label\MYdummylabel
  \begin{theorem*}[Restatement of \prettyref{#1}]
    \MYload{#1}
  \end{theorem*}
  \let\label\MYoldlabel
}
\newcommand{\restatelemma}[1]{%
  \let\MYoldlabel\label
  \let\label\MYdummylabel
  \begin{lemma*}[Restatement of \prettyref{#1}]
    \MYload{#1}
  \end{lemma*}
  \let\label\MYoldlabel
}
\newcommand{\restateprop}[1]{%
  \let\MYoldlabel\label
  \let\label\MYdummylabel
  \begin{proposition*}[Restatement of \prettyref{#1}]
    \MYload{#1}
  \end{proposition*}
  \let\label\MYoldlabel
}
\newcommand{\restatefact}[1]{%
  \let\MYoldlabel\label
  \let\label\MYdummylabel
  \begin{fact*}[Restatement of \prettyref{#1}]
    \MYload{#1}
  \end{fact*}
  \let\label\MYoldlabel
}
\newcommand{\restate}[1]{%
  \let\MYoldlabel\label
  \let\label\MYdummylabel
  \MYload{#1}
  \let\label\MYoldlabel
}
\newcommand{\addreferencesection}{
  \phantomsection
\ifnum\stocmode=0
  \addcontentsline{toc}{section}{References}
\else
  \addcontentsline{toc}{section}{References \hspace*{1in} --------- End of extended abstract ---------}
\fi

}
\newcommand{\e}{\epsilon}
\newcommand{\eps}{\epsilon}
\let\origparagraph\paragraph
\renewcommand{\paragraph}[1]{\vspace*{-3pt}\origparagraph{#1.}}
\let\pref=\prettyref
\newcommand{\diam}{\mathrm{diam}}
\newcommand{\dmid}{\,\|\,}
\newcommand{\vertiii}[1]{{\left\vert\kern-0.25ex\left\vert\kern-0.25ex\left\vert #1 
          \right\vert\kern-0.25ex\right\vert\kern-0.25ex\right\vert}}
\newcommand\f{\varphi}
\DeclareMathOperator{\argmin}{\mathrm{argmin}}
\newcommand{\cmnt}[1]{}
\newcommand{\proj}{\mathrm{proj}}
\newcommand{\1}{\mathbb{1}}
\renewcommand{\mathbb}{\vvmathbb}
\newcommand{\spc}{\,:\,}
\newcommand{\vvD}{\vvmathbb{D}}
\newcommand{\vvW}{\vvmathbb{W}}
\newcommand{\cost}{\mathsf{cost}}
\newcommand{\scost}{\mathsf{serv}}
\newcommand{\mcost}{\mathsf{move}}
\newcommand{\flow}{\Lambda}
\newcommand{\rt}{\vvmathbb{r}}
\renewcommand{\dag}{\cD}
\title{Multiscale entropic regularization for MTS \\ on general metric spaces}
\author{Farzam Ebrahimnejad\thanks{\texttt{febrahim@cs.washington.edu}}\hspace{0.8in}  James R. Lee\thanks{\texttt{jrl@cs.washington.edu}} \vspace{0.1in}\\
{\small Paul G. Allen School of Computer Science \& Engineering} \\ {\small University of Washington}}
\date{}
\begin{document}

\maketitle

\vspace*{-0.4in}

\begin{abstract}
   We present an
   $O((\log n)^2)$-competitive algorithm for metrical task systems (MTS) on any $n$-point metric space
   that is also $1$-competitive for service costs.  This matches the competitive ratio achieved
   by Bubeck, Cohen, Lee, and Lee (2019) and the refined competitive ratios obtained by
   Coester and Lee (2019).  Those algorithms work by first randomly embedding the metric space
   into an ultrametric and then solving MTS there.  In contrast, our algorithm is cast as
   regularized gradient descent where the regularizer is a multiscale metric entropy defined directly on the metric space.
   This answers an open question of Bubeck (Highlights of Algorithms, 2019).
\end{abstract}

\begingroup
\hypersetup{linktocpage=false}
\setcounter{tocdepth}{2}
{\small
\tableofcontents}
\endgroup

\newpage

\section{Introduction}

Let $(X,d)$ be a finite metric space with $|X|=n > 1$. The Metrical Task
Systems (MTS) problem, introduced in \cite{BLS92} is defined as follows.
The input is a sequence $\langle c_t : X \to \R_+ \mid t = 1,2,\ldots\rangle$ of
nonnegative cost functions on the state space $X$.
At every time $t$, an online algorithm maintains
a state $\rho_t \in X$.

The corresponding cost is the sum of a {\em service cost} $c_t(\rho_t)$ and a
{\em movement cost} $d(\rho_{t-1}, \rho_t)$.
Formally, an {\em online algorithm} is a sequence of mappings
$\bm{\rho} = \langle \rho_1, \rho_2, \ldots, \rangle$
where, for every $t \geq 1$,
$\rho_t : (\R_+^X)^t \to X$ maps a sequence of cost functions $\langle c_1, \ldots, c_t\rangle$
to a state. The initial state $\rho_0 \in X$ is fixed. The {\em total cost of the algorithm $\bm{\rho}$ in servicing $\bm{c} = \langle c_t : t \geq 1\rangle$} is defined as the sum of the service and movement costs:
\begin{align*}
   \scost_{\bm{\rho}}(\bm{c}) &\seteq \sum_{t \geq 1} c_t\!\left(\rho_t(c_1,\ldots, c_t)\right)  \\
   \mcost_{\bm{\rho}}(\bm{c}) &\seteq \sum_{t \geq 1} d\!\left(\rho_{t-1}(c_1,\ldots, c_{t-1}), \rho_t(c_1,\ldots, c_t)\right) \\
   \cost_{\bm{\rho}}(\bm{c}) &\seteq \scost_{\bm{\rho}}(\bm{c}) + \mcost_{\bm{\rho}}(\bm{c}).
\end{align*}
The cost of the {\em offline optimum}, denoted $\cost^*(\bm{c})$, is the infimum of
$\sum_{t \geq 1} [c_t(\rho_t)+d(\rho_{t-1},\rho_t)]$ over {\em any} sequence
$\llangle \rho_t : t \geq 1\rrangle$ of states.

A {\em randomized online algorithm} $\bm{\rho}$  is said to be {\em $\alpha$-competitive}
if for every $\rho_0 \in X$, there is a constant $\beta > 0$ such that for all
cost sequences $\bm{c}$:
\[
\E\left[\cost_{\bm{\rho}}(\bm{c})\right] \leq \alpha \cdot \cost^*(\bm{c}) + \beta\,.
\]
Such an algorithm is said to be {\em $\alpha$-competitive for service costs and $\alpha'$-competitive
for movement costs} if there is a constant $\beta > 0$ such that for all
cost sequences $\bm{c}$:
\begin{align*}
   \E\left[\scost_{\bm{\rho}}(\bm{c})\right] &\leq \alpha \cdot \cost^*(\bm{c}) + \beta \\
   \E\left[\mcost_{\bm{\rho}}(\bm{c})\right] &\leq \alpha' \cdot \cost^*(\bm{c}) + \beta.
\end{align*}

For the $n$-point uniform metric,
a simple coupon-collector argument shows that the competitive ratio is $\Omega(\log n)$, and this is tight \cite{BLS92}. A long-standing conjecture is that this $\Theta(\log n)$ competitive ratio holds for an arbitrary $n$-point metric space.
The lower bound has almost been established \cite{BBM06,BLMN05};
for any $n$-point metric space, the competitive ratio is $\Omega(\log n / \log \log n)$.
Following a long sequence of works (see, e.g., \cite{Sei99,BKRS00,BBBT97,Bar96,FM03,FRT04}),
an upper bound of $O((\log n)^2)$ was shown in \cite{BCLL19}.

\paragraph{Competitive analysis via gradient descent}

Let us consider an equivalent fractional perspective on MTS where
the online algorithm maintains, at every point in time, a probability distribution $\mu_t \in \R_+^X$,
and we interpret the costs similarly as a vector $c_t \in \R_+^X$.
The cost of the algorithm is then given by
\[
   \sum_{t \geq 1} \left( \langle \mu_t, c_t\rangle + \vvW^1_X(\mu_{t-1},\mu_t)\right),
\]
where $\vvW^1_X$ is the $L^1$ transportation cost between two probability distributions on $(X,d)$.
This perspective is convenient, as now the state of the algorithm is given by
a point in the probability simplex $\Delta_X \subseteq \R_+^X$.

This yields a natural first algorithm for solving MTS:
\begin{equation}\label{eq:euclidean-proj}
   \mu_{t+1} \seteq \proj_{\Delta_X} \left(\mu_t -  \eta c_t\right),
\end{equation}
where $\eta > 0$ is some parameter we can choose and $\proj_{\Delta_X}$ denotes the Euclidean projection onto
the convex body $\Delta_X$.  Moreover, it gives a natural way of relating the 
cost incurred by the algorithm to the cost incurred by {\em any other} state $\nu \in \Delta_X$:
It is a basic exercise in convex geometry to show that
\begin{equation}\label{eq:euclidean-progress}
   \|\mu_{t+1}-\nu\|^2  - \|\mu_t - \nu\|^2 \leq \eta \langle c_t,\nu-\mu_t\rangle.
\end{equation}
In other words, if $\langle c_t,\mu_t\rangle > \langle c_t,\nu\rangle$, then $\mu_t$ approaches $\nu$
proportionally in the squared Euclidean distance.

Thus we cannot consistently incur more service cost than any fixed state.
This does not provide a competitive algorithm because there is, in general,
no convenient relationship between the Euclidean distance $\|\mu_{t}-\mu_{t+1}\|$ and the transportation distance $\vvW^1_X(\mu_t,\mu_{t+1})$.

But one can replace the Euclidean distance by any Bregman divergence $\vvD_{\Phi}$ associated to a strictly convex function $\Phi$.
Equivalently, we perform the projection \eqref{eq:euclidean-proj} in the local inner product
\[
   \langle u,v \rangle_{\mu_t} \seteq \langle \nabla^2 \Phi(\mu_t) u, v\rangle.
\]
Thus by choosing an appropriate geometry on $\Delta_X$, one can hope to obtain a competitive algorithm.
Such algorithms often go by the name {\em mirror descent} and the regularizer $\Phi$ is called the {\em mirror map} 
(we will often use the term {\em regularizer} interchangeably).

This framework is proposed in \cite{ABBS10,BCN14} and 
applied to the $k$-server problem in \cite{BCLLM18}, and to MTS in \cite{BCLL19} and \cite{CL19}.
In all these papers, the algorithms apply only to ultrametrics (equivalently, to hierarchically separated tree metrics (HSTs)).
In \cite{BCLL19}, mirror descent is used to analyze the algorithm on weighted stars, and these
algorithms are glued together in an ad-hoc way to handle HSTs.  In \cite{CL19}, stronger bounds
(known as ``refined guarantees'')
are obtained by finding an appropriate regularizer on arbitrary HSTs.
In both cases, general finite metric spaces are then handled via random embeddings into HSTs.

In the present work, we apply this method directly to MTS on general metric spaces and match the best-known competitive ratio.
Previously, it was unknown how to achieve any $\poly(\log n)$ competitive ratio for general metric spaces
using mirror descent and achieving this was posed as an open problem by Bubeck\footnote{Posed in his talk at HALG 2019.}.

We consider this an important step in advancing the underlying philosophy.  Note that past approaches to MTS
have involved a series of ad-hoc, complicated algorithms, along with clever potential function analyses.
In contrast, in the mirror descent approach, once one specifies a convex body and a regularizer,
both the algorithm and the method of analysis fall out naturally.  Indeed, the most subtle part of competitive analysis
lies in connecting the cost an online algorithm incurs to the cost of some offline optimum,
and this is done entirely through the general Bregman divergence analog of \eqref{eq:euclidean-progress}, which becomes
\[
   \vvD_{\Phi}(\nu \dmid \mu_{t+1}) - \vvD_{\Phi}(\nu \dmid \mu_t) \leq \langle c_t, \nu - \mu_t\rangle.
\]

\section{The multiscale noisy metric entropy}

To obtain $\poly(\log n)$-competitive algorithms for MTS, previous approaches \cite{BCLL19,CL19} employ a regularizer
that can be cast as a multiscale entropy for probability distributions on an underlying tree metric.
To handle general metric spaces, we will consider probability distributions on a lifted convex body
that is specified by a directed ayclic graph whose sinks are the points of $(X,d)$.
See \pref{fig:dag} for a pictoral representation when the metric space is a path.

\paragraph{The hierarchical flow DAG}

\begin{figure}[h]
      \begin{center}
         \includegraphics[width=8cm]{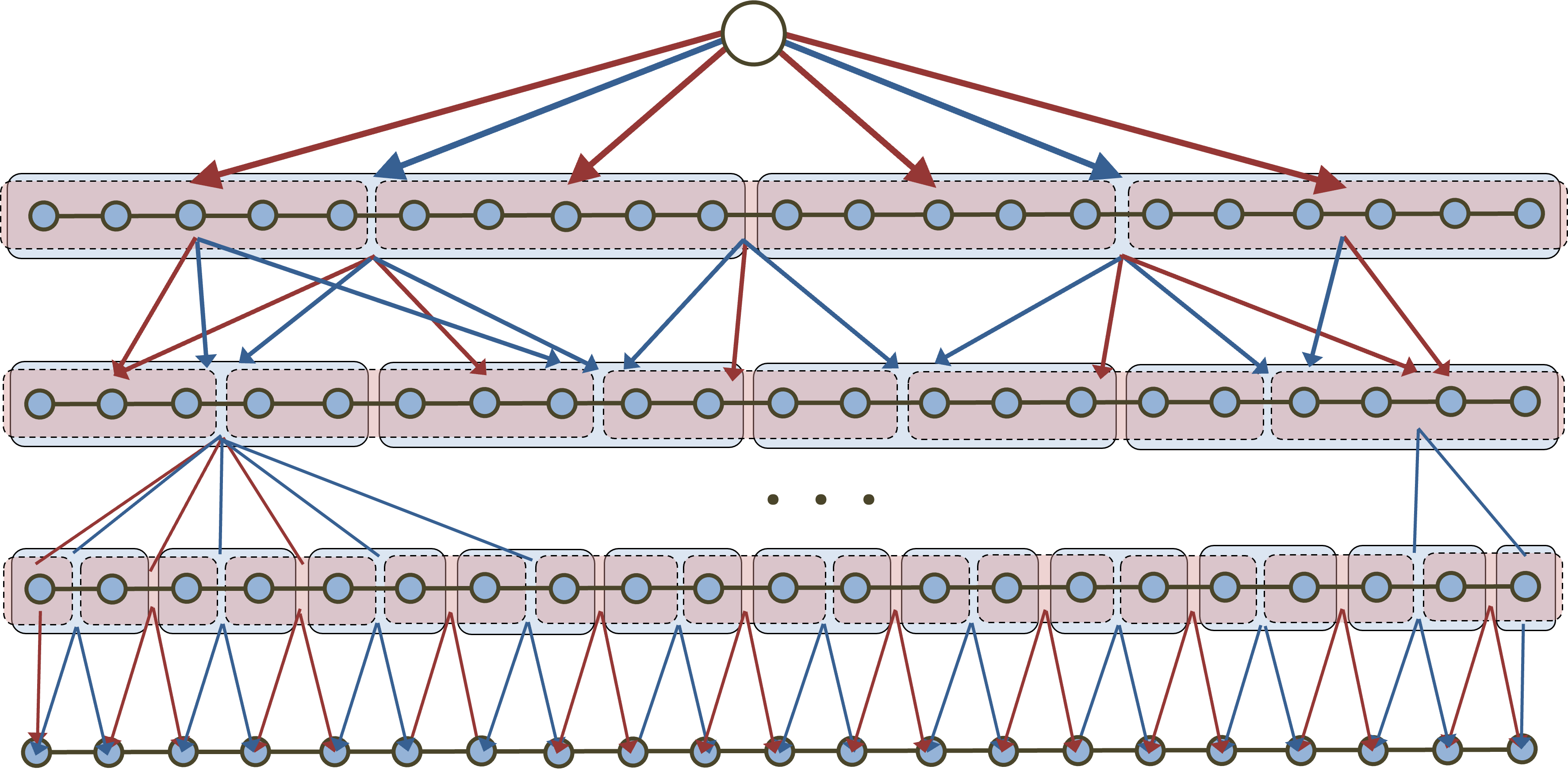}
         \caption{A hierarchical flow DAG over the path\label{fig:dag}}
   \end{center}
\end{figure}

Consider a finite set $X$ and a directed ayclic weighted graph $\dag = (V,A)$ with $X \subseteq V$
and such that
\begin{enumerate}[(i)]
   \item $\dag$ has a single source $\rt \in V$, and
   \item The set of sinks in $\dag$ is $X$.
\end{enumerate}
We say that {\em $\dag$ is a DAG over $X$.}
In what follows, we use the notation $\R_+ \seteq \{ x \in \R : x \geq 0 \}$ and $\R_{++} \seteq \{x \in \R : x > 0 \}$.
For an arc $(u,v) \in A$, we will often use the shorthand $uv$.

A vector $F \in \R_+^A$ is called a {\em flow in $\dag$} if holds that
\begin{equation}
\label{eq:flow-def}
	   \sum_{v \spc uv \in A} F_{uv} = \sum_{v \spc vu \in A} F_{vu},\quad \forall u \in V \setminus (X \cup \{\rt\}).
\end{equation}
For a flow $F$ and $u \in V \setminus X$, define $F_u \seteq \sum_{v \spc uv \in A} F_{uv}$.  For a sink $x \in X$, we define
$F_x \seteq \sum_{u \spc ux \in A} F_{ux}$ as the flow into $x$.
Say that $F$ is a {\em unit flow in $\dag$} if $F_{\rt} = 1$, and let $\cF_{\dag} \subseteq \R_+^{A}$ denote the convex
set of all unit flows in $\dag$.

A {\em (directed) path $\gamma$ in $\dag$} is a sequence $\gamma = \langle u_1 u_2, u_2 u_3, \ldots, u_{m-1} u_m\rangle$ with $u_i u_{i+1} \in A$ for each $i \in \{1,\ldots,m-1\}$.  We will occasionally also specify a path as a sequence of vertices.  We use $\bar{\gamma}$ to denote the final vertex $u_m$ of $\gamma$.
Let $\cP_{\dag}$ denote the set of all paths in $\dag$ from $\rt$ to some sink.

\paragraph{The multiscale entropy}

Let $\omega \in \R_{++}^A$ denote a vector of nonnegative arc lengths that are decreasing along paths,
i.e., such that $\omega_{uv} > \omega_{vw}$ whenever $uv,vw \in A$.
Let $\theta \in \R_{++}^A$
specify a probability distribution on the edges leaving every vertex, i.e.,
\begin{equation}
\label{eq:theta-distr}
	   \sum_{v \spc uv \in A} \theta_{uv} = 1,\qquad \forall u \in V \setminus X.
\end{equation}

Define the associated values
\begin{align}
   \eta_{uv} &\seteq 1 + \log(1/\theta_{uv}) \label{eq:eta-def} \\
   \delta_{uv} &\seteq \theta_{uv}/\eta_{uv}. \label{eq:delta-def}
\end{align}
We refer to the triple
$\hat{\dag} \seteq (\dag,\omega,\theta)$ as a {\em marked DAG}. For a given normalization parameter  $\kappa > 0$, such a marked DAG
yields a multiscale entropy functional 
$\Phi_{\hat{\dag}} : \cF_\dag \to \R_+$ defined by
\[
   \Phi_{\hat{\dag}}(F) \seteq \frac{1}{\kappa} \sum_{uv \in A} \frac{\omega_{uv}}{\eta_{uv}} \left(F_{uv} + \delta_{uv} F_u\right) \log \left(\frac{F_{uv}}{F_u} + \delta_{uv}\right).
\]
One can consult \cite{CL19} for a detailed discussion of multiscale entropies of this form on HSTs.

\paragraph{Two notions of depth}

We define two notions of depth associated to $\hat{\dag}$.  The first is the {\em combinatorial depth $\Delta_0(\dag)$}
which is the maximum number of arcs in any path from $\rt$ to some sink $X$.
For $\gamma \in \cP_{\dag}$, let us define
\begin{equation}
\label{eq:theta-prod-def}
	   \theta(\gamma) \seteq \prod_{uv \in \gamma} \theta_{uv},
\end{equation}
and let the {\em information depth} be defined as
\[
\Delta_I(\hat{\dag}) \seteq \max_{\gamma \in \cP_{\dag}} \log(1/\theta(\gamma)).
\]
Note that $\theta(\cdot)$ induces a probability distribution on $\cP_{\dag}$, and as clearly for $\gamma \in \cP_{\dag}$ it holds that $\theta(\gamma) \geq e^{-\Delta_I(\hat{\dag})}$ we have
\begin{equation}
\label{eq:Delta-I-PD-bound}
	\log |\cP_{\dag}| \leq \Delta_I(\hat{\dag}).
\end{equation}

\subsection{Mirror descent dynamics}
\label{sec:mirror-descent-dynamics}
Let us now fix a marked DAG $\hat{\dag}$ and take $\Phi \seteq \Phi_{\hat{\dag}}$.
We seek to define a continuous path $F : [0,\infty] \to \cF_\dag$ that represents the dynamics of
projected vector flow in response to a continuous path $c(t) \in \R_+^X$ of costs arriving at the points of $X$.

A natural Euclidean flow would be specified heuristically by
\[
   F(t+dt) = \proj_{\cF_\dag} \left(F(t) - c(t) \,dt\right),
\]
where for $v \in \R^A$, we define $\proj_{\cF_\dag}(v)$ as the unique point of $\cF_\dag$ with
minimal Euclidean distance to $v$.
In other words, we move a little in the direction $-c(t)$ and then project back to the feasible region $\cF_\dag$.

Instead, we will define our dynamics using the Bregman projection $\proj_{\cF_\dag}^{\Phi}$ associated to
our multiscale entropic regularizer, where
\[
   \proj_{\cF_{\dag}}^{\Phi}(v) \seteq \argmin \left\{ \vvD_{\Phi}\left(v' \dmid v\right) : v' \in \cF_\dag \right\},
\]
and
\[
   \vvD_{\Phi}\left(v' \dmid v\right) \seteq \Phi(v') - \Phi(v) - \langle \nabla \Phi(v), v'-v\rangle
\]
is the Bregman divergence associated to $\Phi$.

One can show that if $c(t)$ is continuous, then there is a path $F : [0,\infty) \to \cF_{\dag}$
for which the following dynamics are well-defined (for almost every $t \in [0,\infty)$):
\[
   F(t+dt) = \proj^{\Phi}_{\cF_\dag} \left(F(t) - c(t) \,dt\right)
\]

This path further satisfies (for almost all $t \in [0,\infty)$) the system of partial differential equations given by
\begin{equation}\label{eq:flow-dynamics}
   \partial_t \left(\frac{F_{uv}(t)}{F_u(t)}\right) = 
   \kappa \frac{\eta_{uv}}{\omega_{uv}} \left(\frac{F_{uv}(t)}{F_u(t)}+\delta_{uv}\right) \left(\beta_u(t) - \hat{c}_{uv}(t)\right), \quad uv \in A,
\end{equation}
where $\hat{c}_{uv}(t) = \1_{\{F_{uv}(t) > 0\}} c_{v}(t) $ if $v \in X$, and otherwise
\begin{equation}
\label{eq:hatc-def}
   \hat{c}_{uv}(t) = \1_{\{F_{uv}(t) > 0 \}} \sum_{w \spc vw \in A} \frac{F_{vw}(t)}{F_v(t)} \hat{c}_{vw}(t),
\end{equation}
and $\beta_u(t)$ is the unique value that guarantees
\[
   \partial_t \sum_{v \spc uv \in A} \frac{F_{uv}(t)}{F_u(t)} = 0,
\]
i.e.,
\[
   \beta_u(t)  =
   \frac{\sum_{v \spc uv \in A} \frac{\eta_{uv}}{\omega_{uv}} \left(\frac{F_{uv}(t)}{F_u(t)}+\delta_{uv}\right) \hat{c}_{uv}(t)}{\sum_{v \spc uv \in A} \frac{\eta_{uv}}{\omega_{uv}} \left(\frac{F_{uv}(t)}{F_u(t)}+\delta_{uv}\right)}.
\]

Here we express the algorithm in continuous time for conceptual simplicity; its evolution is completely specified
by the regularizer $\Phi_{\hat{\dag}}$ and the costs $c(t)$.
But the existence of a solution to \eqref{eq:flow-dynamics} is derived from the limit of discrete-time algorithms
in \pref{sec:discrete-time}.

\subsection{Metric compatibility}
\label{sec:metric-compat}

To analyze the algorithm specified by \eqref{eq:flow-dynamics} on a metric space $(X,d)$, we need additionally that $\hat{\dag}=(\dag,\omega,\theta)$
is compatible with the geometry of $(X,d)$.
Suppose that $\hat{\dag}$ is a marked DAG over $X$.
Say that {\em $\hat{\dag}$ is $\tau$-geometric} if it holds that
for every pair of consecutive arcs $uv,vw \in A$, we have $\omega_{uv} \geq \tau \omega_{vw}$.

Let us define a metric on $\cP_{\dag}$ as follows:  Suppose $\gamma_1,\gamma_2 \in \cP_{\dag}$ and let $u \in V$
be the first vertex at which they diverge, i.e., at which $uv_1 \in \gamma_1, uv_2 \in \gamma_2$ and $v_1 \neq v_2$.
Define the distance
\[
   \dist_{\hat{\dag}}(\gamma_1,\gamma_2) \seteq \max(\omega_{uv_1}, \omega_{uv_2}).
\]
One can check that this gives a metric on $\cP_{\dag}$ since the arc lengths are decreasing
along source-sink paths.  In fact, this defines an ultrametric on $\cP_{\dag}$.

Say that $\hat{\dag}$ is {\em $\e$-expanding (with respect to $(X,d)$)}
if for every pair $\gamma_1,\gamma_2 \in \cP_{\dag}$, it holds that
\[
   \dist_{\hat{\dag}}(\gamma_1,\gamma_2) \geq \e d(\bar{\gamma}_1,\bar{\gamma}_2),
\]
where we recall that $\bar{\gamma}_1,\bar{\gamma}_2 \in X$ are the endpoints of $\gamma_1$ and $\gamma_2$, respectively.

We may extend $\dist_{\hat{\dag}}$ to a distance on $\cF_{\dag}$ by defining $\vvW^1_{\hat{\dag}}(F,F')$ as the $L^1$-transportation
cost between $F,F' \in \cF_{\dag}$ with the underlying metric $\dist_{\hat{\dag}}$, noting that $F$ and $F'$ can be viewed
as probability distributions on $\cP_{\dag}$.

Say that {\em $\hat{\dag}$ is $L$-Lipschitz (with respect to $(X,d)$)} if for every path $x_1,x_2,\ldots,x_m \in X$,
there is a sequence of flows $F^{(1)},F^{(2)},\ldots,F^{(m)} \in \cF_{\dag}$ such that:
\begin{enumerate}
   \item $F^{(i)}$ is a unit flow to $x_i$ for every $i=1,2,\ldots,m$.
   \item It holds that
      \[
      	         \sum_{i=1}^{m-1} \vvW_{\hat{\dag}}^1(F^{(i)}, F^{(i+1)}) \leq L \sum_{i=1}^{m-1} d(x_i,x_{i+1}).
      	         \]
\end{enumerate}

Our main result follows from the next two theorems, which are proved in \pref{sec:discrete-time}
and \pref{sec:dag}, respectively.

\begin{theorem}
\label{thm:main-1}
   Suppose $(X,d)$ is a metric space and $\hat{\dag}$ is a $\tau$-geometric marked DAG over $X$,
   for some $\tau \geq 4$.
   If $\hat{\dag}$ is $\e$-expanding and $L$-Lipschitz with respect to $(X,d)$,
   then  for $\kappa = 6L$, the MTS algorithm specified by \eqref{eq:flow-dynamics}
   is $1$-competitive for service costs, and
   $O\!\left(\frac{L}{\e} \left(\Delta_0(\dag) + \Delta_I(\hat{\dag})\right)\right)$-competitive for movement costs.
\end{theorem}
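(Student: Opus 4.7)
\emph{Setup.} Fix a cost sequence $\bm c$ and an offline trajectory witnessing $\cost^*(\bm c)$. Using the $L$-Lipschitz hypothesis I lift this (discretely and then interpolate to continuous time) to a reference curve $\nu^*(t)\in\cF_{\dag}$ whose sink distribution concentrates on the offline state at time $t$ and whose total $\vvW^1_{\hat{\dag}}$-length is at most $L\cdot \mcost^*(\bm c)$. Since the flow dynamics \eqref{eq:flow-dynamics} arise as the limit of discrete mirror-descent updates (\pref{sec:discrete-time}), it is enough to work in continuous time. The central object is the potential $\Psi(t)\seteq \vvD_\Phi(\nu^*(t)\dmid F(t))$.

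\emph{Part I: service cost is $1$-competitive.} Mirror descent with a moving reference yields
\[
\partial_t \Psi(t) \leq \langle \tilde c(t),\, \nu^*(t)-F(t)\rangle + \bigiprod{\nabla\Phi(\nu^*(t))-\nabla\Phi(F(t)),\,\partial_t \nu^*(t)},
\]
where $\tilde c_{uv}\seteq c_v\1\{v\in X\}$ so that $\langle \tilde c,F\rangle=\sum_x c_x F_x$ is the fractional service cost. Because $\Phi$ carries a $1/\kappa$ prefactor and arc-weights $\omega$, its gradient is $O(1/\kappa)$-Lipschitz in $\vvW^1_{\hat{\dag}}$, so the drift term integrates to $O(L/\kappa)\cdot \mcost^*(\bm c)$. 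Using $\Psi\geq 0$:
\[
\int_0^T \langle \tilde c,F\rangle\,dt \leq \int_0^T \langle \tilde c,\nu^*\rangle\,dt + O(L/\kappa)\cdot\mcost^*(\bm c) + \Psi(0).
\]
With $\kappa=6L$ the drift is absorbed into $\mcost^*(\bm c)\leq \cost^*(\bm c)$, yielding the $1$-competitive bound for service costs (up to the additive $\Psi(0)$, which is $O(1)$ because the multiscale entropy is bounded on $\cF_{\dag}$).

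\emph{Part II: movement cost and main obstacle.} The $\e$-expanding hypothesis gives $\vvW^1_X(\mu(t),\mu(t+dt)) \leq \tfrac{1}{\e}\vvW^1_{\hat{\dag}}(F(t),F(t+dt))$ where $\mu_x(t)\seteq F_x(t)$ is the sink marginal. Bounding the lifted transportation cost by its flow representation $\sum_{uv}\omega_{uv}F_u\bigabs{\partial_t(F_{uv}/F_u)}\,dt$ and substituting \eqref{eq:flow-dynamics}:
\[
\vvW^1_X(\mu(t),\mu(t+dt)) \leq \frac{\kappa}{\e}\sum_{uv\in A}\eta_{uv}\bigparen{F_{uv}+\delta_{uv}F_u}\bigabs{\beta_u-\hat c_{uv}}\,dt.
\]
The central technical lemma --- and the main obstacle --- is to bound the time integral of this sum by $O(\Delta_0(\dag)+\Delta_I(\hat{\dag}))\cdot \cost^*(\bm c)$. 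At each internal node $u$ the inner quantity $\sum_v \eta_{uv}(F_{uv}+\delta_{uv}F_u)|\beta_u-\hat c_{uv}|$ is a weighted absolute deviation of the conditional costs $\hat c_{uv}$ around their mixture $\beta_u$, reminiscent of the variance-like progress of mirror descent. I would bound its time integral node-by-node by relating the redistribution rate at $u$ to a telescoping expression involving the Bregman divergence and the cost flowing through $u$, and then sum along source-to-sink paths: the combinatorial depth $\Delta_0$ enters from the number of levels on a path, while $\Delta_I$ enters through the factors $\eta_{uv}\approx 1+\log(1/\theta_{uv})$ accumulating along paths (with \eqref{eq:Delta-I-PD-bound} trading raw fan-out for information-theoretic branching). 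The hypothesis $\tau\geq 4$ guarantees that geometric sums of arc-weights along paths converge with bounded constants, which is where the choice of geometric parameter manifests itself concretely.
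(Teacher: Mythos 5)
Your Part I matches the paper's service-cost argument in both structure and constants: the paper lifts the offline trajectory to flows $R^*_t\in\cF_{\dag}$ via $L$-Lipschitzness, uses the Bregman divergence to this comparator, bounds its decrease under a cost step (\pref{lem:sc}) and its increase under comparator movement by $\tfrac{1}{\kappa}(2+\tfrac4\tau)\|R^*_t-R^*_{t-1}\|_{\ell_1(\omega)}$ (\pref{lem:lipschitz}, i.e.\ CL19 Lemma~2.2 — this is the rigorous form of your ``$\nabla\Phi$ is $O(1/\kappa)$-Lipschitz'' claim), and sets $\kappa=6L$. That part is essentially correct, though stated as a sketch.

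The genuine gap is exactly the step you flag as ``the main obstacle'': bounding the movement integral $\sum_{uv}\eta_{uv}(F_{uv}+\delta_{uv}F_u)\,|\beta_u-\hat c_{uv}|$ by $O(\Delta_0+\Delta_I)$ times cost. Your sketch (``telescoping expression involving the Bregman divergence, sum along paths'') does not identify the ingredients that make this work, and the difficulty is concentrated in the $\delta_{uv}F_u$ (noise) part: the term $\sum_v \theta_{uv}F_u(\hat c_v-\alpha_{uv})$ cannot be charged to the service cost by a path decomposition, because $\theta$ may place mass on arcs carrying almost no flow. The paper handles it with a second, auxiliary potential $\Psi(r)=-\sum_u \flow(r)_u\,\vvD^{(u)}(\theta^{(u)}\dmid r^{(u)})$, i.e.\ the flow-weighted divergence of the current conditionals from the prior $\theta$, and the key inequality \pref{lem:crucial} (CL19 Lemma~2.11) showing this term telescopes up to $\tfrac{2}{\kappa}(Q_u-P_u)_+\max_v\omega_{uv}$ plus $\sum_v(\hat c_v-\alpha_{uv})Q_{uv}$. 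The remaining flow-weighted part is where the depths enter, via the explicit path decomposition of \pref{lem:mv2} giving $\sum_{uv}\eta_{uv}Q_{uv}\hat c_v\le(\Delta_0+\Delta_I)\langle c,Q\rangle_X$. Two further mechanisms you gloss over are also load-bearing: (i) the change of an arc flow is not $F_u\,\partial_t(F_{uv}/F_u)$ — the parent-flow change $p_{uv}(Q_u-P_u)$ must be absorbed recursively down the DAG, and this is precisely where $\tau$-geometricity with $\tau\ge 4$ yields the $\tfrac{\tau}{\tau-1}$ and $\tfrac{\tau}{\tau-3}$ factors (together with the reduction to positive movement, \pref{lem:height}, and the KKT bound $\alpha_{uv}\le\hat c_v$ of \pref{lem:alphas}); and (ii) the resulting bound controls movement by the algorithm's own service cost $\langle c,Q\rangle_X$ (or $\langle c,P\rangle_X$ after splitting costs so that $\|c_t\|_\infty\le\eps_{\dag}$), which is then converted to a bound against $\cost^*$ using Part I — not a direct bound against $\cost^*$ as in your plan. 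Without the potential $\Psi$ and \pref{lem:mv2}, the movement analysis does not go through, so the proposal as it stands does not establish the theorem.
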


\begin{theorem}
\label{thm:main-2}
   For every $n$-point metric space $(X,d)$, there is a $12$-geometric marked DAG $\hat{\dag}$ over $X$ that is $1$-expanding
   and $O(\log n)$-Lipschitz, and moreover satisfies
   \[
      \Delta_0(\dag)+\Delta_I(\hat{\dag}) \leq O(\log n).
   \]
\end{theorem}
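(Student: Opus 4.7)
The plan is to build $\hat{\dag}$ as a hierarchical net-based DAG at geometrically decreasing scales, and to use the Fakcharoenphol--Rao--Talwar (FRT) random partition scheme twice: once (in expectation) to define the transition weights $\theta$, and again to produce the unit flows witnessing the $L$-Lipschitz property.

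First I would fix scales $r_k = R \cdot 12^{-k}$ for $k = 0, 1, \ldots, K$, with $R = O(\diam(X))$. Choose a greedy $r_k$-separated set $N_k \subseteq X$ at each level, take the DAG vertices to be $\{\rt\} \cup \bigcup_{k\geq 1}(N_k \times \{k\})$ (the sinks $X$ identified with the last level), and place an arc from $(z,k)$ to $(y,k+1)$ whenever $d(z,y) \leq c\cdot r_k$, setting $\omega_{(z,k)(y,k+1)} \seteq c\cdot r_k$. The scale ratio $12$ immediately gives the $12$-geometric property. For $1$-expanding: if two paths diverge at $(z,k)$, their continuation into the DAG covers a distance at most $c\cdot r_k \cdot \sum_{j\geq 0} 12^{-j} = O(r_k)$, so by the triangle inequality the sinks are within $O(r_k)$ of each other; choosing $c$ large enough makes the divergence weight $\omega = c\cdot r_k$ at least $d(\bar\gamma_1, \bar\gamma_2)$. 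A further $O(\log n)$ bound on $\Delta_0$ is obtained by coalescing inactive scales in the standard FRT manner, so that only scales at which some cluster boundary actually changes count.

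The heart of the argument is the Lipschitz property, for which I would define $\theta$ via the FRT random partition: for a random permutation $\pi$ of $X$ and an independent $\beta \in [1,2]$ with the FRT density, take $\theta_{(z,k)(y,k+1)}$ to be the probability that $z$ is the first point of $N_k$ (in $\pi$-order) within distance $\beta r_k$ of $y$. The unit flow $F^{(x)} \in \cF_\dag$ witnessing Lipschitzness is then the induced distribution over root-to-sink paths terminating at $x$ (the ``expected FRT path'' to $x$). For consecutive points $x_i, x_{i+1}$ in a metric path, couple the FRT randomness; under this coupling two FRT paths agree up to some divergence scale $r_k$, so
\[
\vvW^1_{\hat{\dag}}(F^{(x_i)}, F^{(x_{i+1})}) \leq \sum_{k \leq K} r_k \cdot \Pr\!\left[\text{$x_i, x_{i+1}$ are separated at scale } r_k\right] \leq O(\log n) \cdot d(x_i, x_{i+1}),
\]
where the final bound is the usual telescoping ball-cardinality estimate that drives the $O(\log n)$ expected stretch of FRT. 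Summing over consecutive pairs $i$ yields the $O(\log n)$-Lipschitz claim.

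The main obstacle I anticipate is simultaneously controlling $\Delta_I(\hat{\dag})$: because exceptional outcomes of FRT can have probability as small as $1/n!$, the raw definition above can produce paths $\gamma$ with $\theta(\gamma)$ much smaller than $1/\poly(n)$, violating $\Delta_I = O(\log n)$. To resolve this, I would prune the DAG so that every vertex has only a small (constant, or at worst $O(\log n) / \Delta_0$) number of permissible parents, drawn from a carefully chosen local net-neighborhood, and replace $\theta$ on the pruned DAG by a nearly-uniform distribution on this small parent set. With $\theta_{uv} = \Omega(1)$ on every arc and depth $\Delta_0 = O(\log n)$, one then gets $\theta(\gamma) \geq 2^{-O(\log n)}$, so $\Delta_I = O(\log n)$. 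The delicate part is to carry out this pruning while preserving the FRT coupling argument above: the flows $F^{(x_i)}, F^{(x_{i+1})}$ must remain realizable within the pruned DAG, and the transportation cost estimate must not degrade. I expect this to require a sparse-cover or net-tree variant of FRT in which each point's ancestor at each scale is governed by only polynomially many local choices, so that the ambient distribution effectively lives on the pruned support.
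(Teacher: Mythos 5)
Your construction gets the easy parts right (geometric scales, expansion, and an FRT/CKR-style coupling for the Lipschitz flows, which is close in spirit to the paper's use of per-scale random partitions and a selector map), but the crux of the theorem --- $\Delta_I(\hat{\dag}) \leq O(\log n)$ --- is exactly the piece you leave open, and your proposed repair does not work as stated. First, your FRT-based $\theta$ is normalized the wrong way: a marked DAG requires $\sum_{v \spc uv \in A}\theta_{uv}=1$ over the arcs \emph{leaving} each vertex \pref{eq:theta-distr}, whereas ``the probability that $z$ is the first net point of $N_k$ within $\beta r_k$ of $y$'' sums to $1$ over the arcs \emph{entering} $(y,k+1)$; your pruning fix (``a small number of permissible parents, nearly-uniform $\theta$ on the parent set'') inherits the same inversion. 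Even after correcting it to out-arcs, ``$\theta_{uv}=\Omega(1)$ on every arc'' is unattainable in general: a scale-$k$ net point may need up to $\Omega(n)$ children to retain the covering needed for your flows (already for the uniform metric the root has $n$ children), and uniform-over-children $\theta$ only gives $\log(1/\theta(\gamma)) \leq \sum_{uv\in\gamma}\log(\mathrm{outdeg}(u))$, which can be much larger than $\log n$ when moderate branching occurs at many scales. The paper's mechanism here is genuinely different and is what you are missing: the nets are chosen greedily to maximize $|B_X(\cdot,\eta/3)|$ (\pref{lem:net}), arcs must satisfy the ball-domination condition \pref{eq:strongly-viable}, and $\theta$ is set proportional to ball cardinalities \pref{eq:theta-def}; then along any root--leaf path the product telescopes against balls centered at the leaf, giving $\theta(\gamma)\geq n^{-3}$, i.e.\ $\Delta_I \leq 3\log n$ (\pref{lem:root-leaf-sum-bound}), with no degree, doubling, or aspect-ratio assumptions.

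The combinatorial depth is a second concrete gap. ``Coalescing inactive scales'' only contracts levels where nothing branches; the resulting depth can still be $\Theta(n)$ (or grow with the aspect ratio), e.g.\ in a caterpillar-like hierarchy where one new point splits off per scale. The paper instead compresses along a heavy-light decomposition driven by the path counts $\sigma(u)$ (\pref{sec:compress}), which yields $\Delta_0 \leq O(\log|\cP_{\dag}|)$ (\pref{lem:compression-combinatorial-depth}); note that this bound is itself parasitic on the information-depth bound, since $|\cP_{\dag}|\leq n^3$ comes from $\Delta_I\leq 3\log n$ (\pref{cor:root-leaf-path-count}). One must also verify that compression preserves the marked-DAG normalization, the $12$-geometric property, the metric $\dist_{\hat{\dag}}$ exactly (hence expansion and Lipschitzness), and $\Delta_I$ --- all checked in the paper, none addressed in your sketch. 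So while your Lipschitz estimate would go through on the unpruned net DAG, the two quantitative conclusions $\Delta_I, \Delta_0 = O(\log n)$ rest precisely on the steps (``sparse-cover or net-tree variant of FRT'', ``coalescing'') that you have not supplied, and the paper's ball-cardinality weighting plus heavy-light compression is the missing idea rather than a technicality.
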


\section{Construction of a compatible DAG over $(X,d)$}
\label{sec:dag}

In \pref{sec:dag-construction}, we present the main construction of
a marked DAG $\hat{\dag}$ whose vertices are net points at every scale.
Achieving the crucial property $\Delta_{I}(\hat{\dag}) \leq O(\log n)$ requires
choosing the net points and the arcs of $\dag$ carefully.
In \pref{sec:distortion}, we argue that $\hat{\dag}$ is $\e$-expanding and $L$-Lipschitz
for $\e = 1$ and $L \leq O(\log n)$.
It may not be that $\Delta_0(\dag) \leq O(\log n)$, but in \pref{sec:compress}
we give a generic way of obtaining this property while leaving the
other essential properties intact.

\subsection{Hierarchical nets}
\label{sec:dag-construction}

Fix an $n$-point metric space $(X,d)$ and assume, without loss of generality, that $\diam(X)=1$.
Define $\e \seteq \min \{ d(x,y) : x,y \in X \}$ and $K \seteq 1+\lceil \log_{\tau} (1/\e)\rceil$.

\paragraph{Construction of nets}

Consider a parameter $\eta > 0$.  We construct an $\eta$-net $N \subseteq X$ inductively as follows.
Define $N_0 \seteq \emptyset$ and for $j \geq 1$, inductively define the set
\[
   S_j \seteq X \setminus B_X(N_{j-1}, \eta).
\]
If $S_j = \emptyset$, then we take $N \seteq N_{j-1}$.  Otherwise, let $x_j \in S_j$ be
a point that maximizes $|B_X(x, \eta/3)|$ among $x \in S_j$ and define $N_j \seteq N_{j-1} \cup \{x_j\}$.

\begin{lemma}\label{lem:net}
   The set $N\subseteq X$ is an $\eta$-net with the property that for any set $W \subseteq X$, if
   \[
      x^* \in \argmax \left\{ |B_X(y, \eta/3)| : y \in N \cap B_X(W, 1.5 \eta) \right\},
   \]
   then
   \[
      |B_X(x^*, \eta/3)| \geq \max \left\{ |B_X(w,\eta/3)| : w \in W \right\}
   \]
\end{lemma}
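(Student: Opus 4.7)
My plan is to verify the two assertions in \pref{lem:net} separately: first that $N$ is actually an $\eta$-net, and then the extremal property relating $x^*$ to points of $W$.

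The net property is immediate from the stopping rule and the inductive construction. The procedure terminates in at most $n$ steps because each $x_j$ is at distance $>\eta$ from $N_{j-1}$, making the net points $\eta$-separated. When the procedure halts, $S_j=\emptyset$, which means $X\subseteq B_X(N_{j-1},\eta) = B_X(N,\eta)$, so $N$ is an $\eta$-net. (We also get for free that the points of $N$ are pairwise at distance $>\eta$.)

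For the extremal property, fix $w \in W$ achieving $|B_X(w,\eta/3)| = \max\{|B_X(w',\eta/3)| : w' \in W\}$, and let me locate a net point $y \in N$ close to $w$ whose $\eta/3$-ball is at least as large as $w$'s. If $w \in N$, take $y = w$ and there is nothing to prove since $w \in W \subseteq B_X(W,1.5\eta)$. Otherwise, since $N$ covers $w$ within distance $\eta$ but $w$ is initially uncovered, let $j$ be the unique index for which $w \in S_j$ but $w \notin S_{j+1}$; equivalently, $j$ is the step at which $w$ first gets absorbed into $B_X(N_{\cdot},\eta)$. Because $N_j = N_{j-1} \cup \{x_j\}$ and $d(N_{j-1},w) > \eta$, we must have $d(x_j,w) \leq \eta$, so $x_j \in N \cap B_X(W,\eta) \subseteq N \cap B_X(W,1.5\eta)$. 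Moreover, $x_j$ was chosen to maximize $|B_X(\cdot,\eta/3)|$ over $S_j$, and since $w \in S_j$, we get $|B_X(x_j,\eta/3)| \geq |B_X(w,\eta/3)|$. Since $x^*$ maximizes $|B_X(\cdot,\eta/3)|$ over all of $N \cap B_X(W,1.5\eta)$, in particular $|B_X(x^*,\eta/3)| \geq |B_X(x_j,\eta/3)| \geq |B_X(w,\eta/3)|$, which is what was claimed.

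There is no real obstacle here: the greedy construction was engineered precisely so that every uncovered point is in competition with the net point that eventually covers it. The only subtlety is bookkeeping—identifying the correct iteration $j$ and observing that the new net point added at that iteration (rather than some earlier one) is what certifies $d(x_j,w) \leq \eta$. The constant $1.5\eta$ in the statement is also slack; the argument actually gives $y \in N \cap B_X(W,\eta)$.
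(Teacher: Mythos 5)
Your proof is correct. It rests on the same greedy-exchange idea as the paper's argument—exhibit a net point that was selected while the relevant point of $W$ was still uncovered, so that the greedy rule forces its $\eta/3$-ball to be at least as large—but you instantiate the witness differently. The paper takes $j$ minimal such that $x_j \in B_X(W,1.5\eta)$ and then needs a triangle-inequality step (a point of $B_X(x_i,\eta)\cap B_X(W,\eta/3)$ for $i<j$ would put $x_i$ within $\eta+\eta/3<1.5\eta$ of $W$) to conclude that all of $W$ is still contained in $S_j$; this is exactly where the constant $1.5$ is used, and minimality of $j$ also gives the equality $|B_X(x_j,\eta/3)|=|B_X(x^*,\eta/3)|$ stated there. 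You instead fix the maximizing $w\in W$ and take $j$ to be the step at which $w$ is first covered, which immediately gives $w\in S_j$ and $d(x_j,w)\le\eta$ with no triangle inequality needed; this localizes the argument to a single point, shows the radius $1.5\eta$ in the hypothesis is slack (membership in $N\cap B_X(W,\eta)$ already suffices), and you additionally verify the net/termination property, which the paper leaves implicit. The only cost is that you only get the inequality $|B_X(x^*,\eta/3)|\ge|B_X(x_j,\eta/3)|$ rather than the (unneeded) equality, which is all \pref{lem:net} requires.
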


\begin{proof}
   Suppose $x_j \in N$ is the element with $j$ minimal such that $B_X(x_j, 1.5 \eta) \cap W \neq \emptyset$.
   Then
   \[
      \left(B_X(x_1, \eta) \cup \cdots \cup B_X(Q_{j-1}, \eta)\right) \cap B_X(W, \eta/3) = \emptyset,
   \]
   and hence by the greedy selection procedure,
   \begin{align*}
      |B_X(x_j,\eta/3)| &= |B_X(x^*, \eta/3)| \\
      |B_X(x_j, \eta/3)| &\geq \max \left\{ |B_X(w,\eta/3)| : w \in W \right\},
   \end{align*}
   completing the proof.
\end{proof}

Denote $\tau \seteq 12$.
For each $k \in \{0,1,\ldots,K\}$, let $U_k$ denote a $\tau^{-k}$-net that satisfies \pref{lem:net} with $\eta = \tau^{-k}$.
We now construct a DAG $\dag = (V,A)$ with $V \seteq \left\{ (u,k) : u \in U_k, k \in \{0,1,\ldots,K\}\right\}$.
For $k \in \{0,1,\ldots,K-1\}$, let $A_k$ denote the collection of pairs
$(u,u')$ for every $u \in U_k$ and $u' \in U_{k+1}$ satisfying:
\begin{align}
   d(u,u') &\leq 4 \tau^{-k} \label{eq:close} \\
   |B_X(u, \tau^{-k}/3)| &\geq \max \left\{ |B_X(w,\tau^{-k}/3)| : w \in B_X(u', 6 \tau^{-(k+1)})\right\}\label{eq:strongly-viable}.
\end{align}

   We define $A \seteq \bigcup_{k=0}^{K-1} \left\{ \left((u,k),(u',k+1)\right) : (u,u') \in A_k \right\}$, and
   \[
      \omega_{(u,k) (u',k+1)} \seteq 10\tau^{-k}.
   \]
Since $U_K = X$, we can identify the sinks in $\dag$ with the points of $X$.
We take $\rt \seteq (u,0)$, where $U_0 = \{u\}$.

\begin{observation}
\label{obs:coverage}
Suppose that $(u,k) \in U_k$ and $(x,K) \in V$ is reachable in $\dag$ from $(u,k)$.
   Then 
   \[
      d(u,x) \leq 4 \tau^{-k} + 4 \tau^{-(k+1)} + \cdots + 4 \tau^{-(K-1)} < 5 \tau^{-k}.
   \]
\end{observation}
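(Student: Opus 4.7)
The plan is to exhibit a path from $(u,k)$ to $(x,K)$ in $\dag$ and then apply the triangle inequality vertex-by-vertex along that path, using the arc condition \eqref{eq:close} at each scale.

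Concretely, since $(x,K)$ is reachable from $(u,k)$, there is a directed path
\[
(u,k) = (u_k, k), (u_{k+1}, k+1), \ldots, (u_K, K) = (x, K)
\]
in $\dag$ with $u_j \in U_j$ for each $j \in \{k, k+1, \ldots, K\}$ and $(u_j, u_{j+1}) \in A_j$ for each $j \in \{k, \ldots, K-1\}$. By definition of $A_j$, the arc condition \eqref{eq:close} gives $d(u_j, u_{j+1}) \leq 4\tau^{-j}$ for each such $j$. I would then apply the triangle inequality in $(X,d)$:
\[
d(u,x) \leq \sum_{j=k}^{K-1} d(u_j, u_{j+1}) \leq \sum_{j=k}^{K-1} 4\tau^{-j},
\]
which is exactly the first bound in the statement.

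For the second (strict) inequality, sum the geometric series: with $\tau = 12$,
\[
\sum_{j=k}^{K-1} 4\tau^{-j} \leq 4\tau^{-k} \cdot \frac{1}{1-\tau^{-1}} = 4\tau^{-k} \cdot \frac{\tau}{\tau-1} = 4\tau^{-k} \cdot \frac{12}{11} < 5 \tau^{-k}.
\]

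There is no real obstacle here: the only substantive ingredient is the arc-length bound \eqref{eq:close} built into the definition of $A_k$, and the rest is a telescoping triangle inequality followed by summing a geometric series whose ratio $1/\tau = 1/12$ is safely less than $1$. The mild point worth stating explicitly is that the reachability hypothesis guarantees the path uses one arc per scale from $k$ up to $K-1$, so the sum has exactly $K-k$ terms indexed from $j=k$ to $j=K-1$, matching the expression in the statement.
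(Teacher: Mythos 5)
Your proof is correct and is exactly the argument the paper intends: since every arc goes from level $j$ to level $j+1$ and satisfies \eqref{eq:close}, the telescoping triangle inequality along the path plus the geometric series with ratio $1/\tau = 1/12$ gives $4\tau^{-k}\cdot\frac{12}{11} < 5\tau^{-k}$ (the paper states this as an Observation with no further proof). Nothing to add.
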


For a set $S \subseteq X$ and $k \in \{0,1,\ldots,K\}$, define
\begin{equation}\label{eq:selector}
   \f_k(S) \seteq \argmax \left\{ |B_X(y,\tau^{-k}/3)| : y \in B_X(S, 2\tau^{-k}) \cap U_k \right\}.
\end{equation}
We will require the following fact later.

\begin{lemma}\label{lem:selector}
   Consider a set $S \subseteq X$ with $\diam_X(S) \leq 2 \tau^{-k}$.
   If $u' \in S \cap U_{k+1}$, then $(\f_k(S),u') \in A_k$.
\end{lemma}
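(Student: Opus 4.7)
The plan is to verify directly that $u := \f_k(S)$ satisfies the two conditions \eqref{eq:close} and \eqref{eq:strongly-viable} defining membership in $A_k$.

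For \eqref{eq:close}: By definition of $\f_k$, we have $u \in B_X(S, 2\tau^{-k})$, so there exists $s \in S$ with $d(u,s) \leq 2\tau^{-k}$. Since also $u' \in S$ and $\diam_X(S) \leq 2 \tau^{-k}$, the triangle inequality gives
\[
   d(u,u') \leq d(u,s) + d(s,u') \leq 2\tau^{-k} + 2\tau^{-k} = 4\tau^{-k},
\]
which is \eqref{eq:close}.

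For \eqref{eq:strongly-viable}: Using $\tau = 12$, note that $6\tau^{-(k+1)} = \tau^{-k}/2$, so we need to show $|B_X(u, \tau^{-k}/3)| \geq \max\{|B_X(w,\tau^{-k}/3)| : w \in B_X(u',\tau^{-k}/2)\}$. Apply \pref{lem:net} to the net $N = U_k$ with $\eta := \tau^{-k}$ and $W := B_X(u', \tau^{-k}/2)$. The lemma yields that any
\[
   x^* \in \argmax\bigl\{|B_X(y,\tau^{-k}/3)| : y \in U_k \cap B_X(u', 2\tau^{-k})\bigr\}
\]
(noting $B_X(W, 1.5\eta) = B_X(u', 2\tau^{-k})$) satisfies $|B_X(x^*,\tau^{-k}/3)| \geq \max\{|B_X(w,\tau^{-k}/3)| : w \in W\}$. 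The key observation to close the argument is that $u' \in S$ implies $B_X(u', 2\tau^{-k}) \subseteq B_X(S, 2\tau^{-k})$, so $\f_k(S)$ is the argmax over a superset of the set that \pref{lem:net} considers. Consequently $|B_X(\f_k(S),\tau^{-k}/3)| \geq |B_X(x^*,\tau^{-k}/3)|$, which establishes \eqref{eq:strongly-viable}.

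The main work is just parameter bookkeeping—the choice $\tau = 12$ is made precisely so that the expansion radius $1.5\eta$ around $W$ in \pref{lem:net} fits inside the enlargement radius $2\tau^{-k}$ used in the definition \eqref{eq:selector} of $\f_k$. There is no genuine obstacle; \pref{lem:net} and the definition of $\f_k$ were designed exactly to make this step of the DAG construction go through.
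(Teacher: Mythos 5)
Your proof is correct and follows essentially the same route as the paper's: the triangle inequality through $S$ gives \eqref{eq:close}, and \eqref{eq:strongly-viable} comes from applying \pref{lem:net} with $W = B_X(u',6\tau^{-(k+1)})$ together with the containment $B_X(W,1.5\tau^{-k}) \subseteq B_X(S,2\tau^{-k})$, with your "argmax over a superset" step just making explicit what the paper leaves implicit. The only tiny imprecision is writing $B_X(W,1.5\tau^{-k}) = B_X(u',2\tau^{-k})$, where in a general metric space only the inclusion $\subseteq$ holds — but that inclusion is all your argument needs.
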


\begin{proof}
   Denote $u \seteq \f_k(S)$.
   Since $u' \in S$ and $u \in B_X(S, 2 \tau^{-k})$, it holds that $d(u,u') \leq 4 \tau^{-k}$,
   and therefore \eqref{eq:close} is satisfied.
   Now denote $W \seteq B_X(u', 6 \tau^{-(k+1)})$.
   Then $B_X(W, 1.5 \tau^{-k}) \subseteq B_X(S, 2 \tau^{-k})$, 
   hence \pref{lem:net} implies that
   \[
      |B_X(u, \tau^{-k}/3)| \geq \max \left\{ |B_X(w,\tau^{-k}/3)| : w \in W \right\},
   \]
   which shows that \eqref{eq:strongly-viable} is satisfied as well.
\end{proof}

For $k \in \{0,1,\ldots,K-1\}$ and $(u,u') \in A_k$, we define
\begin{equation}\label{eq:theta-def}
   \theta_{(u,k),(u',k+1)} \seteq \frac{|B_X(u', \tau^{-(k+1)}/3)|}{\sum_{w : (u,w) \in A_k} |B_X(w, \tau^{-(k+1)}/3)|}.
\end{equation}

\begin{claim}
		\label{claim:delta-bound}
	It holds that
		\[
         \sum_{w: (u,w) \in A_k} |B_X(w, \tau^{-(k+1)}/3)| \leq |B_X(u, 6 \tau^{-k})|\,.
	\]
	\end{claim}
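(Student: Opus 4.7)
The plan is to combine two facts: first, that each ball $B_X(w, \tau^{-(k+1)}/3)$ with $(u,w) \in A_k$ is contained in the larger ball $B_X(u, 6\tau^{-k})$ on the right-hand side; and second, that these smaller balls, indexed by $w$, are pairwise disjoint. Once both hold, the claim follows immediately by additivity of counting measure.

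For containment, I would use condition \eqref{eq:close} in the definition of $A_k$, which guarantees $d(u,w) \leq 4\tau^{-k}$ for every $w$ with $(u,w) \in A_k$. Hence any point $z \in B_X(w, \tau^{-(k+1)}/3)$ satisfies
\[
d(u,z) \leq d(u,w) + d(w,z) \leq 4\tau^{-k} + \tfrac{1}{3}\tau^{-(k+1)}.
\]
Since $\tau = 12$, the second term is $\tau^{-k}/36$, so $d(u,z) \leq (4 + 1/36)\tau^{-k} < 6\tau^{-k}$, putting $z \in B_X(u, 6\tau^{-k})$.

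For disjointness, I would use the fact that $U_{k+1}$ is a $\tau^{-(k+1)}$-net, meaning any two distinct points $w_1, w_2 \in U_{k+1}$ satisfy $d(w_1, w_2) \geq \tau^{-(k+1)}$. If some point lay in $B_X(w_1, \tau^{-(k+1)}/3) \cap B_X(w_2, \tau^{-(k+1)}/3)$, the triangle inequality would force $d(w_1, w_2) \leq 2\tau^{-(k+1)}/3 < \tau^{-(k+1)}$, a contradiction. So these balls are pairwise disjoint.

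Combining: the sets $\{B_X(w, \tau^{-(k+1)}/3)\}_{w : (u,w) \in A_k}$ are disjoint subsets of $B_X(u, 6\tau^{-k})$, and summing their cardinalities yields the claim. There is no real obstacle here; the only thing to double-check is the numerical fit with $\tau = 12$, which is comfortable with room to spare.
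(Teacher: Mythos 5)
Your proof is correct and follows essentially the same route as the paper: pairwise disjointness of the radius-$\tau^{-(k+1)}/3$ balls from the net separation of $U_{k+1}$, plus containment in $B_X(u,6\tau^{-k})$ via the triangle inequality. The only cosmetic difference is that you invoke \eqref{eq:close} directly (giving $d(u,w)\leq 4\tau^{-k}$) where the paper cites \pref{obs:coverage} for a slightly looser $5\tau^{-k}$ bound; both comfortably fit inside $6\tau^{-k}$.
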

	\begin{proof}
      Since the elements of $U_{k+1}$ form a $\tau^{-(k+1)}$-net, the balls $\left\{ B_X(w,\tau^{-(k+1)}/3) : (u,w) \in A_k \right\}$ are pairwise disjoint.
      Furthermore, by \pref{obs:coverage},
      every such ball is contained in 
      \[
         B_X(u, 5 \tau^{-k} + \tau^{-(k+1)}/3) \subseteq B_X(u, 6 \tau^{-k}).\qedhere
      \]
	\end{proof}
	
\begin{lemma}
\label{lem:root-leaf-sum-bound}
It holds that $\Delta_I(\dag) \leq 3 \log n$, i.e.,
for every path $\gamma \in \cP_{\dag}$,
\[
   \sum_{uv \in \gamma} \log (1/\theta_{u,v}) \leq 3 \log n.
\]
\end{lemma}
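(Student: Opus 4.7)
The plan is to prove $\sum_{uv \in \gamma} \log(1/\theta_{uv}) \leq 3 \log n$ by first lower-bounding each $\theta_{u_k u_{k+1}}$ using \pref{claim:delta-bound}, and then reducing the resulting expression to a single-center telescope of ball sizes anchored at the leaf $x = u_K$ of the path $\gamma = (u_0, u_1, \ldots, u_K)$.

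First, combining \eqref{eq:theta-def} with \pref{claim:delta-bound} yields
\[
\theta_{u_k u_{k+1}} \geq \frac{|B_X(u_{k+1}, \tau^{-(k+1)}/3)|}{|B_X(u_k, 6\tau^{-k})|}.
\]
Writing $B_k \seteq |B_X(u_k, 6\tau^{-k})|$ and $b_k \seteq |B_X(u_k, \tau^{-k}/3)|$, taking logarithms and summing along $\gamma$ gives
\[
\sum_{k=0}^{K-1}\log(1/\theta_{u_k u_{k+1}}) \leq \log B_0 + \sum_{k=1}^{K-1}\log(B_k/b_k) - \log b_K.
\]
The boundary terms are easy: $B_0 \leq n$ trivially, and $b_K = 1$ by the choice of $\e$ and $K$ (which ensure $\tau^{-K}/3 < \e$). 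So it suffices to bound $\sum_{k=1}^{K-1}\log(B_k/b_k)$ by $2\log n$.

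The core step is to reexpress both $B_k$ and $b_k$ in terms of ball sizes centered at $x$. For $B_k$, \pref{obs:coverage} gives $d(u_k, x) \leq 5\tau^{-k}$, so $B_X(u_k, 6\tau^{-k}) \subseteq B_X(x, 11\tau^{-k})$. For $b_k$ with $k \leq K-1$, the strong viability condition \eqref{eq:strongly-viable} attached to the arc $(u_k, u_{k+1}) \in A_k$ says that $b_k \geq |B_X(w, \tau^{-k}/3)|$ for every $w \in B_X(u_{k+1}, 6\tau^{-(k+1)})$. Since $x$ is reachable in $\dag$ from $u_{k+1}$ along the suffix of $\gamma$, \pref{obs:coverage} also gives $d(u_{k+1}, x) \leq 5\tau^{-(k+1)} < 6\tau^{-(k+1)}$, so $x$ qualifies as such a $w$ and we obtain $b_k \geq |B_X(x, \tau^{-k}/3)|$. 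Substituting both estimates,
\[
\sum_{k=1}^{K-1}\log(B_k/b_k) \leq \sum_{k=1}^{K-1} \left[\log|B_X(x, 11\tau^{-k})| - \log|B_X(x, \tau^{-k}/3)|\right].
\]

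Finally, this single-center sum telescopes with a two-step window because $\tau = 12$ gives $11\tau^{-k} \leq 48\tau^{-k} = \tau^{-(k-2)}/3$, hence $|B_X(x, 11\tau^{-k})| \leq |B_X(x, \tau^{-(k-2)}/3)|$. Reindexing $j \seteq k-2$, the sum collapses to
\[
\log|B_X(x, \tau/3)| + \log|B_X(x, 1/3)| - \log|B_X(x, \tau^{-(K-2)}/3)| - \log|B_X(x, \tau^{-(K-1)}/3)|,
\]
which is at most $2\log n$ since every ball size lies in $[1, n]$. Combined with $\log B_0 \leq \log n$, this yields $\Delta_I(\dag) \leq 3\log n$. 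The main obstacle is the center-shifting step: finding a single anchor point (the leaf $x$) around which the varying-center sum can be telescoped. Iterated propagation of the strong viability condition via \pref{obs:coverage} is what makes this possible, and the geometric constant $\tau = 12$ is chosen precisely so that a 2-step window absorbs the radius blow-up from $11\tau^{-k}$ to $\tau^{-(k-2)}/3 = 48\tau^{-k}$.
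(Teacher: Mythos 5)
Your proof is correct and follows essentially the same route as the paper: bound each $\theta_{u_k u_{k+1}}$ via \eqref{eq:theta-def} and \pref{claim:delta-bound}, recenter all balls at the leaf $u_K$ using \pref{obs:coverage} and the strong-viability condition \eqref{eq:strongly-viable} (with the leaf as the witness $w$), and then telescope the resulting single-center ball ratios using $\tau=12$. The only difference is bookkeeping—you split off the $\log|B_X(u_0,6)|\leq\log n$ boundary term and telescope with a two-step window, whereas the paper telescopes the ratios $|B_X(\ell,\tau^{-(k-1)})|/|B_X(\ell,\tau^{-(k+2)})|$ directly in a three-step window—yielding the same $3\log n$ bound.
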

\begin{proof}
   Consider a path $\gamma = \langle (u_0, 0), (u_1,1),\ldots,(u_K, K) \rangle$.
   From the definition \eqref{eq:theta-def} and \pref{claim:delta-bound}, it holds that
\begin{align}
   \label{eq:sum-bound-1}
   \sum_{k=0}^{K-1} \log \left(1/\theta_{(u_k,k) (u_{k+1},k+1)}\right) \leq \sum_{k=0}^{K-1} \log \frac{|B_X(u_k, 6 \tau^{-k})|}{|B_X(u_{k+1}, \tau^{-(k+1)}/3)|}.
\end{align}

Let us denote $\ell \seteq u_K$.
By \pref{obs:coverage}, it holds that $d(\ell, u_k) \leq 6\tau^{-k}$ for $0 \leq k \leq K$.
Therefore,
\begin{equation}
\label{eq:sum-bound-2}
	B_X(u_k, 6 \tau^{-k}) \subseteq B_X(\ell, 12 \tau^{-k}).
\end{equation}
Furthermore since $(u_k,u_{k+1}) \in A_k$,
by \pref{eq:strongly-viable}, we have
\begin{equation}
\label{eq:sum-bound-3}
|B_X(u_k, \tau^{-k}/3)| \geq \max \left\{ |B_X(w, \tau^{-k}/3)| : w \in B_X(u_{k+1}, 6 \tau^{-(k+1)}) \right\} \geq |B_X(\ell, \tau^{-k}/3)|,
\end{equation}
since $d(\ell, u_{k+1}) \leq 6 \tau^{-(k+1)}$.

By combining \pref{eq:sum-bound-1}--\pref{eq:sum-bound-3}, we obtain
\[
   \sum_{k=0}^{K-1} \log \left(1/\theta_{(u_k,k) (u_{k+1},k+1)}\right) 
   \leq \sum_{k=0}^{K-1} \log \frac{|B_X(\ell, 12 \tau^{-k})|}{|B_X(\ell, \tau^{-(k+1)}/3)|}
   \leq \sum_{k=0}^{K-1} \log \frac{|B_X(\ell, \tau^{-(k-1)})|}{|B_X(\ell, \tau^{-(k+2)})|} \leq 3 \log n,
\]
where we used $\tau = 12$ in the penultimate inequality.
\end{proof}
The above result together with \pref{eq:Delta-I-PD-bound} yield the following.
\begin{corollary}
\label{cor:root-leaf-path-count}
It holds that $|\cP_{\dag}| \leq n^3$.
\end{corollary}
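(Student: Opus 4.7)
The plan is essentially one line: combine \pref{lem:root-leaf-sum-bound}, which establishes $\Delta_I(\hat{\dag}) \leq 3 \log n$ for the marked DAG constructed in \pref{sec:dag-construction}, with the inequality \pref{eq:Delta-I-PD-bound}, which asserts $\log |\cP_{\dag}| \leq \Delta_I(\hat{\dag})$. Chaining the two gives $\log |\cP_{\dag}| \leq 3 \log n$, and exponentiating yields the stated bound $|\cP_{\dag}| \leq e^{3 \log n} = n^3$.

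The only sanity check worth making is that the ambient inequality \pref{eq:Delta-I-PD-bound} genuinely applies to the DAG built in \pref{sec:dag-construction}. Recall that this inequality was derived by interpreting $\theta(\cdot)$ as a probability distribution on $\cP_{\dag}$ via the product form \eqref{eq:theta-prod-def}, which in turn requires that the arc-weights $\theta_{uv}$ out of every internal vertex sum to $1$. For the specific weights defined in \eqref{eq:theta-def}, the denominator of $\theta_{(u,k),(u',k+1)}$ is precisely the sum of the numerators over all arcs $(u,w) \in A_k$, so the per-vertex normalization is automatic. With this verified, the corollary follows immediately from the two referenced results; there is no substantive obstacle, since the corollary is merely recording the explicit polynomial estimate implied by the logarithmic bound on the information depth.
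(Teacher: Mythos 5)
Your proposal is correct and matches the paper's own (one-line) argument: the corollary is obtained exactly by chaining \pref{lem:root-leaf-sum-bound} with \pref{eq:Delta-I-PD-bound} and exponentiating, and your extra check that the weights in \eqref{eq:theta-def} are normalized per vertex is a valid (if implicit in the paper) verification that \pref{eq:Delta-I-PD-bound} applies.
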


%
\subsection{Distortion analysis}
\label{sec:distortion}

\begin{lemma}
   \label{lem:expanding}
   It holds that $\hat{\dag}$ is $1$-expanding with respect to $(X,d)$. 
\end{lemma}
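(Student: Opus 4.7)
The plan is to unwind the definitions and reduce the $1$-expanding property to a clean triangle inequality. Specifically, fix two distinct paths $\gamma_1,\gamma_2\in\cP_{\dag}$ and let $u=(z,k)\in V$ be the first vertex at which they diverge, so that the arcs leaving $u$ on the two paths are $(u,(v_1,k+1))$ and $(u,(v_2,k+1))$ for some distinct $v_1,v_2\in U_{k+1}$. By the definition of $\omega$, both arcs have weight $10\tau^{-k}$, so
\[
   \dist_{\hat{\dag}}(\gamma_1,\gamma_2) = 10\tau^{-k}.
\]
My goal is to show $d(\bar{\gamma}_1,\bar{\gamma}_2) \leq 10\tau^{-k}$.

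The key inputs are already available in the paper. First, \pref{obs:coverage} applied at level $k+1$ gives
\[
   d(v_i,\bar{\gamma}_i) < 5\tau^{-(k+1)},\qquad i=1,2,
\]
since $\bar{\gamma}_i$ is reachable from $(v_i,k+1)$ in $\dag$. Second, the condition \eqref{eq:close} used in the construction of the arcs $A_k$ gives
\[
   d(z,v_i)\leq 4\tau^{-k},\qquad i=1,2.
\]
Chaining these with the triangle inequality through the vertex $z$ yields
\[
   d(\bar{\gamma}_1,\bar{\gamma}_2) \leq d(\bar{\gamma}_1,v_1) + d(v_1,z) + d(z,v_2) + d(v_2,\bar{\gamma}_2) < 10\tau^{-(k+1)} + 8\tau^{-k}.
\]
Plugging in $\tau=12$, the right-hand side is $(8 + 10/12)\tau^{-k} < 10\tau^{-k}$, which matches $\dist_{\hat{\dag}}(\gamma_1,\gamma_2)$.

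I do not foresee any substantial obstacle: once one observes that $\dist_{\hat{\dag}}$ is controlled by the scale at which divergence happens, the $\tau=12$ choice was calibrated precisely so that the two ``descent'' terms of order $\tau^{-(k+1)}$ and the two horizontal terms of order $4\tau^{-k}$ combine to something strictly less than $10\tau^{-k}$. The only thing to be slightly careful about is the degenerate case $\gamma_1=\gamma_2$, in which both sides of the expanding inequality are zero; and the fact that the same argument applies even when $u$ is the root $\rt$, because \pref{obs:coverage} and \eqref{eq:close} apply uniformly at every level $k\in\{0,1,\ldots,K-1\}$.
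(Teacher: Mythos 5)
Your proposal is correct and follows essentially the same route as the paper: identify the divergence vertex, note $\dist_{\hat{\dag}}(\gamma_1,\gamma_2)=10\tau^{-k}$, and bound $d(\bar{\gamma}_1,\bar{\gamma}_2)$ by the triangle inequality through that vertex using \pref{obs:coverage}. The only cosmetic difference is that the paper applies \pref{obs:coverage} directly at the divergence vertex $(x,k)$ to get $d(\bar{\gamma}_i,x)\leq 5\tau^{-k}$ and hence the bound $10\tau^{-k}$, whereas you re-derive the same estimate by splitting off the first arc via \eqref{eq:close} and applying the observation one level down.
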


\begin{proof}
 Suppose that $\gamma_1,\gamma_2 \in \cP_{\dag}$ and let $u \in V$ be the first
   vertex for which $uv_1 \in \gamma_1$ and $uv_2 \in \gamma_2$ with $v_1 \neq v_2$.
   If $u = (x,k)$, 
   then $\omega_{u v_1} = \omega_{u v_2} = 10\tau^{-k}$ and so
   $\dist_{\hat{\dag}}(\gamma_1,\gamma_2) = 10\tau^{-k}$.
   Moreover, by \pref{obs:coverage} we have
   \[
      d(\bar{\gamma}_1,\bar{\gamma}_2) \leq d(\bar{\gamma}_1,x) + d(\bar{\gamma}_2,x) \leq 10 \tau^{-k},
   \] 
   completing the proof.
\end{proof}

For a partition $P$ of $X$ and $x \in X$, we let $P(x)$ denote the unique
set in $P$ containing $X$.  We will require the following well-known
random partitioning lemma.

\begin{theorem}[\cite{CKR01}]
   \label{thm:ckr}
   For any finite metric space $(X,d)$ and
   value $\Delta > 0$, there is a random partition $P$ of $X$
   such that:
   \begin{enumerate}
      \item $\diam_X(S) \leq \Delta$ for every $S \in P$.
      \item For all $x,y \in X$, it holds that
         \[
            \Pr\left[P(x) \neq P(y)\right] \leq 8 \frac{d(x,y)}{\Delta} \log \frac{|B(x,\Delta)|}{|B(x,\Delta/8)|}.
         \]
   \end{enumerate}
\end{theorem}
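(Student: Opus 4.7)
The plan is to follow the original Calinescu--Karloff--Rabani random partitioning scheme. First I would sample a radius $r$ uniformly from $[\Delta/4,\Delta/2]$, and, independently, a uniformly random permutation $\pi$ of $X$. The partition $P$ is then defined by assigning each $y \in X$ to the cluster centered at the first point $x$ in the order given by $\pi$ for which $d(x,y) \le r$. The diameter bound is immediate: every cluster is contained in a ball of radius $r \le \Delta/2$, so $\diam_X(S) \le 2r \le \Delta$ for every $S \in P$.

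For the separation inequality, I would fix $x,y \in X$ and enumerate $X = \{z_1,z_2,\ldots,z_n\}$ in order of increasing $a_i \seteq \min(d(z_i,x),\,d(z_i,y))$. The pair $(x,y)$ is separated only if some $z_i$ acts as a ``cutter,'' meaning that two events both occur: (a) $r$ lies in the interval $[a_i,\,a_i+d(x,y)]$, so the ball of radius $r$ around $z_i$ contains exactly one of $x,y$; and (b) $z_i$ appears before $z_1,\ldots,z_{i-1}$ in $\pi$, so no earlier point has already claimed either of $x,y$. Since $r$ is drawn from an interval of length $\Delta/4$, event (a) has probability at most $4\,d(x,y)/\Delta$, and by the symmetry of $\pi$ event (b) has probability exactly $1/i$, independently of (a).

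Taking a union bound, only those indices with $a_i \in [\Delta/4 - d(x,y),\,\Delta/2]$ can contribute (for smaller $a_i$ the ball $B(z_i,r)$ always contains both of $x,y$, and for larger $a_i$ it contains neither). Using the triangle inequality $d(z_i,x) \le a_i + d(x,y)$, these indices are contained in $B(x,\Delta)$, and they exclude all of $B(x,\Delta/8)$ after absorbing the case $d(x,y) \ge \Delta/8$ (where the bound is trivial) into constants. Hence
\[
   \Pr[P(x)\neq P(y)] \le \sum_{i=|B(x,\Delta/8)|+1}^{|B(x,\Delta)|} \frac{4\,d(x,y)}{\Delta}\cdot \frac{1}{i} \le 8\,\frac{d(x,y)}{\Delta}\log\frac{|B(x,\Delta)|}{|B(x,\Delta/8)|},
\]
which matches the stated bound.

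The main obstacle is the bookkeeping around which indices are ``live,'' i.e., which points $z_i$ can possibly cut $(x,y)$: one has to align the support $[\Delta/4,\Delta/2]$ of the random radius with the shells $B(x,\Delta/8) \subseteq B(x,\Delta)$ so that the two bounding radii emerge cleanly, and this is precisely what pins down the constant $8$ appearing in the logarithm ratio. Everything else reduces to the harmonic-sum estimate displayed above.
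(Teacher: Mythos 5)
The paper itself gives no proof of this statement: it is imported verbatim from \cite{CKR01} (with the ball--ratio form of the bound coming from the Fakcharoenphol--Rao--Talwar analysis \cite{FRT04}), so your proposal has to stand on its own. Your scheme and analysis are exactly the standard ones, and in the regime $d(x,y) \le \Delta/8$ your argument is correct: a cutter $z_i$ must have $\Delta/4 - d(x,y) < a_i \le \Delta/2$, hence $d(z_i,x) > \Delta/8$ and $d(z_i,x) \le a_i + d(x,y) < \Delta$, so every point of $B(x,\Delta/8)$ precedes every cutter in your ordering, and the harmonic sum gives $\frac{4\,d(x,y)}{\Delta}\log\frac{|B(x,\Delta)|}{|B(x,\Delta/8)|}$ -- even better than the stated constant. (Minor wording slip: $r \in [a_i, a_i+d(x,y)]$ does not \emph{imply} that $B(z_i,r)$ contains exactly one of $x,y$; you only need, and only have, the reverse containment of events, which is what the union bound uses.)

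The genuine gap is the clause ``absorbing the case $d(x,y) \ge \Delta/8$ (where the bound is trivial) into constants.'' That case is not trivial, and no constant absorbs it: the right-hand side can be $0$ while the left-hand side is $1$. For instance, if $X=\{x,y\}$ with $d(x,y)=2\Delta$, the diameter constraint forces $P(x)\neq P(y)$ always, yet $|B(x,\Delta)|=|B(x,\Delta/8)|=1$, so the claimed bound vanishes; one can even arrange failures with $d(x,y)=\Delta/2$ (a tight cluster of $m$ points at $x$ plus $y$ at distance $\Delta/2$ gives separation probability near $1$ against a bound of order $\log(1+1/m)$). Note also that your own containment of cutters in $B(x,\Delta)$, via $d(z_i,x)\le a_i+d(x,y)\le \Delta/2+d(x,y)$, already presupposes $d(x,y)\le \Delta/2$. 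The honest resolution is that the standard guarantee is the padded form $\Pr\left[B(x,t)\not\subseteq P(x)\right] \le \frac{8t}{\Delta}\log\frac{|B(x,\Delta)|}{|B(x,\Delta/8)|}$ for $t\le \Delta/8$, i.e.\ the separation bound carries the hypothesis $d(x,y)\le\Delta/8$ (the theorem as transcribed in the paper silently drops it); so you should state that restriction rather than claim triviality. This restriction is harmless for the way the theorem is used in \pref{lem:stretch}: at scales with $\tau^{-k}\le 8\,d(x,y)$ one simply bounds the separation probability by $1$, and those scales contribute only $O(d(x,y))$ to the expected distance.
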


For each $k \in \{0,1,\ldots,K\}$, let $P_k$ be a random partition of $X$ satisfying the conclusion of \pref{thm:ckr} with $\Delta = \tau^{-k}$.
Define a random map $\psi_k : X \to U_k$ as follows:
\[
   \psi_k(x) \seteq \f_k(B_X(P_k(x),\tau^{-k}/2)),
\]
where $\f_k$ is the map defined in \eqref{eq:selector}.

\begin{lemma}
   For every $x \in X$, it holds that $\llangle (\psi_0(x),0), (\psi_1(x),1), \ldots, (\psi_K(x),K) \rrangle$ is a path in $\dag$.
\end{lemma}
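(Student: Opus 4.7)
The plan is to verify, for each $k \in \{0,1,\ldots,K-1\}$, that the pair
$\bigl((\psi_k(x),k), (\psi_{k+1}(x),k+1)\bigr)$ is an arc of $\dag$, i.e.\ that
$(\psi_k(x),\psi_{k+1}(x)) \in A_k$. For this, I will invoke \pref{lem:selector}
with the set
\[
   S \seteq B_X(P_k(x), \tau^{-k}/2)\mper
\]
By construction $\psi_k(x) = \f_k(S)$, so the lemma will yield the desired arc provided
(a) $\diam_X(S) \leq 2\tau^{-k}$ and (b) $\psi_{k+1}(x) \in S \cap U_{k+1}$.

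Part (a) is immediate from \pref{thm:ckr}: since $P_k$ is the CKR partition at scale
$\Delta = \tau^{-k}$, we have $\diam_X(P_k(x)) \leq \tau^{-k}$, and enlarging by a
$\tau^{-k}/2$-neighborhood on each side gives $\diam_X(S) \leq \tau^{-k} + 2\cdot \tau^{-k}/2 = 2\tau^{-k}$.

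The main step is part (b), and this is where the choice $\tau = 12$ gets used. By definition
$\psi_{k+1}(x) \in U_{k+1}$, so the only real content is showing $\psi_{k+1}(x) \in S$.
Unwinding the definition of $\f_{k+1}$,
\[
   \psi_{k+1}(x) \in B_X\bigl(B_X(P_{k+1}(x),\tau^{-(k+1)}/2),\,2\tau^{-(k+1)}\bigr) = B_X(P_{k+1}(x),\,\tfrac{5}{2}\tau^{-(k+1)})\mper
\]
Combining this with $x \in P_{k+1}(x)$ and $\diam_X(P_{k+1}(x)) \leq \tau^{-(k+1)}$ gives
$d(x,\psi_{k+1}(x)) \leq \tfrac{7}{2}\tau^{-(k+1)}$. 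Since $\tau = 12$, this is at most
$\tfrac{7}{24}\tau^{-k} < \tau^{-k}/2$, so $\psi_{k+1}(x) \in B_X(x,\tau^{-k}/2) \subseteq B_X(P_k(x),\tau^{-k}/2) = S$,
where we used $x \in P_k(x)$. An application of \pref{lem:selector} now gives
$(\psi_k(x),\psi_{k+1}(x)) \in A_k$, and since this holds for every $k \in \{0,1,\ldots,K-1\}$,
the claimed sequence is a valid path in $\dag$.
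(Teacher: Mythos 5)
Your proposal is correct and follows essentially the same route as the paper: reduce to showing $(\psi_k(x),\psi_{k+1}(x)) \in A_k$, apply \pref{lem:selector} with $S = B_X(P_k(x),\tau^{-k}/2)$, and check $\diam_X(S) \leq 2\tau^{-k}$ together with $d(x,\psi_{k+1}(x)) < \tau^{-k}/2$ using $\tau = 12$. The only difference is a trivially sharper constant ($\tfrac{7}{2}\tau^{-(k+1)}$ versus the paper's $4\tau^{-(k+1)}$), which does not change the argument.
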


\begin{proof}
   It suffices to show that for any $k \in \{0,1,\ldots,K-1\}$, we have $(\psi_k(x), \psi_{k+1}(x)) \in A_k$.
   Define $u'=\psi_{k+1}(x)$ and $S \seteq B_X(P_k(x), \tau^{-k}/2)$.
   Then $\diam_X(S) \leq 2 \tau^{-k}$ and 
   \[
      d(x,u') = d(x,\psi_{k+1}(x)) \leq 2\tau^{-(k+1)} + \diam_X(B_X(P_{k+1}(x),\tau^{-(k+1)}/2)) \leq 4 \tau^{-(k+1)} < \tau^{-k}/2,
   \]
   where the last inequality follows from $\tau = 12$.  Hence $u' \in S \cap U_{k+1}$.  We can therefore apply \pref{lem:selector}
   to conclude that $(\psi_k(x),u') = (\f_k(S), u') \in A_k$, completing the proof.
\end{proof}

For $x \in X$, define $\Psi(x) \seteq \llangle (\psi_0(x),0), (\psi_1(x),1), \ldots, (\psi_K(x),K) \rrangle$.
From the preceding lemma, we know that $\Psi : X \to \cP_{\dag}$.

\begin{lemma}\label{lem:stretch}
   For any $x, y \in X$, it holds that
   \[
      \E\left[\dist_{\hat{\dag}}(\Psi(x),\Psi(y))\right] \leq O(\log n)\, d(x,y).
   \]
\end{lemma}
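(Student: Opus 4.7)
The plan is to decompose $\dist_{\hat{\dag}}(\Psi(x),\Psi(y))$ according to the first scale at which the two paths diverge, reduce the divergence event at each scale to a separation event of the underlying CKR partition $P_k$, and finally apply \pref{thm:ckr} after splitting the sum at the scale of $d(x,y)$.

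First, since both paths emanate from the shared root $\rt$, let $k^{*}$ be the largest index with $\psi_{j}(x)=\psi_{j}(y)$ for every $j\leq k^{*}$. Then the first vertex where the two paths diverge sits at level $k^{*}$, whose outgoing arcs all have weight $10\tau^{-k^{*}}$, so $\dist_{\hat{\dag}}(\Psi(x),\Psi(y))=10\tau^{-k^{*}}$ (or $0$ if no divergence occurs). Since either $k^{*}=K$ or $\psi_{k^{*}+1}(x)\neq\psi_{k^{*}+1}(y)$, this yields
\[
   \dist_{\hat{\dag}}(\Psi(x),\Psi(y)) \leq 10\tau \sum_{k=1}^{K} \tau^{-k}\,\1\!\left[\psi_{k}(x)\neq\psi_{k}(y)\right].
\]
Next, I observe that whenever $P_{k}(x)=P_{k}(y)$, the set $B_{X}(P_{k}(x),\tau^{-k}/2)$ that appears in the definition of $\psi_{k}$ is literally the same for $x$ and $y$, so $\psi_{k}(x)=\psi_{k}(y)$. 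Taking expectations and applying \pref{thm:ckr} with $\Delta=\tau^{-k}$,
\[
   \Pr\!\left[\psi_{k}(x)\neq\psi_{k}(y)\right] \leq \Pr\!\left[P_{k}(x)\neq P_{k}(y)\right] \leq \min\!\Paren{1,\ \frac{8\,d(x,y)}{\tau^{-k}}\log\frac{|B(x,\tau^{-k})|}{|B(x,\tau^{-k}/8)|}}.
\]

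I then split the sum at the index $k_{0}$ determined by $\tau^{-k_{0}}\geq d(x,y)>\tau^{-(k_{0}+1)}$. For $k>k_{0}$ the trivial bound on the probability contributes, by geometric summation, at most $\sum_{k>k_{0}}\tau^{-k}\leq \frac{\tau}{\tau-1}\tau^{-(k_{0}+1)}\leq \frac{\tau}{\tau-1}\,d(x,y)$. For $k\leq k_{0}$ the CKR bound gives $\tau^{-k}\Pr[\psi_{k}(x)\neq\psi_{k}(y)] \leq 8\,d(x,y)\,\log\frac{|B(x,\tau^{-k})|}{|B(x,\tau^{-k}/8)|}$, so the entire large-scale contribution is bounded by $8\,d(x,y)\sum_{k=1}^{k_{0}} \log \frac{|B(x,\tau^{-k})|}{|B(x,\tau^{-k}/8)|}$.

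The main obstacle---and the place where the choice $\tau=12$ is essential---is bounding this last sum by $O(\log n)$. Since $\tau=12>8$ we have $\tau^{-(k+1)}<\tau^{-k}/8$, hence $|B(x,\tau^{-(k+1)})|\leq |B(x,\tau^{-k}/8)|$, so the terms can be telescoped:
\[
   \sum_{k=1}^{k_{0}}\log\frac{|B(x,\tau^{-k})|}{|B(x,\tau^{-k}/8)|} \leq \sum_{k=1}^{k_{0}}\log\frac{|B(x,\tau^{-k})|}{|B(x,\tau^{-(k+1)})|} = \log\frac{|B(x,\tau^{-1})|}{|B(x,\tau^{-(k_{0}+1)})|} \leq \log n.
\]
Combining the small-scale and large-scale contributions gives $\E[\dist_{\hat{\dag}}(\Psi(x),\Psi(y))] \leq 10\tau\bigl(8\log n + \tfrac{\tau}{\tau-1}\bigr) d(x,y) = O(\log n)\,d(x,y)$, as required.
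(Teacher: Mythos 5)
Your proof is correct and follows essentially the same route as the paper: bound the divergence scale by the event $P_k(x)\neq P_k(y)$, apply \pref{thm:ckr} at each scale, and telescope the log-ratio sum using $\tau=12\geq 8$. The only difference is your split at the scale of $d(x,y)$ with the trivial probability bound below it, which is harmless but unnecessary, since the telescoped sum $\sum_{k}\log\frac{|B(x,\tau^{-k})|}{|B(x,\tau^{-k}/8)|}\leq \log n$ already controls all scales at once; on the other hand, you are more explicit than the paper about the reduction from $\psi_k(x)\neq\psi_k(y)$ to $P_k(x)\neq P_k(y)$ and the resulting constant-factor bookkeeping.
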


\begin{proof}
 From \pref{thm:ckr}, we have
   \begin{align*}
      \E\left[\dist_{\hat{\dag}}(\Psi(x),\Psi(y))\right] &\leq \sum_{k=0}^{K} \Pr[P_k(x) \neq P_k(y)] \cdot 10 \tau^{-k} \\
                                                         &\leq 80\, d(x,y) \sum_{k=0}^K \log \frac{|B(x,\tau^{-k})|}{|B(x,\tau^{-k}/8)|} \\
                                                         &\leq 80 \log(n)\,d(x,y),
   \end{align*}
   where in the last line we used $\tau = 12 \geq 8$.
\end{proof}

\begin{corollary}
\label{cor:lipschitz}
   It holds that $\hat{\dag}$ is $O(\log n)$-Lipschitz with respect to $(X,d)$.
\end{corollary}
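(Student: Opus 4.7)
The plan is to use \pref{lem:stretch} to produce the sequence of flows directly: for each $x_i$ in the path, take $F^{(i)}$ to be the unit flow to $x_i$ obtained by pushing forward the distribution of the random root-to-$x_i$ path $\Psi(x_i)$ (constructed in \pref{sec:distortion}) onto the arcs of $\dag$. More concretely, I will set
\[
   F^{(i)}_{uv} \seteq \Pr\!\left[uv \in \Psi(x_i)\right], \qquad uv \in A.
\]
Since $\Psi(x_i)$ is always a path in $\cP_{\dag}$ whose terminal sink is $x_i$, this defines an element of $\cF_{\dag}$ whose entire unit of flow is routed to $x_i$, so condition (1) of the Lipschitz definition holds.

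The main step is to bound $\vvW^1_{\hat{\dag}}(F^{(i)}, F^{(i+1)})$. The crucial observation is that $\Psi(x_i)$ and $\Psi(x_{i+1})$ are both built from the \emph{same} family of random partitions $P_0,\ldots,P_K$ (and the deterministic selector $\f_k$). Viewing $F^{(i)}$ and $F^{(i+1)}$ as probability measures on $\cP_{\dag}$, this shared randomness supplies an explicit coupling. By the Kantorovich-Rubinstein duality for $L^1$ transportation under the ultrametric $\dist_{\hat{\dag}}$,
\[
   \vvW^1_{\hat{\dag}}\!\left(F^{(i)}, F^{(i+1)}\right) \leq \E\!\left[\dist_{\hat{\dag}}\!\left(\Psi(x_i), \Psi(x_{i+1})\right)\right],
\]
and \pref{lem:stretch} bounds the right-hand side by $O(\log n)\, d(x_i,x_{i+1})$. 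Summing over $i = 1,\ldots,m-1$ gives
\[
   \sum_{i=1}^{m-1} \vvW^1_{\hat{\dag}}\!\left(F^{(i)}, F^{(i+1)}\right) \leq O(\log n) \sum_{i=1}^{m-1} d(x_i, x_{i+1}),
\]
which is precisely condition (2) of the Lipschitz definition with $L = O(\log n)$, completing the proof.

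There is essentially no serious obstacle here: all the geometric work has already been done in \pref{lem:stretch}, and the only thing to verify is that the random-path construction genuinely yields a unit flow to the intended sink (immediate from the preceding lemma that $\Psi(x)$ is a path ending at $(x,K)$) and that independent draws per index are \emph{not} needed — reusing the same $P_0,\ldots,P_K$ across all $x_i$ simultaneously is what turns the single-pair bound of \pref{lem:stretch} into a bound that sums correctly along the path. The only minor care required is noting that the transportation cost on $\cF_{\dag}$ equals the transportation cost between the corresponding path-measures on $\cP_{\dag}$, which is exactly how $\vvW^1_{\hat{\dag}}$ was defined in \pref{sec:metric-compat}.
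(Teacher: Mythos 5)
Your proposal is correct, and it rests on exactly the same engine as the paper: the random maps $\psi_k$, $\Psi$ built from the CKR partitions and the stretch bound of \pref{lem:stretch}. The only divergence is the final conversion from random paths to flows. The paper derandomizes: since the \emph{expected} sum $\sum_j \E[\dist_{\hat{\dag}}(\Psi(x_j),\Psi(x_{j+1}))]$ is at most $O(\log n)\sum_j d(x_j,x_{j+1})$, there is a fixed realization of the partitions giving a deterministic map $f:X\to\cP_{\dag}$, and the flows are then the point masses $\1_{f(x_i)}$, for which the transportation cost is literally $\dist_{\hat{\dag}}(f(x_i),f(x_{i+1}))$. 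You instead keep the randomness and take $F^{(i)}_{uv}=\Pr[uv\in\Psi(x_i)]$, the marginal (mixture) flow, and bound $\vvW^1_{\hat{\dag}}(F^{(i)},F^{(i+1)})$ by the coupling of the two path distributions furnished by the shared partitions. Both finishes are valid; the paper's yields flows supported on single paths (slightly cleaner to match the definition of $\vvW^1_{\hat{\dag}}$), while yours produces a single deterministic sequence of fractional flows without invoking the probabilistic method. Two small remarks: what you use is not Kantorovich--Rubinstein \emph{duality} but simply the primal characterization of $\vvW^1$ as an infimum over couplings (any explicit coupling gives an upper bound); and the insistence that the same $P_0,\ldots,P_K$ be reused ``across all $x_i$ simultaneously'' is not actually needed --- since each $F^{(i)}$ is the fixed marginal law of $\Psi(x_i)$, every consecutive pair only requires a coupling of its own two marginals, which the joint construction for that pair already supplies, with no consistency condition across pairs.
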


\begin{proof}
   Consider any sequence $x_1,\ldots,x_m$, and let us map it to the random sequence $\Psi(x_1),\ldots,\Psi(x_m)$.
   Then from \pref{lem:stretch}, we conclude
   \[
      \sum_{j=1}^{m-1} \E\left[\dist_{\hat{\dag}}(\Psi(x_j), \Psi(x_{j+1}))\right] \leq O(\log n) \sum_{j=1}^{m-1} d(x_j,x_{j+1}).
   \]
   Hence there is a mapping $f : X \to \cP_{\dag}$ (that depends on the sequence $x_1,\ldots,x_m$)
   such that $\sum_{j=1}^{m-1} d(f(x_j),f(x_{j+1})) \leq O(\log n) \sum_{j=1}^{m-1} d(x_j,x_{j+1})$, completing the proof.
\end{proof}
\subsection{Compression}
\label{sec:compress}
	Let $\hat{\dag} = (\dag, \omega, \theta)$ be the $\tau$-geometric marked DAG constructed in \pref{sec:dag-construction}. 
	For a point $u \in V$, we let $\sigma(u)$ denote the number of  paths in $\dag$ that start at $u$ and end in a point of $X$.
\begin{observation}
\label{obs:sigma-induction}
	For $u \in V \setminus X$, it holds that 
	\begin{equation}
	\label{eq:sigma-induction}
			\sigma(u) = \sum_{v: uv \in A} \sigma(v).
	\end{equation}
\end{observation}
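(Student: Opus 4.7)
The observation is essentially a standard ``paths from $u$ decompose by their first arc'' identity, so the plan is a direct bijective argument. First I would note that because $u \in V \setminus X$, the vertex $u$ is not a sink, and hence every directed path from $u$ ending in $X$ has length at least one; in particular, each such path $\gamma$ has a well-defined first arc $uv \in A$. Writing $\gamma = \langle uv \rangle \cdot \gamma'$ exhibits the suffix $\gamma'$ as a directed path from $v$ to some sink, with the convention that $\gamma'$ is the empty path at $v$ when $v \in X$ itself.

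Next I would observe that this prefix-stripping map admits the obvious inverse: given any out-neighbor $v$ of $u$ and any path $\gamma'$ from $v$ to a sink, prepending the arc $uv$ yields a path from $u$ to a sink whose first arc is $uv$. Consequently, for each $v$ with $uv \in A$ the set of paths counted by $\sigma(u)$ that start with the arc $uv$ is in bijection with the set of paths counted by $\sigma(v)$. Since paths starting with different first arcs are disjoint and exhaust all paths counted by $\sigma(u)$, summing over the out-neighbors of $u$ gives \eqref{eq:sigma-induction}.

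There is really no obstacle here; the only point worth checking is that the recursion bottoms out correctly, i.e., that $\sigma(x) = 1$ for each sink $x \in X$, so that the identity still makes sense when some out-neighbor of $u$ already lies in $X$. This is immediate from the convention that a sink admits a unique length-zero path to $X$, namely itself. Hence the claimed identity follows by direct enumeration for every $u \in V \setminus X$.
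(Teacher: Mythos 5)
Your first-arc decomposition is exactly the standard argument this identity rests on; the paper states it as an immediate observation without proof, and your write-up (including the check that $\sigma(x)=1$ for sinks $x \in X$, which is needed for the terms with $v \in X$ in the sum) supplies precisely the intended justification. No gaps.
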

Say an edge $uv \in A$ is \emph{heavy} if $v \not\in X$ and $\sigma(v) > \sigma(u) / 2$; otherwise we say that \emph{$uv$ is light}. 
    Moreover, we say a path $\gamma = \langle u_1 u_2, u_2 u_3, \ldots, u_{m - 1} u_m \rangle$ in $\dag$ is \emph{heavy-light} if all the edges $u_1 u_2, u_2 u_3, \ldots, u_{m - 2} u_{m-1}$ are heavy and $u_{m-1} u_m$ is light.
The next lemma is straightforward and follows from \pref{eq:sigma-induction}.
    \begin{lemma}
    \label{lem:one-heavy-child}
    	For every $u \in V$, there is at most one heavy edge in $\dag$ leaving $u$.
    \end{lemma}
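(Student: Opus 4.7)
The plan is a short proof by contradiction that uses nothing beyond the sum identity in \pref{obs:sigma-induction} and nonnegativity of $\sigma$. First I would dispense with the trivial case: if $u \in X$, then $u$ is a sink of $\dag$ so there are no arcs leaving $u$ at all, and the claim holds vacuously. So the interesting case is $u \in V \setminus X$.

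For $u \in V \setminus X$, suppose toward contradiction that there exist two distinct heavy arcs $uv_1, uv_2 \in A$ leaving $u$. By the definition of a heavy edge, we have $v_1, v_2 \notin X$ and both
\[
\sigma(v_1) > \sigma(u)/2 \quad\text{and}\quad \sigma(v_2) > \sigma(u)/2.
\]
Combining these two strict inequalities gives $\sigma(v_1) + \sigma(v_2) > \sigma(u)$.

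On the other hand, since every $\sigma(v) \geq 0$ and $v_1, v_2$ are distinct children of $u$, the identity \pref{eq:sigma-induction} gives
\[
\sigma(u) = \sum_{v : uv \in A} \sigma(v) \geq \sigma(v_1) + \sigma(v_2),
\]
and chaining this with the previous inequality yields $\sigma(u) > \sigma(u)$, a contradiction.

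There is no real obstacle here; the lemma is a pigeonhole consequence of the fact that a quantity distributed additively among its children cannot give strictly more than half of itself to two different children. The only thing worth being careful about is making sure that the degenerate case $u \in X$ is mentioned so that the statement ``for every $u \in V$'' is handled uniformly.
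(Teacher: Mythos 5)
Your proof is correct and is exactly the argument the paper intends: the lemma is stated as a straightforward consequence of \pref{eq:sigma-induction}, and your pigeonhole contradiction (two heavy children would force $\sigma(v_1)+\sigma(v_2) > \sigma(u)$, contradicting $\sigma(u) = \sum_{v:\, uv \in A} \sigma(v)$ with $\sigma \geq 0$) is precisely that deduction. Nothing further is needed.
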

    Now we construct the marked DAG $\tilde{\dag} = (\dag', \omega', \theta')$ with $\dag' = (V, A')$ as follows. 
    We connect $u_i = (x_i, i) \in V$ to $u_j = (x_j, j) \in V$ for $1 \leq i < j \leq K$ in $\dag'$ if there is a heavy-light path $\gamma = \langle u_i u_{i+1}, u_{i+1}u_{i+2}, \ldots, u_{j-1} u_j \rangle$ from $u_i$ to $u_j$ in $\dag$. Note that by \pref{lem:one-heavy-child}, at most one such path can exist.
    We further set 
    \begin{align*}
       \omega'_{u_i u_j} &\seteq  10\tau^{-j + 1},  \\
       \theta'_{u_i u_j} &\seteq  \prod_{k=i}^{j-1} \theta_{u_k u_{k+1}}. 
 \end{align*}
%
    \begin{lemma}
    \label{lem:sigma-edge-bound}
    	For $uv \in A'$ with $v \notin X$ it holds that 
    	\[
    	\sigma(v) \leq \sigma(u) / 2.
    	\]
    \end{lemma}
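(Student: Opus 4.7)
The plan is to unwind the definition of $A'$ and reduce the statement to the definition of \emph{light} for the final arc of the corresponding heavy-light path. By construction, an edge $uv \in A'$ with $u = u_i$ and $v = u_j$ arises from a unique heavy-light path $\langle u_i u_{i+1}, u_{i+1} u_{i+2}, \ldots, u_{j-1} u_j \rangle$ in $\dag$ whose initial arcs $u_i u_{i+1}, \ldots, u_{j-2} u_{j-1}$ are heavy and whose terminal arc $u_{j-1} u_j$ is light. Since $v = u_j \notin X$ by hypothesis, the definition of \emph{light} gives directly
\[
   \sigma(v) = \sigma(u_j) \leq \sigma(u_{j-1})/2.
\]

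The remaining step is to bound $\sigma(u_{j-1})$ from above by $\sigma(u_i) = \sigma(u)$. For this I would invoke \pref{obs:sigma-induction}: for every $w \in V \setminus X$, we have $\sigma(w) = \sum_{x: wx \in A} \sigma(x)$, and since all summands are nonnegative, $\sigma(w) \geq \sigma(w')$ for each out-neighbor $w'$ of $w$ in $\dag$. Applying this inequality along each of the heavy arcs in the path yields
\[
   \sigma(u_{j-1}) \leq \sigma(u_{j-2}) \leq \cdots \leq \sigma(u_i) = \sigma(u).
\]
(This step does not even use that the intermediate arcs are heavy; we only need that they lie in $A$.) Chaining the two inequalities produces $\sigma(v) \leq \sigma(u)/2$, as desired.

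I do not foresee any serious obstacle: the proof is essentially a bookkeeping exercise once the correspondence between arcs of $\dag'$ and heavy-light paths in $\dag$ is made explicit, together with the uniqueness guaranteed by \pref{lem:one-heavy-child}. The only point requiring a small amount of care is the edge case $j = i+1$, where the heavy-light path consists of a single light arc; here $u_{j-1} = u_i = u$ and the desired inequality is literally the definition of \emph{light}.
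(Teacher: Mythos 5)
Your proof is correct and follows the same route as the paper: express the arc of $A'$ as a heavy-light path in $\dag$, use that $\sigma$ is non-increasing along arcs (via \pref{obs:sigma-induction}) to get $\sigma(u_{j-1}) \leq \sigma(u)$, and then apply the definition of a light terminal arc together with $v \notin X$ to conclude $\sigma(v) \leq \sigma(u_{j-1})/2 \leq \sigma(u)/2$. Your handling of the single-arc edge case and the remark that the intermediate arcs need only lie in $A$ are both fine.
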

    \begin{proof}
    	Since $uv \in A'$, there must be a heavy-light path $\gamma = \langle w_1 w_2, \ldots, w_{m-1} w_m \rangle$ in $\dag$ with $w_1 = u$ and $w_m = v$. 
    	 Clearly the values of $\sigma(\cdot)$ are non-increasing along the (directed) paths in $\dag$, hence we have
    	\[
    	\sigma(u) \geq \sigma(w_2) \geq \cdots \geq \sigma(w_{m-1}).
    	\]
    	Furthermore, as $w_{m-1} v$ is a light edge and $v \not\in X$, it follows that 
    	\[
    	\sigma(v) \leq \sigma(w_{m-1}) / 2 \leq \sigma(u) / 2,
    	\]
    	as desired.
    \end{proof}
    \begin{lemma}
    \label{lem:compression-combinatorial-depth}
	It holds that $\Delta_0(\dag') \leq O(\log |\mathcal{P}_{\dag}|)$.
\end{lemma}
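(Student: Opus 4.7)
The plan is to read off the depth bound from the halving property of $\sigma$ guaranteed by \pref{lem:sigma-edge-bound}. Concretely, I would argue that along any $\rt$-to-sink path in $\dag'$, the quantity $\sigma(\cdot)$ drops by at least a factor of $2$ at every step except possibly the final one into $X$, and then use the trivial bounds $\sigma(\rt)=|\mathcal{P}_{\dag}|$ and $\sigma(u)\ge 1$ for every $u\in V$.

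In detail, fix a root-to-sink path $\gamma'=\langle u_0,u_1,\ldots,u_\ell\rangle$ in $\dag'$ with $u_0=\rt$ and $u_\ell\in X$. Since sinks of $\dag$ (equivalently, of $\dag'$, as the construction does not change the sink set) have no outgoing arcs, each $u_i$ with $i<\ell$ lies in $V\setminus X$. For every $i\in\{0,1,\ldots,\ell-2\}$, the arc $u_iu_{i+1}\in A'$ has $u_{i+1}\notin X$, so \pref{lem:sigma-edge-bound} yields $\sigma(u_{i+1})\le \sigma(u_i)/2$. Iterating gives
\[
\sigma(u_{\ell-1})\le \frac{\sigma(u_0)}{2^{\ell-1}} = \frac{|\mathcal{P}_{\dag}|}{2^{\ell-1}}.
\]
On the other hand, $\sigma(u_{\ell-1})\ge 1$ because the single arc $u_{\ell-1}u_\ell$ corresponds to a heavy-light path in $\dag$ that terminates at the sink $u_\ell\in X$, so there is at least one source-sink continuation from $u_{\ell-1}$. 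Combining the two inequalities gives $2^{\ell-1}\le |\mathcal{P}_{\dag}|$, i.e., $\ell\le 1+\log_2|\mathcal{P}_{\dag}|$.

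Since $\gamma'$ was arbitrary, $\Delta_0(\dag')\le 1+\log_2|\mathcal{P}_{\dag}| = O(\log|\mathcal{P}_{\dag}|)$, as desired. There is no real obstacle here: the whole content is already packaged in \pref{lem:sigma-edge-bound}, and the only small point to verify is that the halving step is not applied to an arc that lands in $X$ (which could fail the hypothesis $v\notin X$ of that lemma); this is handled by stopping the chain at $u_{\ell-1}$ and using the trivial lower bound $\sigma(u_{\ell-1})\ge 1$ instead of chasing one further step.
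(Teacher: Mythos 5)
Your proof is correct and follows essentially the same route as the paper: apply \pref{lem:sigma-edge-bound} along every arc of the path except the last one (where the head may lie in $X$), then combine $\sigma(\rt)=|\cP_{\dag}|$ with $\sigma(u_{\ell-1})\ge 1$ to get the $\log_2|\cP_{\dag}|$ bound on the number of arcs. The indexing and the handling of the final arc match the paper's argument; no further comment is needed.
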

\begin{proof}
	We will argue that for every $\gamma \in \cP_{\dag'}$, one has $|\gamma| = O(\log |\cP_\dag|)$.
   Let $\gamma = \langle u_1 u_2, \ldots, u_{m-1} u_m \rangle$. 
\pref{lem:sigma-edge-bound} implies that for $1 \leq i \leq m - 2$  we have $\sigma(u_{i+1}) \leq \sigma(u_i) / 2$.
    Further note that we have $\sigma(u_1) = |\cP_{\dag}|$, and also clearly  $\sigma(u_{m-1}) \geq 1$.
   Therefore,
   \[
   m - 2 \leq  \log_2(|\cP_{\dag}|),
   \]
   completing the proof.
\end{proof}

We now define the map $f: \cP_{\dag} \to \cP_{\dag'}$ as follows. For a path $\gamma \in \cP_\dag$, let $f(\gamma)$ denote the path obtained by contracting all the heavy edges in $\gamma$. More precisely, for $\gamma = \langle u_1 u_2, u_2 u_3, \ldots, u_{m - 1} u_m \rangle$, we define $f(\gamma)$ as follows.
Denote $i_0 \seteq 1$, and for $j = 1, 2, \ldots, m'$, let $i_j$ denote the $j$th index for which $u_{i_j - 1} u_{i_j}$ is a light edge. We then denote
\[
   f(\gamma) \seteq \langle u_{i_0} u_{i_1}, u_{i_1} u_{i_2}, \ldots, u_{i_{m' - 1}} u_{i_{m'}} \rangle.
\]



\begin{lemma}
	\label{lem:compression-information-depth}
	It holds that $\Delta_I(\tilde{D}) \leq \Delta_I(\hat{D})$.
\end{lemma}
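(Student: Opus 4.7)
The plan is to observe that the map $f : \cP_\dag \to \cP_{\dag'}$ defined just above the lemma is a $\theta$-preserving surjection; the claimed inequality then follows by maximizing over paths on each side.

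Given any $\gamma' = \langle u_{i_0} u_{i_1}, u_{i_1} u_{i_2}, \ldots, u_{i_{m'-1}} u_{i_{m'}}\rangle \in \cP_{\dag'}$, I would reconstruct a preimage $\gamma \in \cP_\dag$ by un-contracting: by \pref{lem:one-heavy-child}, each arc $u_{i_{j-1}} u_{i_j}$ of $\dag'$ arose from a unique heavy-light path $u_{i_{j-1}} \to \cdots \to u_{i_j}$ in $\dag$, and concatenating these segments in order yields a valid $\rt$-to-$X$ path $\gamma$ in $\dag$. The definition of $\theta'$ is exactly rigged so that a direct telescoping gives
\[
   \theta'(\gamma') = \prod_{j=1}^{m'} \theta'_{u_{i_{j-1}} u_{i_j}} = \prod_{j=1}^{m'} \prod_{k=i_{j-1}}^{i_j - 1} \theta_{u_k u_{k+1}} = \theta(\gamma).
\]
Applying $\log(1/\cdot)$ and then taking the maximum over $\gamma' \in \cP_{\dag'}$,
\[
   \Delta_I(\tilde{\dag}) = \max_{\gamma' \in \cP_{\dag'}} \log(1/\theta'(\gamma')) \leq \max_{\gamma \in \cP_\dag} \log(1/\theta(\gamma)) = \Delta_I(\hat{\dag}),
\]
which is the desired conclusion.

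I do not anticipate a serious obstacle, since the entire content is the telescoping identity built into the definition of $\theta'$. The only bookkeeping is to verify that every $\gamma' \in \cP_{\dag'}$ genuinely lifts to a root-to-sink path in $\dag$; this is routine because every arc into $X$ is automatically light (the definition of heavy requires the head to lie outside $X$), so following the unique heavy arc out of any non-sink vertex until the first light arc appears always produces a well-defined heavy-light path, guaranteeing both that $f$ is surjective and that the un-contraction above is well-defined.
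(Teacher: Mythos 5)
Your proof is correct and is essentially the paper's own argument: the paper notes that $f$ is surjective and that $\theta(\gamma)=\theta'(f(\gamma))$ (the same telescoping identity), then maximizes over paths, while you simply phrase the surjectivity as an explicit un-contraction of each $\gamma'\in\cP_{\dag'}$ to a preimage in $\cP_{\dag}$. No gap; your extra bookkeeping about arcs into $X$ being light just makes explicit why the lift lands in $\cP_{\dag}$.
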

\begin{proof}
As all the edges in $\dag'$ correspond to a path in $\dag$, $f$ is a surjective map.
 Furthermore, for $\gamma \in \cP_\dag$, one has $\theta(\gamma) = \theta'(f(\gamma))$, for $\theta(\cdot)$ defined as in \pref{eq:theta-prod-def} and $\theta'(\cdot)$ defined analogously, and thus we have
	\[
\Delta_I(\tilde{\dag}) = \max_{\gamma' \in \cP_{\dag'}} \log(1/\theta'(\gamma')) = \max_{\gamma \in \cP_{\dag}} \log(1/\theta'(f(\gamma))) \leq  \max_{\gamma \in \cP_{\dag}} \log(1/\theta(\gamma)) = \Delta_I(\hat{D}),
\]
completing the proof.
\end{proof}

\begin{lemma}
\label{lem:compression-distortion}
	For all $\gamma, \gamma' \in \cP_{\dag}$ it holds that
	\[
	 \dist_{\hat{\dag}}(\gamma, \gamma') = \dist_{\tilde{\dag}}(f(\gamma), f(\gamma')).
	 	\]
\end{lemma}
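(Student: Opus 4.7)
The plan is to locate the divergence point of $f(\gamma)$ and $f(\gamma')$ in terms of the divergence point of $\gamma$ and $\gamma'$, then verify that the associated edge weights in $\dag'$ coincide with those in $\dag$. Fix $\gamma, \gamma' \in \cP_{\dag}$ and let $v_\ell$ denote the last common vertex of $\gamma$ and $\gamma'$ (where $v_0 = \rt$ is at level $0$ and $v_i$ is at level $i$), so $\gamma$ takes $v_\ell \to v_{\ell+1}$ and $\gamma'$ takes $v_\ell \to v'_{\ell+1}$ with $v_{\ell+1} \neq v'_{\ell+1}$. Both divergent arcs have weight $10\tau^{-\ell}$, hence $\dist_{\hat{\dag}}(\gamma,\gamma') = 10\tau^{-\ell}$.

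Next I would identify $v_*$, the last common vertex of $f(\gamma)$ and $f(\gamma')$. Since the vertices of $f(\gamma)$ (other than the root) are exactly the targets of light edges in $\gamma$, and since $\gamma$ and $\gamma'$ agree up to $v_\ell$, the common prefix of $f(\gamma)$ and $f(\gamma')$ consists of the root and the targets of light edges in the shared prefix of $\gamma,\gamma'$. Thus $v_*$ is either $\rt$ or the target of the latest light edge before $v_\ell$; in particular, every edge of $\gamma$ (equivalently $\gamma'$) strictly between $v_*$ and $v_\ell$ is heavy. Let $x$ and $y$ be the next compressed vertices in $f(\gamma)$ and $f(\gamma')$ after $v_*$, i.e., the targets of the next light edges in $\gamma$ and $\gamma'$ beyond $v_*$.

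Now I invoke \pref{lem:one-heavy-child} on $v_\ell$: at most one of the divergent edges $v_\ell \to v_{\ell+1}$, $v_\ell \to v'_{\ell+1}$ can be heavy, giving two cases. If both are light, then $x = v_{\ell+1}$ and $y = v'_{\ell+1}$, both sitting at level $\ell+1$, so $\omega'_{v_* x} = \omega'_{v_* y} = 10\tau^{-(\ell+1)+1} = 10\tau^{-\ell}$, and since $x \neq y$ the vertex $v_*$ is indeed the last common vertex. If WLOG $v_\ell \to v_{\ell+1}$ is heavy and $v_\ell \to v'_{\ell+1}$ is light, then $y = v'_{\ell+1}$ is at level $\ell+1$ with $\omega'_{v_* y} = 10\tau^{-\ell}$, while $x$ lies at some level $\ell_x \geq \ell+2$, giving $\omega'_{v_* x} = 10\tau^{-\ell_x+1} < 10\tau^{-\ell}$; in particular $x \neq y$ (different levels), so $v_*$ is the divergence point. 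In both cases $\dist_{\tilde{\dag}}(f(\gamma),f(\gamma')) = \max(\omega'_{v_* x}, \omega'_{v_* y}) = 10\tau^{-\ell} = \dist_{\hat{\dag}}(\gamma,\gamma')$, as desired.

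The main delicate point is confirming that $v_*$ really is the first divergence vertex of $f(\gamma)$ and $f(\gamma')$, i.e., that $x \neq y$. A priori, $f(\gamma)$ and $f(\gamma')$ could reconverge along a heavy-light shortcut that hops over $v_\ell$; but the uniqueness of heavy-light paths from any source (a consequence of \pref{lem:one-heavy-child}) combined with the level comparison above rules this out. The rest is bookkeeping with the definitions $\omega'_{u_iu_j} = 10\tau^{-j+1}$ and the observation that this is precisely the weight in $\dag$ of the terminating light edge of the compressed heavy-light segment.
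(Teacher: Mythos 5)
Your overall strategy is the paper's: locate the divergence vertex, use \pref{lem:one-heavy-child} to split into the light/light and heavy/light cases, and check that the two successor vertices in the compressed DAG are distinct so that the base vertex really is where $f(\gamma)$ and $f(\gamma')$ first diverge. (The paper merges your two cases by assuming WLOG that the arc on $\gamma$'s side is light and taking the minimal index $k \geq i$ with $u'_k u'_{k+1}$ light; the weight bookkeeping is identical.) However, one step fails as literally written: you anchor the argument at the \emph{last common vertex} $v_\ell$ of $\gamma$ and $\gamma'$, and then assert both that $\dist_{\hat{\dag}}(\gamma,\gamma') = 10\tau^{-\ell}$ and that ``$\gamma$ and $\gamma'$ agree up to $v_\ell$.'' In a DAG -- unlike a tree -- two root--sink paths can diverge and later reconverge (this is exactly what distinguishes the construction of \pref{sec:dag-construction} from an HST), and then the last common vertex lies strictly below the first divergence vertex. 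Since $\dist_{\hat{\dag}}$ is defined through the \emph{first} vertex at which the paths diverge, your claimed value $10\tau^{-\ell}$ is then too small, and the assertion that the paths agree up to $v_\ell$ -- which you need both to identify $v_*$ and to know that the compressed paths share their prefix up to $v_*$ -- is false. (Two distinct paths can even share their terminal sink, in which case the literal last common vertex is a sink with no outgoing arcs and your setup does not even parse.)

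The repair is immediate: let $v_\ell$ be the last vertex of the maximal common \emph{prefix} of $\gamma$ and $\gamma'$, i.e.\ the first divergence vertex. Then the paths genuinely agree up to $v_\ell$, your identification of $v_*$, $x$, $y$ and the two-case analysis go through verbatim (you should also note, as the paper does, that the ``next light edge'' defining $x$ always exists because the final arc of any path ends in a sink and is therefore light), and the resulting proof coincides with the paper's.
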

\begin{proof}
Denote $\gamma = \langle u_1 u_2, \ldots, u_{m-1} u_m \rangle$  and $\gamma' = \langle u'_1 u'_2, \ldots, u'_{m-1} u'_{m} \rangle$, and let $u_i = u'_i$ be the first vertex at which $\gamma$ and $\gamma'$ diverge so that we have
\[
\dist_{\hat{\dag}}(P, P') = \max(\omega_{u_i u_{i+1}}, \omega_{u_i u'_{i+1}}) = 10 \tau^{-i}.
\]
By \pref{lem:one-heavy-child},
at most one of $u_i u_{i+1}$ and $u_i u'_{i+1}$ can be heavy. Suppose that $u_i u_{i+1}$ is light. 
Take $j\seteq 1$ when $i=1$, and otherwise let $j \leq i$ be the maximum index for which $u_{j-1} u_j$ is light.
Further let $k \geq i$ be the minimum index for which $u'_k u'_{k+1}$ is a light edge. Note that $k$ is well-defined because  $u'_{m-1} u'_m$ is light. 
Now we have
\begin{align*}
   \dist_{\tilde{\dag}}(f(\gamma), f(\gamma')) &= \max(\omega'_{u_j u_{i+1}}, \omega'_{u_j u'_{k+1}}) \\ &= \max(\omega_{u_i u_{i+1}}, \omega_{u'_k u'_{k+1}}) 
	= \max( 10\tau^{-i}, 10\tau^{-k}) = 10\tau^{-i}, 
\end{align*}
as desired.
\end{proof}

\begin{lemma}
\label{lem:tilde-D-tau-geometric-marked}
    	The $\tilde{D}$ is a marked DAG that is also $\tau$-geometric.
\end{lemma}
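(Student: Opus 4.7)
The plan is to verify that $\tilde{\dag} = (\dag',\omega',\theta')$ satisfies the three conditions: (i) $\omega' \in \R_{++}^{A'}$ is strictly decreasing along directed paths; (ii) $\theta'$ specifies a probability distribution on the outgoing edges at each non-sink vertex; and (iii) $\omega'_{uv} \geq \tau \omega'_{vw}$ for every pair of consecutive arcs $uv,vw \in A'$. Properties (i) and (iii) reduce to one direct calculation, while (ii) is the substantive part and requires an induction exploiting the structure of heavy-light paths.

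For (i) and (iii), take any consecutive arcs $uv, vw \in A'$ and write $u = (x_i, i)$, $v = (x_j, j)$, $w = (x_l, l)$; necessarily $i < j < l$ since edges of $\dag'$ go from lower to strictly higher levels. By the defining formula $\omega'_{u_i u_j} \seteq 10\tau^{-j+1}$, we get
\[
\frac{\omega'_{uv}}{\omega'_{vw}} = \frac{\tau^{-(j-1)}}{\tau^{-(l-1)}} = \tau^{l-j} \geq \tau,
\]
using $l \geq j+1$. This is the $\tau$-geometric property, and since $\tau > 1$ it also yields the strict decrease required for a marked DAG. Positivity of $\omega'$ is immediate from its definition.

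For (ii), I would proceed by induction on the DAG structure at $u \in V \setminus X$, in reverse topological order. The base case consists of non-sink vertices $u$ whose outgoing edges in $\dag$ all terminate in $X$; those edges are all light, so the outgoing edges of $u$ in $\dag'$ coincide with those in $\dag$ and the sum of $\theta'$-weights equals $1$ because $\hat{\dag}$ is marked. For the inductive step, note that outgoing edges of $u$ in $\dag'$ are in bijection with heavy-light paths starting at $u$ in $\dag$. By \pref{lem:one-heavy-child}, $u$ has at most one heavy outgoing edge in $\dag$, call it $uw$; and by definition of heaviness, $w \notin X$, so the inductive hypothesis applies to $w$. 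Every heavy-light path from $u$ is either a single light edge $\langle uv\rangle$ or a concatenation $\langle uw\rangle \oplus \pi$ with $\pi$ a heavy-light path from $w$, and the multiplicative definition of $\theta'$ gives
\[
\sum_{v\,:\,uv \in A'} \theta'_{uv} = \sum_{v\,:\,uv \in A,\, uv\ \text{light}} \theta_{uv} + \theta_{uw} \sum_{v\,:\,wv \in A'} \theta'_{wv} = \sum_{v\,:\,uv \in A} \theta_{uv} = 1,
\]
where the middle equality uses the inductive hypothesis at $w$ (giving the second factor equal to $1$) together with the fact that $uw$ is the unique heavy outgoing edge, and the last equality uses that $\hat{\dag}$ is marked.

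I do not anticipate any real obstacle. The only mild subtlety is aligning the induction with the definitions: specifically, one must use that heavy edges cannot terminate in $X$ to guarantee that the inductive hypothesis is available at $w$. All the geometric content was absorbed into the construction of $\hat{\dag}$, and the present lemma is a bookkeeping verification that the compression leaves the marked-DAG and $\tau$-geometric structure intact.
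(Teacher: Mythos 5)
Your proposal is correct and takes essentially the same route as the paper: the $\tau$-geometric part is the same one-line comparison of $\omega'_{uv}=10\tau^{-j+1}$ across consecutive arcs, and your reverse-topological induction for $\sum_{v\spc uv\in A'}\theta'_{uv}=1$ is just a repackaging of the paper's telescoping computation along the unique maximal heavy path out of $u$, with both arguments resting on \pref{lem:one-heavy-child} and on the fact that heavy edges never end in $X$.
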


    \begin{proof}
      We first establish the $\tau$-geometric property.
    	Consider $u, v, w \in V$ with $uv, vw \in A'$. Denote $v = (x, i)$ for some $1 \leq i \leq K$. Then by construction,
      we have $\omega'_{uv} \geq 10 \tau^{-i+1}$ and $\omega'_{vw} \leq 10 \tau^{-i}$, completing the proof.
    	
    	Next, we show that $\tilde{D}$ is a properly-constructed marked DAG.
      We need to establish that for $u \in V \setminus X$ it holds that 
	\begin{equation}
	\label{eq:Dhat-marked-DAG}
			\sum_{v: uv \in A'} \theta'_{uv} = 1.
	\end{equation}
	Let $v_0 \seteq u$ and let $\gamma = \langle u v_1, v_1 v_2, \ldots, v_{k-1} v_k \rangle$ be the maximal heavy path going out of $u$ for some $k \geq 0$, meaning that all the edges $v_i v_{i+1}$ are heavy for $0 \leq i \leq k - 1$. \pref{lem:one-heavy-child} implies that the choice of $\gamma$ is unique.

   Now using \pref{eq:Dhat-marked-DAG}, write
	\begin{align*}
\sum_{v\spc uv \in A'} \theta'_{uv} &=  \sum_{j=0}^{k-1} \sum_{\substack{y \neq v_{j+1} : \\ v_j y \in A}} \theta_{v_j y} \cdot  \prod_{\ell=0}^{j-1} \theta_{v_\ell v_{\ell + 1}} + \prod_{\ell=0}^{k-1} \theta_{v_\ell v_{\ell + 1}} \cdot  \left(\sum_{y\spc v_k y \in A} \theta_{v_k y} \right)   \\
                                    &=  \sum_{j=0}^{k-1} \sum_{\substack{y \neq v_{j+1} : \\ v_j y \in A}} \theta_{v_j y} \cdot  \prod_{\ell=0}^{j-1} \theta_{v_\ell v_{\ell + 1}} + \prod_{\ell=0}^{k-1} \theta_{v_\ell v_{\ell + 1}}  \\
                                    &= \sum_{j=0}^{k-2} \sum_{\substack{y \neq v_{j+1} : \\ v_j y \in A}} \theta_{v_j y} \cdot  \prod_{\ell=0}^{j-1} \theta_{v_\ell v_{\ell + 1}} +  \prod_{\ell=0}^{k-2} \theta_{v_\ell v_{\ell + 1}} \cdot \left(\sum_{y\spc v_{k-1} y \in A} \theta_{v_{k-1} y}\right)   \\
	 &\mathrel{\makebox[\widthof{=}]{\vdots}} \\
    &=  \sum_{\substack{y \neq v_{1} : \\ v_0 y \in A}} \theta_{v_0 y}  +  \theta_{v_0 v_{1}} \cdot \left(\sum_{y\spc v_{1} y \in A} \theta_{v_1 y}\right)  \\
	&= \sum_{y \spc v_0 y \in A} \theta_{v_0 y}  \\
	&= 1,
	\end{align*} 
	as desired.
    \end{proof}


We are now ready to prove \pref{thm:main-2}.
\begin{proof}[Proof of \pref{thm:main-2}]
	Let us show that $\tilde{\dag}$ satisfies the requirements of the theorem. \pref{lem:tilde-D-tau-geometric-marked} shows that $\tilde{\dag}$ is a $12$-geometric marked DAG.

   Now note that for $\gamma \in \cP_\dag$ we have $\overline{f(\gamma)} = \bar{\gamma}$, and thus \pref{lem:compression-distortion} in conjunction with \pref{lem:expanding} and \pref{cor:lipschitz} implies that $\tilde{D}$ is 
	$1$-expanding and $O(\log n)$-Lipschitz.
	Moreover, \pref{lem:compression-information-depth} together with \pref{lem:root-leaf-sum-bound} bounds the information depth of $\tilde{D}$.
   Finally, a bound on the combinatorial depth follows from \pref{lem:compression-combinatorial-depth} and \pref{cor:root-leaf-path-count}.
\end{proof}

\section{Algorithm and competitive analysis}
\label{sec:discrete-time}

\subsection{Discrete-time algorithm}
\label{sec:discrete-time-alg}
Let $\hat{\dag} = (\dag, w, \theta)$ be a marked DAG on $X$ with $\dag = (V, A)$.
 We now describe a generalization of the discrete-time dynamics of \cite{CL19} on $\hat{\dag}$ in response to a sequence of costs $\langle c_t : t \geq 1\rangle$, where $c_t \in \R_+^{X}$. 
 Define
\[
\cQ_\dag \seteq \left\{p \in \R_+^{A} \bigmid \forall u \in V \setminus X: \sum_{v \spc uv \in A} p_{uv} = 1 \right\}.
\]
For $q \in Q_{\dag}$ and $u \in V \setminus X$, we use $q^{(u)}$ to denote the restriction of $q$ to the subspace spanned by subset of
standard basis vectors $\{e_{uv}: uv \in A\}$, and we define
the corresponding probability simplex
$Q^{(u)}_{\dag} \seteq \{ q^{(u)} : q \in Q_{\dag} \}$. For convenience, we use $q^{(u)}_v$ for
$q^{(u)}_{uv}$.

Let $\kappa > 0$ be a normalization parameter, and let the values $\eta_{uv}$ and $\delta_{uv}$ be defined as in \pref{eq:eta-def} and \pref{eq:delta-def}.
For $u \in V \setminus X$ and $p \in \cQ_\dag^{(u)}$, define
\[
\Phi^{(u)}(p) \seteq \frac{1}{\kappa}  \sum_{v \spc uv \in A} \frac{\omega_{uv}}{\eta_{uv}}(p_{uv} + \delta_{uv}) \log(p_{uv} + \delta_{uv}),
\]
and for $p' \in \cQ_\dag^{(u)}$, denote
\[
\vvD^{(u)}\!\left(p \dmid p' \right) \seteq \vvD_{\Phi^{(u)}}\!\left(p \dmid p' \right) = \frac{1}{\kappa} \sum_{v \spc uv \in A} \frac{\omega_{uv}}{\eta_{uv}} \left[\left(p_{uv} + \delta_{uv}\right) \log \frac{p_{uv}+\delta_{uv}}{p'_{uv}+\delta_{uv}}
+ p'_{uv}-p_{uv}\right].
\]


We now define an algorithm that takes a point $q \in Q_{\dag}$ and a cost vector $c \in \R_+^{X}$ and outputs
a point $p = \cA(q,c)\in Q_{\dag}$.
Fix a topological ordering $\langle u_1,u_2,\ldots, u_N\rangle$ of $V \setminus X$ in $\dag$.
We define $p$ inductively as follows.
Denote $\hat{c}_x \seteq c_{x}$ for $x \in X$, and
for each $j=1,2,\ldots,N$:
\begin{align}
   \hat{c}^{(u_j)}_{v} &\seteq \hat{c}_v \qquad \forall v: (u_j, v) \in A \label{eq:hatc}\\
p^{(u_j)} &\seteq \argmin \left\{ \vvD^{(u_j)}\!\left(p \dmid q^{(u_j)}\right) + \llangle p, \hat{c}^{(u_j)} \rrangle \bigmid 
p \in Q^{(u_j)}_\dag \right\} \label{eq:locp} \\
\hat{c}_{u_j} &\seteq \sum_{v \spc u_j v \in A} p^{(u_j)}_{v}\,\hat{c}_v \label{eq:hatc2}
\end{align}

We will use $\flow^{\dag} : Q_{\dag} \to \cF_\dag$ for the map which sends $q \in Q_{\dag}$ to the (unique) $F = \flow^{\cD}(q) \in  \cF_\dag$ such that
\[
   F_{uv} = F_u q_{uv} \qquad \forall uv \in A.
\]
Note that $q$ contains more information than $F$; the map $\flow^{\dag}$ fails to be invertible at $F \in \cF_{\dag}$ whenever 
there is some $u \in V \setminus X$ with $F_u = 0$.
We will drop the superscript $\dag$ from $\flow^{\dag}$ whenever it is clear from context.


Now let $p_0$ be an arbitrary point in $Q_{\dag}$. Given the cost sequence $\llangle c_t : t \geq 1\rrangle$, for $t = 1, 2, \ldots$ we define
\begin{equation}
\label{eq:inductive-p}
	p_t \seteq \cA(p_{t-1}, c_t),
\end{equation}
and the associated MTS algorithm plays the distribution $\flow^{\dag}(p_t)|_X$, i.e., at every $x \in X$ (recall that
these are precisely the sinks in $\dag$), the algorithm places probability mass equal to the flow in $\flow^{\dag}(p_t)$ entering $x$.

For $c \in \R_+^{X}$ and $F \in \cF_\dag$ we define
	\[
      \llangle c, F \rrangle_X \seteq \sum_{v \in X} c_v F_v = \sum_{uv \in A \spc v \in X} c_v F_{uv}.
	\]
So the service cost of the algorithm until time $t \geq 1$ is given by
\[
   \sum_{s=1}^{t} \llangle c, \flow^{\dag}(p_s) \rrangle_X,
\]
and the movement cost is given by
\[
   \sum_{s=1}^t \vvW^1_{\hat{D}}\!\left(\flow^{\dag}(p_{s-1}),\flow^{\dag}(p_{s})\right),
\]
where we recall the $L^1$ transportation distance defined in \pref{sec:metric-compat}.

\subsection{Analysis via unfolding to an ultrametric}
Let $\hat{\dag} = (\dag, \omega, \theta)$ be a $\tau$-geometric marked DAG. As in \pref{sec:compress}, for a point $u \in V$, we define $\sigma(u)$ to denote the number of  paths in $\dag$ that start at $u$ and end at $X$.
Then if $\dag$  is a tree and furthermore, for $uv \in A$, one defines
\begin{equation}
\label{eq:theta-equals-sigma}
	\theta_{uv} \seteq \frac{\sigma(v)}{\sigma(u)},
\end{equation}
 then the algorithm of the preceding section is
 exactly the same as the one for HSTs introduced in \cite{CL19},
 as $\sigma(u)$ is precisely the number of leaves in the subtree rooted at $u$.
The next result is a restatement of \cite[Thm. 2.7]{CL19}.

\begin{theorem}[\cite{CL19}]
\label{thm:cl19}
Let $\hat{\dag}=(\dag,\omega,\theta)$ be a $\tau$-geometric marked DAG over $X$,
   for some $\tau \geq 4$, and such that $\dag$ is a tree.
   If $\theta$ is defined as in \pref{eq:theta-equals-sigma}, and
   $\hat{\dag}$ is $1$-expanding and $L$-Lipschitz,
   then there is some value $\kappa \asymp L$ and a number $\e=\e({\hat{\dag}}) > 0$ so that for any
   sequence of cost vectors $\langle c_t : t \geq 1\rangle$ satisfying $\|c_t\|_{\infty} \leq \e$,
   the MTS algorithm specified in \pref{sec:discrete-time-alg} is $1$-competitive for service costs and
   $O\!\left(L \left(\Delta_0(\dag) + \log |X|)\right)\right)$-competitive for movement costs.
\end{theorem}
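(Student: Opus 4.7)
The plan is to follow the mirror-descent framework of \cite{CL19}: since $\dag$ is a tree and $\theta_{uv} = \sigma(v)/\sigma(u)$, the marked DAG $\hat{\dag}$ is essentially a $\tau$-HST, and the algorithm in \pref{sec:discrete-time-alg} specializes to the noisy multiplicative-weights dynamics analyzed there. I would combine a local (per-internal-node) Bregman-divergence inequality with a level-by-level analysis of the $\vvW^1_{\hat{\dag}}$ movement cost.

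For the service-cost bound, fix any flow $F^* \in \cF_{\dag}$ representing a comparator state (say, an offline optimum) and let $p^* \in \cQ_\dag$ satisfy $\flow^{\dag}(p^*) = F^*$. At each internal node $u$, the update \eqref{eq:locp} is an ordinary mirror-descent step on the simplex $\cQ_\dag^{(u)}$ with regularizer $\Phi^{(u)}$ and loss vector $\hat{c}^{(u)}_t$, so first-order optimality yields
\[
   \vvD^{(u)}\!\left(p^{*(u)} \dmid p_{t+1}^{(u)}\right) - \vvD^{(u)}\!\left(p^{*(u)} \dmid p_{t}^{(u)}\right) \leq \bigl\langle \hat{c}_t^{(u)},\,p^{*(u)} - p_t^{(u)}\bigr\rangle.
\]
Weighting each local inequality by $F^*_u$ and summing over all internal nodes, the right-hand side telescopes via \eqref{eq:hatc2} to $\iprod{c_t, F^* - F_t}_X$, giving $1$-competitiveness for service costs after summing in $t$ and using that the initial global Bregman divergence is $O(1)$. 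The hypothesis $\tau \geq 4$ is used here to absorb constant-factor error terms introduced by the $\delta_{uv}$ noise inside the regularizer.

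For the movement cost, I would bound $\vvW^1_{\hat{\dag}}(F_t,F_{t+1}) \leq \sum_{uv \in A} \omega_{uv} |F_{t+1,uv} - F_{t,uv}|$ and analyze this edge-by-edge. Using the KKT conditions for \eqref{eq:locp} (the discrete analogue of \eqref{eq:flow-dynamics}), the per-step change $|p_{t+1}^{(u)}(v) - p_{t}^{(u)}(v)|$ is bounded by roughly $\kappa\,(\eta_{uv}/\omega_{uv})\,(p_{t}^{(u)}(v) + \delta_{uv})\,|\hat{c}_{uv}^{(u)} - \beta_u|$. Multiplying by $\omega_{uv}$ cancels the $1/\omega_{uv}$ factor, and the remaining weight $\eta_{uv}$, when summed along any root-to-leaf path $\gamma$, equals $\sum_{uv \in \gamma}\bigl(1 + \log(1/\theta_{uv})\bigr) \leq \Delta_0(\dag) + \log|X|$, where the bound on $\log(1/\theta(\gamma))$ follows from the telescoping product $\prod_{uv\in\gamma} \sigma(v)/\sigma(u) = 1/|X|$. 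Setting $\kappa \asymp L$ converts the aggregated service-cost gap into a movement cost, and the $L$-Lipschitz hypothesis lets me charge the algorithm's movement against the comparator's movement in $(X,d)$, yielding the $O(L(\Delta_0(\dag) + \log|X|))$ ratio.

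The hardest step is the movement-cost accounting: one must show that the local variation of $p^{(u)}_t$ can be absorbed into a single telescoping potential together with the Bregman terms, so that only the comparator's movement (and not the algorithm's) appears on the right-hand side. The noise terms $\delta_{uv} = \theta_{uv}/\eta_{uv}$ are crucial here; without them, the factor $p^{(u)}_t(v) + \delta_{uv}$ would reduce to $p^{(u)}_t(v)$, which can be zero on edges heavily used by the optimal trajectory but not yet by the algorithm, causing the movement bound to blow up. The small-cost assumption $\|c_t\|_\infty \leq \e$ is used to keep the discrete-time updates close to their continuous limit \eqref{eq:flow-dynamics}, so that the KKT linearization above is tight up to constants; the final ratio is independent of $\e$ since an arbitrary cost sequence can be subdivided into chunks of $L^\infty$-mass $\e$ without changing the competitive analysis.
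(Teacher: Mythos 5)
Note first that the paper itself does not prove this statement: it is imported verbatim from \cite[Thm.~2.7]{CL19} (the paper only proves its generalization to DAGs in \pref{sec:analysis-general-case}), so the comparison is with the CL19-style mirror-descent analysis, which your sketch follows in outline. Within that outline, however, your service-cost argument has a genuine gap: you telescope the local Bregman inequalities against a \emph{fixed} comparator $F^*$ and close with ``the initial global Bregman divergence is $O(1)$.'' In MTS the comparator is the \emph{moving} offline optimum, and $1$-competitiveness for service costs requires, in addition, controlling how the divergence changes when the comparator moves: one needs $|\vvD(F \dmid q)-\vvD(F' \dmid q)| \lesssim \kappa^{-1}\|F-F'\|_{\ell_1(\omega)}$ (the paper's \pref{lem:lipschitz}, i.e.\ CL19 Lemma~2.2), the $L$-Lipschitz hypothesis to lift the optimum's path in $(X,d)$ to a cheap path of flows, and then $\kappa \asymp L$ so that the comparator's divergence drift is dominated by its own movement cost. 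This is exactly where $L$-Lipschitzness and $\kappa\asymp L$ are consumed; your sketch instead invokes them in the movement bound, and without the drift term the service-cost claim does not follow. A smaller but real issue: for the telescoping via \eqref{eq:hatc2} to collapse to $\llangle c_t, F^*-P\rrangle_X$, the local inequality must be stated with the \emph{updated} point $p^{(u)}$ (and with the nonnegativity multipliers $\alpha^{(u)}$, later removed by complementary slackness), since $\hat{c}_u$ is defined from the new local distribution; with the old point on the right-hand side the sum does not telescope.

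On the movement cost, your plan to ``charge the algorithm's movement against the comparator's movement in $(X,d)$ via $L$-Lipschitzness'' is the wrong direction. In the actual argument the algorithm's movement is charged to its \emph{own} service cost: the KKT/per-edge bound leaves the term $\sum_{uv} Q_u\theta_{uv}(\hat{c}_v-\alpha_{uv})$, and controlling it is precisely the potential-function lemma (CL19 Lemma~2.11, the paper's \pref{lem:crucial}, with the potentials $\psi$ and $\Psi$); this yields movement at most $O(\kappa)\,(\Delta_0(\dag)+\Delta_I(\hat{\dag}))$ times the service cost, and it is the $1$-expanding property (not Lipschitzness) that converts movement in $\dist_{\hat{\dag}}$ back to movement in $(X,d)$. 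You correctly identify this absorption into a telescoping potential as the hardest step, but you do not supply it, and it is the crux of the theorem. Two of your side remarks also need correction: the $\|c_t\|_\infty\le\e$ assumption is not about staying close to the continuous-time limit but is used to replace $\langle c,Q\rangle_X$ by the actually-paid $\langle c,P\rangle_X$ in the movement bound (as in \eqref{eq:mvmty}); on the other hand, your telescoping computation $\theta(\gamma)=1/|X|$ for trees, explaining the $\Delta_0(\dag)+\log|X|$ factor, is correct.
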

Note that the condition on the $\ell_{\infty}$ norm
of the cost vectors in the above theorem is not restrictive, since as noted in \cite{CL19}, we can always split arbitrary cost vectors into smaller pieces with each satisfying the desired $\ell_\infty$ bound.

Our goal now is to show that if $\theta$ is defined as in \pref{eq:theta-equals-sigma}, then
similar guarantees as in \pref{thm:cl19} hold for the algorithm on $\hat{\dag}$, even when $\dag$ is not a tree.

\begin{theorem}
\label{thm:unfolded-DAG}
   Let $\hat{\dag}$ be a $\tau$-geometric marked DAG over $X$,
   for some $\tau \geq 4$, and such that $\theta$ is given by \pref{eq:theta-equals-sigma}.
   If $\hat{\dag}$ is $1$-expanding and $L$-Lipschitz,
   then there is some value $\kappa \asymp L$ and a number $\e=\e({\hat{\dag}}) > 0$ so that for any
   sequence of cost vectors $\langle c_t : t \geq 1\rangle$ satisfying $\|c_t\|_{\infty} \leq \e$,
   the MTS algorithm specified in \pref{sec:discrete-time-alg} is $1$-competitive for service costs and
   $O\!\left(L \left(\Delta_0(\dag) + \log |X|)\right)\right)$-competitive for movement costs.
\end{theorem}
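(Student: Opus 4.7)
The strategy is to pass from $\dag$ to its path-unfolding tree and invoke \pref{thm:cl19} there. I would construct a marked tree $\hat{\dag}^* = (\dag^*, \omega^*, \theta^*)$ whose vertex set $V^*$ consists of pairs $(u,\gamma)$ with $u \in V$ and $\gamma$ a directed path from $\rt$ to $u$ in $\dag$, whose arcs are the pairs $(u,\gamma)(v,\gamma \cdot uv)$ for $uv \in A$, and whose marks copy those of $\hat{\dag}$: $\omega^*_{(u,\gamma)(v,\gamma \cdot uv)} \seteq \omega_{uv}$ and $\theta^*_{(u,\gamma)(v,\gamma \cdot uv)} \seteq \theta_{uv}$. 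Root-to-sink paths of $\dag^*$ are in canonical bijection with $\cP_{\dag}$, and $\dist_{\hat{\dag}^*}$ agrees with $\dist_{\hat{\dag}}$ under this bijection. Since the subtree rooted at $(u,\gamma)$ has exactly $\sigma(u)$ leaves, $\theta^*$ continues to satisfy \pref{eq:theta-equals-sigma} on $\dag^*$, and $\hat{\dag}^*$ inherits the $\tau$-geometric, $1$-expanding, and $L$-Lipschitz properties from $\hat{\dag}$ (the last by lifting each witnessing flow in $\cF_{\dag}$ to the leaf-indexed distribution on $\dag^*$ with identical Wasserstein mass).

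The core of the reduction is that the two algorithms are equivalent under a ``broadcast'' embedding $\Pi \colon Q_{\dag} \to Q_{\dag^*}$ defined by $\Pi(q)^{(u,\gamma)} \seteq q^{(u)}$ for every replica $(u,\gamma)$ of $u \in V \setminus X$. I would set $p^*_0 \seteq \Pi(p_0)$, feed the tree algorithm the lifted cost sequence defined by $c^*_t(x,\gamma) \seteq c_t(x)$ on each leaf of $\dag^*$, and prove by induction on $t$ that $p^*_t = \Pi(p_t)$ and that the auxiliary costs satisfy $\hat{c}^*_{(u,\gamma)} = \hat{c}_u$ at every replica. The inductive step uses that at each vertex $(u_j,\gamma)$, the local regularizer $\Phi^{(u_j,\gamma)}$, the input $q^{*(u_j,\gamma)} = (p_{t-1})^{(u_j)}$, and the aggregated child cost $\hat{c}^{*(u_j,\gamma)}$ all depend only on $u_j$ (by the hypothesis and the recursion \pref{eq:hatc2}), so every replica solves the identical convex program \pref{eq:locp} and returns the same $(p_t)^{(u_j)}$.

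Once this synchrony is established, the two costs match exactly: the leaf mass $\flow^{\dag^*}(p^*_t)_{(x,\gamma)} = \prod_{uv \in \gamma} (p_t)^{(u)}_v$ sums over $\gamma$ ending at $x$ to $\flow^{\dag}(p_t)_x$, so service costs agree term by term, and because the leaves of $\dag^*$ are identified with $\cP_{\dag}$ carrying the same metric, the tree Wasserstein distance $\vvW^1_{\hat{\dag}^*}(\flow^{\dag^*}(p^*_{t-1}), \flow^{\dag^*}(p^*_t))$ equals the corresponding movement cost in $\hat{\dag}$. Applying \pref{thm:cl19} to $\hat{\dag}^*$ with $\Delta_0(\dag^*) = \Delta_0(\dag)$ and leaf-count $|\cP_{\dag}|$ (so that $\log|X^*| = \log|\cP_{\dag}| \leq \Delta_I(\hat{\dag})$ by \pref{eq:Delta-I-PD-bound}) then yields $1$-competitiveness for service cost and the claimed movement-cost bound.

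The obstacle that actually requires care is the synchrony induction: one must pick a topological order on $V^* \setminus X^*$ consistent with a topological order on $V \setminus X$ so that \pref{eq:hatc2} fires coherently across replicas, verify that $\Phi^{(u_j,\gamma)}$ has no genuine $\gamma$-dependence, and check that nothing in the convex program \pref{eq:locp} can distinguish different replicas of the same $u_j$. Each is essentially immediate from the construction of $\hat{\dag}^*$, but making the bookkeeping airtight is the technical heart of the argument.
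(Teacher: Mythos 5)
Your proposal is correct and follows essentially the same route as the paper: the tree $\hat{\dag}^*$ built from root-originating paths is exactly the paper's unfolded tree $\tilde{\dag}$, your broadcast embedding $\Pi$ and synchrony induction are precisely the paper's lifting $p \mapsto \tilde{p}$, $c \mapsto \tilde{c}$ and its \pref{lem:equivalent-dynamics} (showing $\cA'(\tilde{p},\tilde{c}) = \widetilde{\cA(p,c)}$, hence equal service and movement costs via the isometry between $(\cP_{\dag},\dist_{\hat{\dag}})$ and the leaf metric of the tree), after which \pref{thm:cl19} is applied with $\log|\cP_{\dag}| \leq \Delta_I(\hat{\dag})$. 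No substantive gap; your flagged bookkeeping (order-independence of the bottom-up recursion and replica-independence of the local programs) is the same content the paper labels ``straightforward.''
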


	Note that from \pref{eq:Delta-I-PD-bound} it follows that 
	\[
	\Delta_0(\dag) + \log |\cP_\dag| \leq \Delta_0(\dag) + \Delta_I(\hat{\dag}),
	\]
	and hence the above theorem together with \pref{thm:main-2} already
   gives a competitive algorithm with our desired bounds,
   though only for the specific choice of $\theta$ given by \pref{eq:theta-equals-sigma}.
   In \pref{sec:analysis-general-case}, we address the case of general $\theta$.
	
We prove \pref{thm:unfolded-DAG} via a simple reduction to \pref{thm:cl19}. 
Consider a $\tau$-geometric marked DAG $\hat{\dag} = (\dag, w, \theta)$  on $X$ with $\dag = (V, A)$. Note that $d_{\hat{\dag}}$ defines an ultrametric on $\cP_\dag$. We show that the dynamics on $\hat{\dag}$ are ``equivalent'' to the dynamics on the HST corresponding to the ultrametric $(\cP_\dag, d_{\hat{\dag}})$. More precisely, let us construct the $\tau$-geometric marked tree
 $\tilde{\dag} = (\dag', w', \theta')$ 
 with $\dag' = (V', A')$  as follows. We define $V'$ as the set of
 (directed) paths is $\dag$ originating from the root. 
 Furthermore, we  connect $\gamma \in V'$ to $\gamma' \in V'$ whenever $\gamma'$ is formed by adding the edge $\bar{\gamma'}$ to $\gamma$, and set
\[
\omega_{\gamma \gamma'} \seteq \omega_{\bar{\gamma'}}, \qquad \theta_{\gamma \gamma'} \seteq \theta_{\bar{\gamma'}}.
\]
One can verify that $\tilde{\dag}$ is a $\tau$-geometric marked tree over $\cP_\dag$. 
Moreover, since $\dag'$ is a tree, there is a natural identification between the elements of $\cP_\dag$ and $\cP_{\dag'}$ so that for $\gamma, \gamma' \in \cP_\dag$ it holds that
\begin{equation}
\label{eq:Dhat-Dtilde-isometric}
	d_{\hat{\dag}}(\gamma, \gamma') = d_{\tilde{\dag}}(\gamma, \gamma').
\end{equation}
Now for $p \in \cQ_\dag$, define $\tilde{p} \in Q_{\dag'}$ to be the natural extension of $p$ in $\dag'$ so that for $\gamma \gamma' \in A'$
one has $\tilde{p}_{\gamma \gamma'} = p_{\bar{\gamma'}}$. 
Furthermore, for a cost sequence $c \in \R_+^{X}$  define its extension $\tilde{c} \in \R^{\cP_\dag}$ as the vector with $\tilde{c}_\gamma = c_{\bar{\gamma}}$ for $\gamma \in \cP_\dag$. Finally, let $\cA$ denote the single-step discrete dynamics on $\hat{\dag}$ as defined in \pref{sec:discrete-time-alg}, and similarly let $\cA'$ denote the discrete dynamics on $\tilde{\dag}$.  Then the following lemma is straightforward.
\begin{lemma}
\label{lem:equivalent-dynamics}
	Let $p \in \cQ_\dag$, $c \in \R_+^X$. Then it holds that 
	\begin{equation}
\label{eq:service-cost-equiv}
	\langle \flow^{(\dag)}(p), c \rangle_X = \langle \flow^{(\dag')}(\tilde{p}), \tilde{c} \rangle_{\cP_\dag}.
\end{equation}
Furthermore, for $q = \cA(p, c)$ we have
	\begin{equation}
	\label{eq:Apq-equiv}
		\cA'(\tilde{p}, \tilde{c}) = \tilde{q}.
	\end{equation}

\end{lemma}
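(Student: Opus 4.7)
The plan is to exploit the natural correspondence between vertices of $\dag'$ and directed paths from the root in $\dag$: if $\gamma \in V'$ ends at the vertex $u = \bar{\gamma} \in V$, then the outgoing arcs of $\gamma$ in $\dag'$ are in bijection with the outgoing arcs of $u$ in $\dag$, with $\gamma' \leftrightarrow v$ whenever $\gamma'$ extends $\gamma$ by the arc $uv \in A$. By construction of $\tilde{\dag}$, the parameters $\omega, \theta, \eta, \delta$ and the initial value $\tilde{p}^{(\gamma)}_{\gamma'} = p^{(u)}_v$ all agree under this bijection.

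For the service cost identity \eqref{eq:service-cost-equiv}, since $\dag'$ is a tree rooted at the trivial path, a direct induction along $\dag'$ gives $\flow^{(\dag')}(\tilde{p})_\gamma = \prod_{uv \in \gamma} p_{uv}$ for every sink $\gamma \in \cP_\dag$ of $\dag'$, while on $\dag$ an induction in topological order gives $\flow^{(\dag)}(p)_x = \sum_{\gamma \in \cP_\dag \spc \bar{\gamma} = x} \prod_{uv \in \gamma} p_{uv}$. Substituting these expressions into the two inner products and interchanging the sum over $x \in X$ with the sum over paths $\gamma$ ending at $x$ yields the identity at once, using only that $\tilde{c}_\gamma = c_{\bar{\gamma}}$.

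For the dynamics identity \eqref{eq:Apq-equiv}, let $\hat{c}$ and $\hat{c}'$ denote the back-propagated cost vectors produced by $\cA$ on $\hat{\dag}$ and by $\cA'$ on $\tilde{\dag}$, respectively, as in \eqref{eq:hatc}--\eqref{eq:hatc2}. The key claim is that $\hat{c}'_\gamma = \hat{c}_{\bar{\gamma}}$ for every $\gamma \in V'$, which we prove by induction along the topological order of $\dag'$, processing sinks first. The base case is immediate from $\tilde{c}_\gamma = c_{\bar{\gamma}}$. At a non-sink $\gamma$ with $\bar{\gamma} = u \in V \setminus X$, the inductive hypothesis identifies $(\hat{c}')^{(\gamma)}$ with $\hat{c}^{(u)}$ under the arc bijection; combined with the fact that all parameters entering the local Bregman divergence $\vvD^{(\gamma)}(\cdot \dmid \tilde{p}^{(\gamma)})$ coincide with those of $\vvD^{(u)}(\cdot \dmid p^{(u)})$, the local minimization \eqref{eq:locp} at $\gamma$ in $\tilde{\dag}$ becomes literally the same convex program as the one at $u$ in $\hat{\dag}$. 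Its unique minimizer therefore satisfies $\cA'(\tilde{p},\tilde{c})^{(\gamma)}_{\gamma'} = q^{(u)}_v = \tilde{q}^{(\gamma)}_{\gamma'}$, which establishes \eqref{eq:Apq-equiv}, and taking one more back-propagation step closes the induction via $\hat{c}'_\gamma = \sum_v q^{(u)}_v \hat{c}_v = \hat{c}_u$.

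The main obstacle is purely notational: carefully tracking the bijection between arcs of $\dag$ adjacent to $u$ and arcs of $\dag'$ leaving any path $\gamma$ with $\bar{\gamma}=u$, and verifying that every parameter entering the Bregman divergence and the simplex constraint transfers under this bijection. Once this correspondence is fixed, both statements reduce to straightforward inductions; no genuinely new estimate beyond the construction of $\tilde{\dag}$ from $\hat{\dag}$ is needed.
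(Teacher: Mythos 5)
Your proposal is correct, and it is exactly the argument the paper has in mind: the paper states this lemma without proof (calling it straightforward), and the intended justification is precisely the arc-by-arc correspondence you describe, under which the path-product formula for the flows gives \eqref{eq:service-cost-equiv} and the local convex programs \eqref{eq:locp} at $\gamma\in V'$ and at $\bar{\gamma}\in V$ coincide, so a sinks-first induction on the back-propagated costs gives \eqref{eq:Apq-equiv}. Nothing in your write-up deviates from or adds to what the paper's reduction requires.
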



We are now ready to prove the main result of this section.
\begin{proof}[Proof of \pref{thm:unfolded-DAG}]
Let $p_0 \in \cQ_\dag$ and $q_0 \in Q_{\dag'}$ with $q_0 = \tilde{p_0}$. Given the cost sequence $\llangle c_t : t \geq 1\rrangle$, for $t \geq 1$  let 
\[
p_t = \cA(p_{t-1}, c_t)
\]
and
\[
q_t = \cA'(q_{t-1}, \tilde{c}_t).
\]
Then by repeatedly applying \pref{eq:Apq-equiv} we get that for $t \geq 1$ we have $q_t = \tilde{p_t}$.
Therefore from \pref{eq:service-cost-equiv} and \pref{eq:Dhat-Dtilde-isometric} it follows that the service and movement costs of the dynamics on $\hat{\dag}$ and $\tilde{\dag}$ are equal.
Hence the competitiveness guarantees for the dynamics on $\hat{\dag}$ follow from an application of \pref{thm:cl19} to the dynamics on $\tilde{\dag}$, completing the proof.
\end{proof}

\subsection{Analysis of the general case}
\label{sec:analysis-general-case}

We now prove \pref{thm:main-1} via a relatively straightforward generalization of the analysis in \cite{CL19}.
Let $\hat{\dag} = (\dag, w, \theta)$ be a $\tau$-geometric marked DAG with $\tau \geq 4$, and consider the mirror descent dynamics on $\hat{\dag}$ described in \pref{sec:discrete-time-alg}.

For a unit flow $F \in \cF_\dag$
 and $q \in \cQ_\dag$, 
define the global divergence function
\[
\vvD(F \dmid q) \seteq   \frac{1}{\kappa} \sum_{uv \in A} \frac{\omega_{uv}}{\eta_{uv}} \left[ \left(F_{uv} + F_u \delta_{uv}\right) \log \frac{\frac{F_{uv}}{F_u}+\delta_{uv}}{q_{uv}+\delta_{uv}}
+ F_u q_{uv}-F_{uv}\right],
\]
with the convention that $0 \log \left(\frac{0}{0}+\delta_v\right) = \lim_{\eps \to 0} \eps \log \left(\frac{0}{\eps}+\delta_v\right) = 0$.
We further define the norm $\ell_1(\omega)$ as 
\[
\|F\|_{\ell_1(\omega)} = \sum_{uv \in A} \omega_{uv} |F_{uv}|.
\]
\begin{observation}
\label{obs:wasserstein-l1-equiv}
	For $F, F' \in \cF_\dag$ it holds that
\[
\frac{1}{2}\|F - F'\|_{\ell_1(\omega)} \leq \vvW_{\hat{\dag}}^1(F, F') \leq \|F - F'\|_{\ell_1(\omega)}.
 \]
\end{observation}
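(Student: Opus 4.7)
The plan is to exploit the identity
\[
\dist_{\hat{\dag}}(\gamma_1, \gamma_2) \;=\; \max_{e\in\gamma_1\triangle\gamma_2}\omega_e,
\]
which holds because arc weights strictly decrease along every path in $\dag$, so the two arcs leaving the first divergence vertex dominate all subsequent weights on their respective tails. Summing each of those tails as a geometric series under the $\tau$-geometric hypothesis yields
\[
\max_{e\in\gamma_1\triangle\gamma_2}\omega_e \;\leq\; \sum_{e\in\gamma_1\triangle\gamma_2}\omega_e \;\leq\; \frac{2\tau}{\tau-1}\max_{e\in\gamma_1\triangle\gamma_2}\omega_e,
\]
which is the source of both directions of the sandwich, with the stated constant $\tfrac12$ absorbing the $O(1/\tau)$ correction from the tail.

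For the upper bound I would exhibit a coupling $\pi$ on $\cP_\dag\times\cP_\dag$ whose marginals are path decompositions of $F$ and $F'$ and that satisfies $\Pr_\pi[e\in\gamma\cap\gamma'] = \min(F_e, F'_e)$ for every arc $e \in A$. Such a $\pi$ gives $\Pr_\pi[e\in\gamma\triangle\gamma'] = |F_e - F'_e|$ and hence
\[
\vvW^1_{\hat{\dag}}(F,F') \;\leq\; \E_\pi\!\left[\dist_{\hat{\dag}}(\gamma,\gamma')\right] \;\leq\; \E_\pi\!\left[\sum_{e\in\gamma\triangle\gamma'}\omega_e\right] \;=\; \|F-F'\|_{\ell_1(\omega)}.
\]
I would build $\pi$ inductively in a topological order on $V$: at every interior vertex $u$, I allocate $\min(F_{uv}, F'_{uv})$ units of joint outflow to the diagonal arc pair $(uv,uv)$ for each child $v$ and match the residual surplus against the residual deficit on opposite arcs. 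This is the main obstacle of the proof: in a tree it is the standard per-node greedy matching, but for a DAG the joint outflow at $u$ is an already-fixed mixture arriving from multiple parents, and it is not obvious that a single coupling achieves the diagonal identity at every arc simultaneously. The cleanest workaround I see is to lift $F, F'$ to unit flows on the tree unfolding $\tilde\dag$ from the preceding subsection, run the tree construction there, and project back---the projection is an isometry of the ultrametric by \pref{eq:Dhat-Dtilde-isometric} and preserves the per-arc diagonal condition.

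For the lower bound, any coupling $\pi$ obeys $\Pr_\pi[e\in\gamma\cap\gamma'] \leq \min(F_e, F'_e)$ and so $\Pr_\pi[e\in\gamma\triangle\gamma'] \geq |F_e - F'_e|$ for each $e$. Combining with $\max\geq\tfrac12\sum$ (strictly, $\frac{\tau-1}{2\tau}\sum$) from the sandwich above,
\[
\vvW^1_{\hat{\dag}}(F,F') \;=\; \inf_\pi \E_\pi\!\left[\max_{e\in\gamma\triangle\gamma'}\omega_e\right] \;\geq\; \tfrac12 \inf_\pi \E_\pi\!\left[\sum_{e\in\gamma\triangle\gamma'}\omega_e\right] \;\geq\; \tfrac12\|F-F'\|_{\ell_1(\omega)}.
\]
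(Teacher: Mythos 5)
The paper states this as an unproved observation, so there is no official proof to compare against; judged on its own terms, your write-up has one constant-level slip and one genuine gap. The slip is in the lower bound: your (correct) estimates give $\dist_{\hat{\dag}}(\gamma,\gamma')=\max_{e\in\gamma\triangle\gamma'}\omega_e\ge\frac{\tau-1}{2\tau}\sum_{e\in\gamma\triangle\gamma'}\omega_e$, hence $\vvW^1_{\hat{\dag}}(F,F')\ge\frac{\tau-1}{2\tau}\|F-F'\|_{\ell_1(\omega)}$, and the $O(1/\tau)$ correction cannot be ``absorbed'': for two unit flows supported on a single pair of paths that diverge at the root and never re-meet, with weights decaying at rate exactly $\tau$, one has $\vvW^1_{\hat{\dag}}(F,F')=\dist_{\hat{\dag}}(\gamma,\gamma')$ while $\|F-F'\|_{\ell_1(\omega)}$ approaches $\frac{2\tau}{\tau-1}$ times that, so the factor $\frac12$ is not attainable by this route (the honest constant $\frac{\tau-1}{2\tau}\ge\frac38$ is harmless downstream, but you should state it rather than assert $\frac12$).

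The genuine gap is the upper bound, at exactly the point you flag as the main obstacle. The tree-unfolding workaround does not close it. Even on a tree, matching ``residual surplus against residual deficit'' arbitrarily does not produce the diagonal identity at deeper edges (which residual mass is retained in each subtree matters), and, more importantly, what the tree construction controls are the tree-edge events: it can guarantee $\Pr[\text{both paths use }\tilde e]=\min(\tilde F_{\tilde e},\tilde F'_{\tilde e})$ per tree edge $\tilde e$, which after projection guarantees for a DAG arc $e$ only $\sum_{\tilde e\mapsto e}\min(\tilde F_{\tilde e},\tilde F'_{\tilde e})$, and this is in general strictly smaller than $\min(F_e,F'_e)$ (sum of minima versus minimum of sums); the produced coupling need not satisfy your per-arc condition at all. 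Concretely, in a diamond $r\to a\to m$, $r\to b\to m$, $m\to x$, $m\to y$, with $F$ routed entirely through $a$ and $F'$ entirely through $b$ and both splitting $\tfrac12/\tfrac12$ at $m$, all mass is cross-matched at the root and nothing in your recipe prevents pairing the $x$-continuation of one flow with the $y$-continuation of the other, so $\Pr[\text{both use }mx]=0$ while $\min(F_{mx},F'_{mx})=\tfrac12$. Correspondingly, the inequality the tree route actually yields is $\vvW^1_{\hat{\dag}}(F,F')\le\|\tilde F-\tilde F'\|_{\ell_1(\omega')}$, and the tree norm \emph{dominates} $\|F-F'\|_{\ell_1(\omega)}$ (it is blind to cancellation when the same arc is reached through different prefixes, which is precisely the content of the DAG statement), so the claimed bound does not follow without an additional comparison between the two norms, which you do not supply. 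A correct argument has to build the coupling on $\dag$ itself --- e.g.\ along a topological order, allowing coupled paths that arrive at a merge vertex via different prefixes to be re-joined there so that enough ``both at the same vertex'' mass is maintained --- or else prove a tree-versus-DAG norm comparison; as written, the upper bound is not established.
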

The next lemma lets us bound the amount of change of the global divergence when the offline algorithm makes a movement.
\begin{lemma}[{\cite[Lemma~2.2]{CL19}}]
\label{lem:lipschitz}
	For flows $F, F' \in \cF_\dag$ and $q \in \cQ_\dag$ we have
	\[
	|\vvD(F \dmid q) - \vvD(F' \dmid q)| \leq \frac{1}{\kappa} (2 + \frac{4}{\tau}) \|F- F'\|_{\ell_1(\omega)}
	\]
\end{lemma}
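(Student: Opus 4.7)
The plan is to write the difference as an integral of the directional derivative along the line segment
from $F'$ to $F$ in $\cF_\dag$, and to bound the integrand pointwise. Concretely,
\[
\vvD(F \dmid q) - \vvD(F' \dmid q) = \int_0^1 \sum_{uv \in A} (F_{uv}-F'_{uv})\, \frac{\partial \vvD(F_t\dmid q)}{\partial F_{uv}}\,dt,
\]
where $F_t \seteq (1-t) F' + t F$ is also a unit flow in $\dag$. The goal is to show that
the integrand is bounded by $\frac{1}{\kappa}(2+\tfrac{4}{\tau}) \|F-F'\|_{\ell_1(\omega)}$ at every $t \in [0,1]$.

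The first step is to compute the partial derivative carefully, keeping in mind that the term for each arc $u'v' \in A$ depends on $F$
through both $F_{u'v'}$ and $F_{u'} = \sum_{w \spc u'w \in A} F_{u'w}$. A direct calculation using $\partial_a \big[(a+b\delta)\log(a/b+\delta)\big] = \log(a/b+\delta)+1$ and $\partial_b \big[(a+b\delta)\log(a/b+\delta)\big] = \delta \log(a/b+\delta)-a/b$, together with the fact that the residual terms $-F_{uw}/F_u + q_{uw}$ sum to zero over $w$ with $uw \in A$, yields
\[
\kappa\,\frac{\partial \vvD(\cdot\dmid q)}{\partial F_{uv}} \;=\; \frac{\omega_{uv}}{\eta_{uv}}\,\log\!\left(\tfrac{F_{uv}}{F_u}+\delta_{uv}\right) \;+\; \sum_{w \spc uw \in A} \frac{\omega_{uw}}{\eta_{uw}}\,\delta_{uw}\,\log\!\left(\tfrac{F_{uw}}{F_u}+\delta_{uw}\right).
\]

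The next step is the key logarithm bound. Using $\delta_{uv}=\theta_{uv}/\eta_{uv}$ and $\eta_{uv}=1+\log(1/\theta_{uv})$, one verifies that
$\bigabs{\log(F_{uv}/F_u+\delta_{uv})} \leq 2\eta_{uv}$ for every arc $uv \in A$, since the argument lies in $[\delta_{uv},2]$ and $|\log \delta_{uv}| \leq \log\eta_{uv} + \log(1/\theta_{uv}) \leq 2\eta_{uv}$. Applied to the \emph{direct} term, this gives
\[
\sum_{uv \in A} |F_{uv}-F'_{uv}|\cdot\frac{1}{\kappa}\frac{\omega_{uv}}{\eta_{uv}}\cdot 2\eta_{uv} \;=\; \frac{2}{\kappa}\,\|F-F'\|_{\ell_1(\omega)}.
\]

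The main obstacle, and the place where the $\tau$-geometric property enters, is the \emph{cross} term
$\sum_{uv}(F_{uv}-F'_{uv})\sum_{w \spc uw \in A}\tfrac{\omega_{uw}}{\eta_{uw}}\delta_{uw}\log(F_{uw}/F_u+\delta_{uw})$. The plan is to swap the order of summation by grouping on $u$: the inner coefficient depends only on $u$, so the outer sum collapses to $\sum_u (F_u - F'_u)\,G(u)$, where $G(u) \seteq \sum_{w \spc uw\in A} \tfrac{\omega_{uw}}{\eta_{uw}}\delta_{uw}\log(F_{uw}/F_u+\delta_{uw})$. Since $F$ and $F'$ are both unit flows in $\dag$, flow conservation rewrites $F_u-F'_u = \sum_{v \spc vu \in A}(F_{vu}-F'_{vu})$, so the cross term becomes $\sum_{vu \in A}(F_{vu}-F'_{vu})\,G(u)$. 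Using the $2\eta_{uw}$ logarithm bound, $\eta_{uw}\delta_{uw}=\theta_{uw}$, and $\sum_{w}\theta_{uw}=1$, one gets $|G(u)| \leq 2\max_{w \spc uw\in A} \omega_{uw}$. By the $\tau$-geometric property, every such $\omega_{uw}$ is at most $\omega_{vu}/\tau$, so $|G(u)| \leq 2\omega_{vu}/\tau$, which bounds the cross term by $\frac{2}{\kappa\tau}\|F-F'\|_{\ell_1(\omega)}\cdot 2 = \frac{4}{\kappa\tau}\|F-F'\|_{\ell_1(\omega)}$. Adding the two contributions completes the proof. The main subtlety is precisely this telescoping through flow conservation: it converts a sum weighted by out-arcs of $u$ into a sum weighted by in-arcs of $u$, at which point the scale separation $\omega_{uw}\le\omega_{vu}/\tau$ is available.
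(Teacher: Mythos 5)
The paper does not prove this lemma itself (it is imported from \cite{CL19}), so I am judging your argument on its own terms. Your overall strategy is sound: integrating the gradient along the segment $F_t=(1-t)F'+tF$, bounding the ``direct'' contribution by $\tfrac{2}{\kappa}\|F-F'\|_{\ell_1(\omega)}$, and converting the $F_u$-dependence into a sum over in-arcs via flow conservation so that the $\tau$-geometric property $\omega_{uw}\le\omega_{vu}/\tau$ can be applied, is exactly the right mechanism and does yield the lemma.

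However, your gradient formula is wrong, and the error is not cosmetic. Differentiating the arc-$uw$ term of $\vvD(\cdot\dmid q)$ with respect to $F_u$ gives $\tfrac{\omega_{uw}}{\eta_{uw}}\bigl[\delta_{uw}\log\tfrac{F_{uw}/F_u+\delta_{uw}}{q_{uw}+\delta_{uw}}+q_{uw}-\tfrac{F_{uw}}{F_u}\bigr]$; the residuals $q_{uw}-F_{uw}/F_u$ do sum to zero over $w$, but they appear weighted by the arc-dependent factors $\omega_{uw}/\eta_{uw}$, so they do \emph{not} cancel and cannot be dropped. (You also dropped the $q_{uw}+\delta_{uw}$ denominators inside the logarithms; that part is harmless for the estimate, since $\bigl|\log\tfrac{x+\delta_{uv}}{q+\delta_{uv}}\bigr|\le\log\tfrac{1+\delta_{uv}}{\delta_{uv}}\le 2\eta_{uv}$ for $x,q\in[0,1]$, but the displayed identity is not the actual gradient.) The fix is to carry the residuals into $G(u)$: using $\eta_{uw}\ge 1$ and that both $q^{(u)}$ and $(F_{uw}/F_u)_w$ are probability vectors, $\sum_{w}\tfrac{\omega_{uw}}{\eta_{uw}}\bigl|q_{uw}-\tfrac{F_{uw}}{F_u}\bigr|\le 2\max_{w}\omega_{uw}$, while the $\delta$-weighted log part is at most $2\max_w \omega_{uw}$ as you argued, giving $|G(u)|\le 4\max_{w:uw\in A}\omega_{uw}\le \tfrac{4}{\tau}\omega_{vu}$ for every in-arc $vu$. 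The telescoping step then bounds the cross term by $\tfrac{4}{\kappa\tau}\|F-F'\|_{\ell_1(\omega)}$, and this is precisely where the $4/\tau$ in the stated constant comes from; in your write-up the missing residuals are silently compensated by an unexplained extra factor $2$ in the final cross-term estimate ($\tfrac{2}{\kappa\tau}\|F-F'\|_{\ell_1(\omega)}\cdot 2$), which as written has no justification. A minor additional point: you should note that points of the segment with $(F_t)_u=0$ are handled by the paper's convention $0\log(\tfrac{0}{0}+\delta)=0$ and a limiting/continuity argument, since the gradient computation presumes $F_u>0$.
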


Suppose  $q \in \cQ_\dag$, $p = \cA(q, c)$, and further let $Q = \mu(q), P = \mu(p)$. 
The KKT conditions  for \pref{eq:locp} give:  For every $uv \in A$,
\begin{equation}\label{eq:kkt2}
\frac{1}{\kappa} \frac{\omega_{uv}}{\eta_{uv}} \log \left(\frac{p_{uv}+\delta_{uv}}{q_{uv}+\delta_{uv}}\right) = \beta_u - \hat{c}_{v} + \alpha_{uv}\,,
\end{equation}
where $\alpha_{uv}$ is the Lagrange multipliers corresponding to the nonnegativity constraints in \pref{eq:locp}, $\beta_{u} \geq 0$ is the multiplier corresponding to the constraint $\sum_{v: uv \in A} q_{uv} \geq 1$, and $\hat{c}$ is defined as in \pref{eq:hatc2}. 
Note that as in \cite{CL19} the nonnegativity multipliers are unique
and thus well-defined here.
The complementary slackness conditions give us
\begin{equation}\label{eq:posmult}
\alpha_{uv} > 0 \implies p_{uv} = 0.
\end{equation}
We use $\alpha^{(u)}$ to denote the restriction of $\alpha$ to the subspace spanned by $\{e_{uv}: uv \in A\}$.


The following two lemmas, which allow us to bound the service cost and the movement cost of the algorithm, respectively, are the main ingredients in the proof of \pref{thm:main-2}.
\begin{lemma}\label{lem:sc}
	 It holds that
	\[
	\vvD(F \dmid p) - \vvD(F \dmid q) \leq \llangle c, F - P \rrangle_{X}.
	\] 
\end{lemma}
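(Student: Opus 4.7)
My plan is to prove the lemma by reducing the global divergence change to a sum of local changes, one at each internal node $u$, and then applying a standard three-point identity together with the KKT conditions at each $u$.

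First, I would observe that the global divergence admits the pointwise decomposition
\[
   \vvD(F \dmid q) \;=\; \sum_{u \in V \setminus X} F_u\, \vvD^{(u)}\!\left(p^F \dmid q^{(u)}\right),
\]
where $p^F \in \cQ_\dag$ is defined by $p^F_{uv} \seteq F_{uv}/F_u$ whenever $F_u > 0$ (and arbitrarily otherwise, with the usual convention $0\log0=0$). This reduces the claim to a local progress statement at each $u$.

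Next, I apply the three-point identity for Bregman divergences,
\[
   \vvD^{(u)}(p^F \dmid p^{(u)}) - \vvD^{(u)}(p^F \dmid q^{(u)})
   \;=\; \Iprod{\nabla \Phi^{(u)}(q^{(u)}) - \nabla \Phi^{(u)}(p^{(u)}),\, p^F - p^{(u)}}
   \;-\; \vvD^{(u)}(p^{(u)} \dmid q^{(u)}),
\]
drop the last (nonnegative) term, and use \pref{eq:kkt2} to replace
\[
   \bigl(\nabla \Phi^{(u)}(p^{(u)}) - \nabla \Phi^{(u)}(q^{(u)})\bigr)_{uv} \;=\; \beta_u - \hat{c}_v + \alpha_{uv}.
\]
Since $\sum_{v \spc uv\in A}(p^F_{uv}-p^{(u)}_v) = 0$, the $\beta_u$ contribution vanishes. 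The $\alpha_{uv}$ contribution is handled by complementary slackness \pref{eq:posmult}: $\alpha_{uv} p^{(u)}_v = 0$, so $\sum_v \alpha_{uv}(p^F_{uv}-p^{(u)}_v) = \sum_v \alpha_{uv} p^F_{uv} \geq 0$, producing an inequality in the correct direction. What survives is
\[
   \vvD^{(u)}(p^F \dmid p^{(u)}) - \vvD^{(u)}(p^F \dmid q^{(u)})
   \;\leq\; \Iprod{\hat{c}^{(u)},\, p^F - p^{(u)}}.
\]

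Finally, multiplying by $F_u$ and summing over $u \in V \setminus X$, I need to identify the right-hand side with $\llangle c, F - P \rrangle_X$. The term $\sum_u F_u \Iprod{\hat{c}^{(u)}, p^F} = \sum_{uv \in A} \hat{c}_v F_{uv}$, while $F_u \Iprod{\hat{c}^{(u)}, p^{(u)}} = F_u \hat{c}_u$ by \pref{eq:hatc2}. Splitting the first sum at sinks versus internal vertices and telescoping against $\sum_u F_u \hat{c}_u$ via flow conservation \pref{eq:flow-def} collapses the interior terms and leaves exactly $\llangle c, F\rrangle_X - \hat{c}_{\rt}$. An easy induction on topological order (using \pref{eq:hatc2} and $P_{uv} = P_u p^{(u)}_v$) shows $\hat{c}_{\rt} = \sum_{x \in X} P_x c_x = \llangle c, P\rrangle_X$, completing the proof. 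The main technical obstacle is this last telescoping identification: the flow $F$ and the induced flow $P$ are governed by different splitting distributions, and one must carefully separate sinks from internal vertices and use flow conservation of $F$ to cancel the internal contributions of $\hat{c}_v$ against $F_u \hat{c}_u = F_u \sum_v p^{(u)}_v \hat{c}_v$.
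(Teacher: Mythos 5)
Your proof is correct and follows essentially the same route as the paper: decompose the global divergence into $F_u$-weighted local divergences, apply a local Bregman/KKT inequality at each $u \in V \setminus X$, and identify the summed right-hand side with $\llangle c, F-P\rrangle_X$ via flow conservation and $\hat{c}_{\rt} = \llangle c,P\rrangle_X$. The only difference is cosmetic: where the paper invokes \pref{eq:bregman-improvement-u} (quoted from \cite{CL19}) and handles the $\alpha^{(u)}$ terms in the global sum, you re-derive that local inequality from the three-point identity, \pref{eq:kkt2}, and \pref{eq:posmult}, absorbing the multiplier bookkeeping at the local level.
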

Define $\omega_{\min} \seteq \min_{uv \in A} \{ \omega_{uv} \}$
and
\[
\eps_\dag \seteq \frac{\omega_{\min}}{2 (2\Delta_0(\dag)+\Delta_I(\hat{\dag}))} \frac{\tau-3}{\tau\kappa}.
\]
Furthermore, for $F \in \cP_\dag$ and $r \in \cQ_\dag$ define
\begin{align*}
	\psi(F) \seteq \sum_{uv \in A} \omega_{uv} F_{uv}
\end{align*}
and
\begin{align*}
   \Psi_u(r) &\seteq - \flow^{\dag}(r)_u\,\vvD^{(u)}\!\left(\theta^{(u)} \dmid r^{(u)}\right) \\
\Psi(r) &\seteq \sum_{u \in V \setminus X} \Psi_u(r).
\end{align*}
\begin{lemma}\label{lem:movement-analysis}
	For any $Z \in \cF_\dag$:
	\begin{align}
	\kappa^{-1} \left\|Q-P\right\|_{\ell_1(\omega)} &\leq [\psi(Y)-\psi(X)] + \frac{2 \tau}{\tau-3} \left([\Psi(q)-\Psi(p)] + (2\Delta_0(\dag) + \Delta_I(\hat{\dag})) \langle c, Q\rangle_{X}\right).\label{eq:mvmtx}
	\end{align}
	Moreover, if $\|c\|_{\infty} \leq \eps_\dag$, then
	\begin{equation}\label{eq:mvmty}
	\kappa^{-1} \left\|Q-P\right\|_{\ell_1(\omega)} \leq [\psi(Y)-\psi(X)] + \frac{4 \tau}{\tau-3} \left( [\Psi(q)-\Psi(p)] + (2\Delta_0(\dag) + \Delta_I(\hat{\dag})) \langle c, P\rangle_{X}\right).
	\end{equation}
\end{lemma}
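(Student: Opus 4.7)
The plan is to adapt the movement-cost analysis of \cite{CL19} for HSTs to the DAG setting, proceeding node-by-node using the KKT conditions \eqref{eq:kkt2}. Up to the constant in \pref{obs:wasserstein-l1-equiv}, the left-hand side $\kappa^{-1}\|Q-P\|_{\ell_1(\omega)}$ is the one-step movement cost of the algorithm. The strategy is to split the flow change at each internal node $u \in V \setminus X$ into a \emph{local} contribution from the conditional distribution $q^{(u)} \to p^{(u)}$ and a \emph{mass} contribution from the flow value $F_u$ going through $u$: from the factorization $F_{uv} = F_u q_{uv}$ one has
\[
|Q_{uv}-P_{uv}| \leq Q_u\,|q_{uv}-p_{uv}| + p_{uv}\,|Q_u - P_u|,
\]
and after weighting by $\omega_{uv}$ and summing, the mass contribution telescopes through the DAG by flow conservation and is absorbed into the potential difference $\psi(Y)-\psi(X)$, while the conditional contribution is what the KKT conditions control directly.

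For the local contribution I would multiply both sides of \eqref{eq:kkt2} by $q_{uv}-p_{uv}$, sum over $v$ with $uv \in A$, and use complementary slackness \eqref{eq:posmult} together with the simplex constraints $\sum_v q_{uv}=\sum_v p_{uv}=1$ to eliminate the $\alpha_{uv}$ and $\beta_u$ multipliers. The resulting identity equates a weighted log-ratio sum with $\sum_v(p_{uv}-q_{uv})\hat c_v$. A Pinsker-type inequality applied to the $\delta$-shifted probability vectors converts the log-ratio sum into a lower bound featuring the weighted $\ell_1$ distance $\sum_v \omega_{uv}|q_{uv}-p_{uv}|$ minus a difference of local Bregman divergences; changing the reference from $q^{(u)}$ to $\theta^{(u)}$ via the three-point identity, then multiplying by $F_u$ and summing over $u\in V\setminus X$, turns these local divergences into exactly $\Psi(q)-\Psi(p)$ and produces the coefficient $\tfrac{2\tau}{\tau-3}$ on the RHS (the $\tau/(\tau-3)$ being a geometric-series sum of $\tau^{-k}$ along $\tau$-geometric paths, which requires $\tau\geq 4$).

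For the cost term I would unroll the recursive definition \eqref{eq:hatc2} so that every $\hat c_{uv}$ is expressed in terms of the leaf costs $c_x$, weighted along root-to-leaf paths by products of the conditional probabilities $F_{vw}/F_v$. Telescoping the $\omega_{uv}/\eta_{uv}$ weights along a path $\gamma$ via $\omega_{uv}/\omega_{vw}\geq \tau$, and summing $1/\eta_{uv}$ with the bounds $\sum_{uv\in\gamma}\log(1/\theta_{uv})\leq \Delta_I(\hat{\dag})$ and $|\gamma|\leq \Delta_0(\dag)$, extracts the coefficient $2\Delta_0(\dag)+\Delta_I(\hat{\dag})$ in front of $\langle c,Q\rangle_X$; this yields \eqref{eq:mvmtx}. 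The second bound \eqref{eq:mvmty} follows from the first by writing $\langle c,Q\rangle_X\leq \langle c,P\rangle_X+\|c\|_\infty \omega_{\min}^{-1}\|Q-P\|_{\ell_1(\omega)}$: the definition of $\eps_\dag$ is tuned so that when $\|c\|_\infty\leq \eps_\dag$ the extra $\|Q-P\|_{\ell_1(\omega)}$ contribution can be absorbed into the LHS by rearranging, and this absorption doubles the coefficient on the $\Psi$ and cost terms. The main obstacle I expect is the DAG bookkeeping for the cost accumulation: in a tree each leaf is reached along a unique root-to-leaf path, while in a DAG the contributions from distinct paths must be combined through the flow weights $F_{vw}/F_v$, and one must verify that the total accumulated weight along each path is still controlled by $2\Delta_0(\dag)+\Delta_I(\hat{\dag})$ uniformly in $F$.
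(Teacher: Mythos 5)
Your overall skeleton matches the paper's in outline: per-node use of the KKT conditions \pref{eq:kkt2}, the split $Q_{uv}-P_{uv}=Q_u(q_{uv}-p_{uv})+p_{uv}(Q_u-P_u)$, a $\tau$-geometric absorption, path-unrolling of $\hat c$ to extract the $2\Delta_0(\dag)+\Delta_I(\hat{\dag})$ coefficient, and deriving \pref{eq:mvmty} from \pref{eq:mvmtx} via $\langle c,Q\rangle_X\le\langle c,P\rangle_X+\|c\|_\infty\,\omega_{\min}^{-1}\|Q-P\|_{\ell_1(\omega)}$ together with the definition of $\eps_\dag$. But two of your mechanisms are wrong or missing in ways that matter. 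First, the term $\psi(P)-\psi(Q)$ (the ``$\psi(Y)-\psi(X)$'' of the statement) does not arise from "telescoping the mass contribution by flow conservation": it comes from the exact identity $\|Q-P\|_{\ell_1(\omega)}=2\|(Q-P)_+\|_{\ell_1(\omega)}+[\psi(P)-\psi(Q)]$ (\pref{lem:height}), which reduces the whole problem to bounding the positive part. The mass terms $\omega_{uv}\,p_{uv}(Q_u-P_u)_+$ are then absorbed into the \emph{left-hand side} using $\omega_{uv}\le\omega_{wu}/\tau$ for incoming arcs $wu$, i.e.\ a self-bounding recursion with contraction factor $1/\tau$; with weights varying along paths, flow-conservation telescoping would not produce $\psi$ at all.

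Second, and this is the substantive gap, your route to $\Psi(q)-\Psi(p)$ does not work as sketched. After the pointwise KKT bound (the paper uses $(1-e^{\rho_{uv}})_+\le(\rho_{uv})_-$ and $\omega_{uv}(\rho_{uv})_-\le\kappa\eta_{uv}(\hat c_v-\alpha_{uv})$, which already needs $\alpha_{uv}\le\hat c_v$ from \pref{lem:alphas} — complementary slackness only kills $\alpha_{uv}p_{uv}$, not the $q$- or $\theta$-weighted multiplier terms), the problematic contribution is the $\delta$-shifted piece $\sum_{uv}Q_u\theta_{uv}(\hat c_v-\alpha_{uv})$, which cannot be charged to $\langle c,Q\rangle_X$ because $\theta_{uv}Q_u$ need not be comparable to $Q_{uv}$. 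This is exactly where the potential $\Psi$ enters, via \pref{lem:crucial} (the analogue of CL19, Lemma~2.11). Your "Pinsker plus three-point identity, then multiply by $F_u$ and sum" plan overlooks that $\Psi_u(r)=-\flow^{\dag}(r)_u\,\vvD^{(u)}(\theta^{(u)}\dmid r^{(u)})$ is weighted by the node mass, which itself changes from $Q_u$ to $P_u$ during the step; accounting for that change is the content of \pref{lem:crucial} and produces the extra term $\frac{2}{\kappa}(Q_u-P_u)_+\max_{v:uv\in A}\omega_{uv}$, which must be absorbed a second time using the $\tau$-geometric property. That second absorption — not a geometric series of $\tau^{-k}$ along paths — is why the final constant is $\tau/(\tau-3)$ rather than $\tau/(\tau-1)$, and why $\tau\ge4$ is required. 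Without an argument replacing \pref{lem:crucial}, the proof is incomplete at its hardest point.
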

We prove \pref{lem:sc} and \pref{lem:movement-analysis} in \pref{sec:service-cost} and \pref{sec:movement-cost}, respectively. Now given these results, let us prove \pref{thm:main-2}.
\begin{proof}[Proof of \pref{thm:main-2}]
	Consider a sequence $\llangle c_t : t \geq 1\rrangle$ of cost functions.
	By splitting the costs into smaller pieces, we may assume that $\|c_t\|_{\infty} \leq \eps_\dag$ for all $t \geq 1$.
	
	Let $t_1 \geq 1$, and let $r^*_0, r^*_1, \ldots, r^*_{t_1} \in X$ denote the path taken by an (optimal) offline algorithm in response to the cost sequence $\llangle c_t : t \geq 1\rrangle$. The $L$-Lipschitzness property of $\hat{\dag}$ implies that there exists a sequence $R^*_0, R^*_1, \ldots R^*_{t_1} \in \cF_\dag$  such that $R^*_i$ is a unit flow to $r_i$, and furthermore
	\begin{equation}
	\label{eq:lip-z}
		\sum_{i=1}^{t_1} \vvW_{\hat{\dag}}^1(R^*_{i-1}, R^*_i) \leq L \sum_{i=1}^{t_1} d(r^*_{i-1},r^*_i).
	\end{equation}
	
Let $q_0, \ldots, q_{t_1} \in \cQ_\dag$ denote the trajectory of the discrete mirror descent dynamics with $\kappa = 6L$ on $\hat{\dag}$ in response to the cost sequence $\llangle c_t : t \geq 1\rrangle$.
 Further let $\{Q_t = \mu(q_t)\}$, and suppose $R_0^* = Q_0^*$.
	Then using $\vvD(R_{0}^* \dmid q_0) = 0$ along with \pref{lem:sc} and \pref{lem:lipschitz}
	yields, for any time $t_1 \geq 1$,
	\begin{align*}
	\sum_{t=1}^{t_1} \langle c_t, Q_t\rangle_{X} &\leq 
	\sum_{t=1}^{t_1} \langle c_t,Z_t^*\rangle_{X} -
	\vvD(R_{t_1}^* \dmid q_{t_1}) + \frac{3}{\kappa} \sum_{t=1}^{t_1} \|R_t^*-R_{t-1}^*\|_{\ell_1(\omega)} \\
	&\leq \sum_{t=1}^{t_1} \langle c_t,R_t^*\rangle_{X}
	+ \frac{3}{\kappa} \sum_{t=1}^{t_1} \|R_t^*-R_{t-1}^*\|_{\ell_1(\omega)} \\
	&\leq \sum_{t=1}^{t_1} \langle c_t,R_t^*\rangle_{X}
	+ \frac{6L}{\kappa} \sum_{t=1}^{t_1} d(r^*_{t-1}, r^*_t),
	\end{align*}
	where in the second line we have used $\vvD(R \dmid q) \geq 0$ for all $R \in \cF_\dag$ and $q \in \cQ_\dag$, and the last line follows from \pref{obs:wasserstein-l1-equiv} and \pref{eq:lip-z}. This confirms that the mirror descent dynamics is $1$-competitive for the service costs.
Now we can write
	\begin{align*}
	\frac{\epsilon}{\kappa} \sum_{t=1}^{t_1} \vvW^1_X(Q_{t-1}, Q_t) &\leq
		\frac{1}{\kappa} \sum_{t=1}^{t_1} \vvW^1_{\hat{\dag}}(Q_{t-1}, Q_t) \tag{$\hat{\dag}$ is $\epsilon$-expanding}\\ 
	 &\leq
		\frac{1}{\epsilon} \sum_{t=1}^{t_1} \|Q_t-Q_{t-1}\|_{\ell_1(\omega)} \tag{\pref{obs:wasserstein-l1-equiv}} \\ 
	&\leq  \left[\psi(Q_{t_1})-\psi(Q_0)\right]+\frac{4\tau}{\tau-3}  \left( \left[\Psi(q_0)-\Psi(q_{t_1})\right] + \left(2\Delta_0(\dag) + \Delta_I(\hat{\dag})\right) 
	\sum_{t=1}^{t_1} \langle c_t,Q_t\rangle_{X}\right), 
	\end{align*}
where in the last line we used \pref{eq:mvmty}.
	This implies that the mirror descent dynamics is $(96L/\epsilon) \cdot \left(2\Delta_0(\dag) + \Delta_I(\hat{\dag})\right)$-competitive in the movement cost, completing the proof.
\end{proof}

\subsection{Bounding the service cost}
\label{sec:service-cost}
In this section we prove \pref{lem:sc}.
Let $F \in \cF_\dag$, and for $u \in V \setminus X$ with $F_u > 0$, define $F^{(u)} \in Q^{(u)}_\dag$ by
	\[
	F^{(u)}_v \seteq \frac{F_{uv}}{F_u}. 
	\]
The next lemma is a consequence of \cite[Lemma~2.1]{CL19}.
\begin{lemma}
For $u \in V \setminus X$ we have
	\begin{equation}
	\label{eq:bregman-improvement-u}
		\vvD^{(u)}\left(F^{(u)} \dmid p^{(u)}\right) - \vvD^{(u)}\left(F^{(u)} \dmid q^{(u)}\right) \leq \llangle \hat{c}^{(u)} - \alpha^{(u)}, 
	F^{(u)} - p^{(u)}\rrangle.	
	\end{equation}
\end{lemma}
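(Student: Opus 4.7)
The plan is to prove this as a textbook mirror-descent inequality, by combining the three-point identity for Bregman divergences with the first-order optimality conditions that define $p^{(u)}$. This is essentially the specialization of \cite[Lemma~2.1]{CL19} to the local problem at the vertex $u$, since $p^{(u)}$ is defined in \eqref{eq:locp} as the Bregman projection, with respect to $\vvD^{(u)}$, of the linearized update from $q^{(u)}$.

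First I would apply the three-point identity with $a = F^{(u)}$, $b = p^{(u)}$, $c = q^{(u)}$, which (after expanding the definition of the Bregman divergence) yields
\[
\vvD^{(u)}\!\left(F^{(u)} \dmid p^{(u)}\right) - \vvD^{(u)}\!\left(F^{(u)} \dmid q^{(u)}\right) = -\vvD^{(u)}\!\left(p^{(u)} \dmid q^{(u)}\right) + \Iprod{\nabla \Phi^{(u)}(q^{(u)}) - \nabla \Phi^{(u)}(p^{(u)}),\, F^{(u)} - p^{(u)}}.
\]
Non-negativity of $\vvD^{(u)}$ then lets me drop the first term on the right. Here I implicitly use that $F^{(u)} \in Q^{(u)}_\dag$, which follows from $F_u > 0$ together with flow conservation $\sum_{v \spc uv \in A} F_{uv} = F_u$.

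Next I would compute $\nabla \Phi^{(u)}(p^{(u)}) - \nabla \Phi^{(u)}(q^{(u)})$ coordinatewise: a direct differentiation of $\Phi^{(u)}$ shows that its $uv$-component equals $\frac{1}{\kappa} \frac{\omega_{uv}}{\eta_{uv}} \log\!\Paren{\frac{p_{uv}+\delta_{uv}}{q_{uv}+\delta_{uv}}}$, which is exactly the left-hand side of the KKT relation \eqref{eq:kkt2}. Substituting, the inner product above becomes
\[
\sum_{v \spc uv \in A} (\hat{c}_v - \alpha_{uv} - \beta_u)\!\left(F^{(u)}_v - p^{(u)}_v\right).
\]
Since both $F^{(u)}$ and $p^{(u)}$ are probability vectors over the arcs out of $u$, we have $\sum_{v \spc uv \in A} (F^{(u)}_v - p^{(u)}_v) = 0$, which annihilates the $\beta_u$ term and leaves exactly $\llangle \hat{c}^{(u)} - \alpha^{(u)},\, F^{(u)} - p^{(u)} \rrangle$, as claimed.

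I do not expect a serious obstacle; the whole argument is mechanical once the three-point identity and the KKT system are simultaneously on the table. The only subtlety worth flagging is that the cancellation of the $\beta_u$ Lagrange multiplier relies crucially on $F^{(u)}$ being normalized to sum to one, which is why we restrict to $u \in V \setminus X$ with $F_u > 0$ rather than stating the bound for arbitrary flows; at vertices with $F_u = 0$ the ratio $F^{(u)} = F_{uv}/F_u$ is not defined, but such vertices do not contribute to the divergence $\vvD(F \dmid p)$ in the global bookkeeping anyway.
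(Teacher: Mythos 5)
Your proof is correct. The paper itself does not spell out an argument for this lemma---it simply invokes \cite[Lemma~2.1]{CL19}---and what you have written is exactly the standard proof of that statement, specialized to the local problem at $u$: the three-point identity for $\vvD^{(u)} = \vvD_{\Phi^{(u)}}$ (note the paper's explicit formula for $\vvD^{(u)}$, with its $p'_{uv}-p_{uv}$ term, is indeed the Bregman divergence of $\Phi^{(u)}$, so the identity applies verbatim), dropping the nonnegative term $\vvD^{(u)}\!\left(p^{(u)} \dmid q^{(u)}\right)$, and substituting the KKT relation \eqref{eq:kkt2}. Your sign bookkeeping checks out: the $uv$-component of $\nabla \Phi^{(u)}(q^{(u)}) - \nabla \Phi^{(u)}(p^{(u)})$ equals $\hat{c}_v - \alpha_{uv} - \beta_u$ by \eqref{eq:kkt2}, and since both $F^{(u)}$ and $p^{(u)}$ lie in $Q^{(u)}_{\dag}$ the $\beta_u$ contribution cancels, leaving precisely $\llangle \hat{c}^{(u)} - \alpha^{(u)}, F^{(u)} - p^{(u)}\rrangle$. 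Two details you handle correctly and that are worth keeping: because $\delta_{uv} > 0$, the regularizer $\Phi^{(u)}$ is differentiable up to the boundary of the simplex, so the three-point identity needs no interiority assumption; and the restriction to vertices with $F_u > 0$ is exactly what makes $F^{(u)}$ a probability vector (vertices with $F_u = 0$ are multiplied by $F_u$ and drop out in the global summation of \pref{lem:sc} anyway). So this is not a different route from the paper, just a self-contained rendering of the cited argument.
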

	

\begin{proof}[Proof of \pref{lem:sc}]
	Multiplying both sides of \pref{eq:bregman-improvement-u} by $F_u$ and summing over all $u \in V \setminus X$ yields
	\begin{align*} 
	\vvD(F \dmid p) - \vvD(F \dmid q) &\leq \sum_{u \in V \setminus X} F_u \llangle \hat{c}^{(u)} - \alpha^{(u)}, F^{(u)} - p^{(u)} \rrangle \\
	&=\sum_{uv \in A} F_{uv} (\hat{c}^{(u)}_v -\alpha^{(u)}_v) - \sum_{uv \in A} F_u p_{uv} (\hat{c}^{(u)}_v-\alpha^{(u)}_v)  \\
	&\leq \sum_{uv \in A} F_{uv} \hat{c}^{(u)}_v  - \sum_{uv \in A} F_u p_{uv} (\hat{c}^{(u)}_v-\alpha^{(u)}_v) \tag{$\alpha^{(u)}_v \geq 0$}. `\,
	\end{align*}
	Note that from  \pref{eq:posmult}
	the latter expression is
	\[
	\sum_{u \notin X} F_u \sum_{v: uv \in A} \hat{c}_v^{(u)} p_v
	\stackrel{}{=}\sum_{u \notin X} F_u \hat{c}_u.
	\]
	Noting that $\hat{c}_{\rt} = \sum_{u \in X} \mu(p)_{u} c_{u}$,  this gives
	\[
	\vvD(F \dmid p) - \vvD(F \dmid q) \leq \sum_{u \neq \rt} \hat{c}_u F_u - \sum_{u \in V \setminus X} F_u \hat{c}_u
	\leq \llangle c, F-P\rrangle_{X}. \qedhere
	\]
\end{proof}
\subsection{Bounding the the movement cost}
\label{sec:movement-cost}
In this section we prove \pref{lem:movement-analysis}.
The next lemma shows that when the algorithm moves from $Q$ to $P$ it suffices for us to bound the positive movement movement cost $\left\|(P-Q)_{+}\right\|_{\ell_1(\omega)}$.
\begin{lemma}[{\cite[Lemma~2.4]{CL19}}]
	\label{lem:height}
	For $F,F'\in \cF_\dag$ it holds that
	\[
	\|F-F'\|_{\ell_1(\omega)} = 2 \left\|(F-F')_{+}\right\|_{\ell_1(\omega)} + [\psi(F')-\psi(F)].
	\]
\end{lemma}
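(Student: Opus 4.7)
The identity is a short linear-algebraic decomposition, and I would present it in two steps based on splitting $F - F'$ into its positive and negative parts on the arc set $A$.

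First, I would introduce the shorthand $g \seteq F - F' \in \R^A$ and write $g_{uv} = g_{uv}^+ - g_{uv}^-$, where $g^+ = (F-F')_+$ and $g^- = (F-F')_-$ have disjoint supports and both take values in $\R_+$. Since $|g_{uv}| = g_{uv}^+ + g_{uv}^-$, this gives the basic identity
\[
   \|F-F'\|_{\ell_1(\omega)} = \sum_{uv \in A} \omega_{uv} g_{uv}^+ + \sum_{uv \in A} \omega_{uv} g_{uv}^- = \|(F-F')_+\|_{\ell_1(\omega)} + \|(F-F')_-\|_{\ell_1(\omega)}.
\]
Set $S_+ \seteq \|(F-F')_+\|_{\ell_1(\omega)}$ and $S_- \seteq \|(F-F')_-\|_{\ell_1(\omega)}$.

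Second, I would compute the signed sum $\sum_{uv} \omega_{uv} g_{uv}$ directly from the definition of $\psi$:
\[
   \sum_{uv \in A} \omega_{uv} (F_{uv} - F'_{uv}) = \psi(F) - \psi(F') = S_+ - S_-.
\]
Solving for $S_-$ yields $S_- = S_+ + \psi(F') - \psi(F)$. Substituting into the first display gives
\[
   \|F-F'\|_{\ell_1(\omega)} = S_+ + S_- = 2 S_+ + [\psi(F') - \psi(F)] = 2\|(F-F')_+\|_{\ell_1(\omega)} + [\psi(F')-\psi(F)],
\]
which is exactly the claim.

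There is essentially no obstacle here: the identity is a two-line accounting fact that uses only the definitions of $\|\cdot\|_{\ell_1(\omega)}$ and $\psi$, and crucially does \emph{not} require $F,F'$ to be unit flows (the flow conservation constraints play no role). The only minor point worth noting is that the $\omega$-weights appear symmetrically in both $\|\cdot\|_{\ell_1(\omega)}$ and $\psi$, which is precisely what makes $S_+ - S_- = \psi(F) - \psi(F')$ work; if one accidentally used different weights in the two quantities, the identity would fail.
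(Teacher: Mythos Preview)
Your proof is correct; this is the standard one-line identity $|x| = 2x_+ - x$ applied coordinatewise and summed with weights $\omega_{uv}$. The paper does not give its own proof of this lemma but simply cites it from \cite[Lemma~2.4]{CL19}, so there is nothing further to compare.
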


\begin{lemma}[{\cite[Lemma~2.9]{CL19}}]
	\label{lem:alphas}
	It holds that $\alpha_{uv} \leq \hat{c}_{v}$ for all $uv \in A$.
\end{lemma}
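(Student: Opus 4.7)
The plan is to prove \pref{lem:alphas} directly from the KKT conditions \pref{eq:kkt2} together with complementary slackness \pref{eq:posmult} and the sign properties of the remaining quantities involved. The only nontrivial case is when $\alpha_{uv} > 0$, so I would begin by noting that when $\alpha_{uv} = 0$, the inequality $\alpha_{uv} \leq \hat{c}_v$ is immediate, because $\hat{c}_v \geq 0$. This last fact deserves a brief verification: for $v \in X$ one has $\hat{c}_v = c_v \geq 0$ by assumption on the cost vectors, and for $v \in V \setminus X$ the definition \pref{eq:hatc2} expresses $\hat{c}_v$ as a nonnegative combination of the $\hat{c}_w$'s over arcs $vw \in A$, so nonnegativity propagates inductively backward from the sinks.

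Now suppose $\alpha_{uv} > 0$. By complementary slackness \pref{eq:posmult}, we have $p_{uv} = 0$. Substituting this into the KKT identity \pref{eq:kkt2} gives
\[
\frac{1}{\kappa} \frac{\omega_{uv}}{\eta_{uv}} \log \!\left(\frac{\delta_{uv}}{q_{uv}+\delta_{uv}}\right) = \beta_u - \hat{c}_v + \alpha_{uv}.
\]
Since $q_{uv} \geq 0$, the argument of the logarithm lies in $(0,1]$, so the left-hand side is $\leq 0$. (Here I use that $\omega_{uv} > 0$ and $\eta_{uv} = 1 + \log(1/\theta_{uv}) \geq 1 > 0$, so the scalar prefactor is strictly positive.) Rearranging yields $\alpha_{uv} \leq \hat{c}_v - \beta_u$.

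Finally, I would invoke the sign condition on $\beta_u$: as the Lagrange multiplier associated with the inequality constraint $\sum_{v : uv \in A} q_{uv} \geq 1$, we have $\beta_u \geq 0$ by standard KKT theory. Combining this with the previous bound gives $\alpha_{uv} \leq \hat{c}_v - \beta_u \leq \hat{c}_v$, which completes the argument.

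The only potentially subtle point is the sign of $\beta_u$, which depends on stating the constraint in \pref{eq:locp} consistently as a ``$\geq$'' inequality rather than an equality, so that one genuinely has a one-sided multiplier. (Both formulations give the same optimizer $p$, since the objective is strictly convex and pushes mass to fill the simplex, but only the inequality formulation makes the sign assertion automatic.) Everything else is routine manipulation of the KKT identity, and I do not foresee any additional obstacle.
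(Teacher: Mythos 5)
Your argument is correct, and it is essentially the proof of \cite[Lemma~2.9]{CL19}, which this paper simply cites rather than reproving: complementary slackness forces $p_{uv}=0$ when $\alpha_{uv}>0$, making the logarithmic term in \pref{eq:kkt2} nonpositive, and then $\beta_u \geq 0$ together with $\hat{c}_v \geq 0$ (which indeed propagates from the sinks via \pref{eq:hatc2}) gives $\alpha_{uv} \leq \hat{c}_v$. No gaps; your side remark about stating the simplex constraint as an inequality to get the sign of $\beta_u$ is exactly the convention the paper adopts.
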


Define
$\rho_{uv} \seteq \log \left(\frac{p_{uv}+\delta_{uv}}{q_{uv}+\delta_{uv}}\right)$
so that
\begin{align}
q_{uv} - p_{uv} &= (q_{uv} + \delta_{uv}) (1-e^{\rho_{uv}}). \label{eq:qp}
\end{align}
Recall that for $uv \in A$, we have $Q_{uv} = q_{uv} Q_u$ and $P_{uv} = p_{uv} P_u$, thus
\[
Q_{uv} - P_{uv} = Q_u (q_{uv} - p_{uv}) + p_{uv} (Q_u - P_u) = (Q_{uv}+\delta_{uv} Q_u) (1-e^{\rho_{uv}}) + p_{uv} (Q_u - P_u).
\]
In particular,
\begin{align*}
\omega_{uv} \left(Q_{uv}-P_{uv}\right)_+ &\leq \omega_{uv} (Q_{uv}+\delta_{uv} Q_u) (1-e^{\rho_{uv}})_+ + \omega_{uv} p_{uv} \left(Q_u-P_u\right)_+  \\
&\leq \omega_{uv} (Q_{uv}+\delta_{uv} Q_u) (1-e^{\rho_{uv}})_+ + \sum_{w: wu \in A} \omega_{uv} p_{uv} \left(Q_{wu}-P_{wu}\right)_+ \\
&\leq \omega_{uv} (Q_{uv}+\delta_{uv} Q_u) (1-e^{\rho_{uv}})_+ + \sum_{w: wu \in A} \frac{\omega_{wu}}{\tau} p_{uv} \left(Q_{wu}-P_{wu}\right)_+.
\end{align*}
Using $\sum_{v: uv \in A} p_{uv} = 1$ and
summing over all edges yields
\[
\sum_{uv \in A} \omega_{uv} \left(Q_{uv}-P_{uv}\right)_+ \leq\sum_{uv \in A} \omega_{uv}  (Q_{uv}+\delta_{uv} Q_{u}) (1-e^{\rho_{uv}})_{+} + \frac{1}{\tau} \sum_{uv \in A} \omega_{uv} \left(Q_{uv}-P_{uv}\right)_+\,,
\]
hence
\begin{align}
\sum_{uv \in A} \omega_{uv} \left(Q_{uv}-P_{uv}\right)_+ &\leq \frac{\tau}{\tau-1} \sum_{uv \in A} \omega_{uv}  (Q_{uv}+\delta_{uv} Q_{u}) (1-e^{\rho_{uv}})_{+} \nonumber \\ 
&\leq \frac{\tau}{\tau-1}  \sum_{uv \in A} \omega_{uv}  (Q_{uv}+\delta_{uv} Q_{u}) (\rho_{uv})_{-} \nonumber \\
&\leq\frac{\kappa \tau}{\tau-1} \left(\sum_{uv \in A} \eta_{uv} Q_{uv} \hat{c}_v + \sum_{uv \in A} Q_u \theta_{uv} (\hat{c}_v-\alpha_{uv})\right),
\label{eq:mvmt0}
\end{align}
where the last line uses \pref{lem:alphas} and \pref{eq:kkt2}, to bound
$\omega_{uv} (\rho_{uv})_{-} \leq \kappa \eta_{uv}\left(\hat{c}_v - \alpha_{uv}\right)$. 
\begin{lemma}
\label{lem:mv2}
	It holds that 
	\[
\sum_{uv \in A} \eta_{uv} Q_{uv} \hat{c}_v
\leq (\Delta_0(\dag) + 
\Delta_I(\hat{\dag})) \llangle c,Q\rrangle_X.
\]
\end{lemma}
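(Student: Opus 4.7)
The plan is to unfold $\hat c$ via its recursive definition, interchange summation, and combine the resulting expression with a path decomposition of $Q$ together with the bound $\sum_{ab\in\sigma}\eta_{ab}\leq\Delta_0(\dag)+\Delta_I(\hat{\dag})$ that holds along every root-to-sink path $\sigma\in\cP_{\dag}$.

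First, iterating $\hat c_v=\sum_{w:\,vw\in A} p_{vw}\hat c_w$ for internal $v$ and $\hat c_x=c_x$ for sinks yields the closed-form expression $\hat c_v=\sum_{x\in X}\pi(v\to x)\,c_x$, where $\pi(v\to x):=\sum_{\gamma:\,v\to x}\prod_{ab\in\gamma} p_{ab}$ is the probability that a $p$-walk starting at $v$ terminates at $x$. Swapping the order of summation,
\[
\sum_{uv\in A}\eta_{uv}Q_{uv}\hat c_v \;=\; \sum_{x\in X} c_x M_x, \qquad M_x := \sum_{uv\in A}\eta_{uv}Q_{uv}\pi(v\to x).
\]

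Next, decomposing $Q$ as a distribution over root-to-sink paths, $Q=\sum_\sigma Q(\sigma)\delta_\sigma$ with $Q(\sigma)=\prod_{ab\in\sigma} q_{ab}$, gives $M_x=\sum_\sigma Q(\sigma)\sum_{uv\in\sigma}\eta_{uv}\pi(v\to x)$. The key arc-weight bound
\[
\sum_{uv\in\sigma}\eta_{uv} \;=\; \sum_{uv\in\sigma}\bigl(1+\log(1/\theta_{uv})\bigr) \;\leq\; |\sigma| + \log(1/\theta(\sigma)) \;\leq\; \Delta_0(\dag) + \Delta_I(\hat{\dag})
\]
then follows directly from the definitions of $\Delta_0$ and $\Delta_I$ and the identity $\theta(\sigma)=\prod_{ab\in\sigma}\theta_{ab}$.

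The main obstacle is combining these two ingredients with the dependence of $\hat c$, via $p$, on the cost vector $c$. A naive pointwise bound $M_x\leq(\Delta_0(\dag)+\Delta_I(\hat{\dag}))Q_x$ does not hold in general: the $p$-walk can reshape mass so that $M_x>(\Delta_0(\dag)+\Delta_I(\hat{\dag}))Q_x$ for a particular sink $x$. What saves us is that $p=\cA(q,c)$ is determined by the KKT conditions \eqref{eq:kkt2}, which tie $p$ to $\hat c$: any pointwise deficit must occur at a sink whose $c_x$ is correspondingly small, so $\sum_x c_x M_x$ is still dominated by $(\Delta_0+\Delta_I)\langle c,Q\rangle_X$. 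My plan for making this precise is to induct on the topological order of $\dag$, using the stationarity KKT equation at each internal vertex $u$ to absorb the per-arc slack $\eta_{uv}Q_{uv}\hat c_v$ into a downstream budget that telescopes upward to the root. Effectively, this is an $\eta$-weighted version of the identity $\hat c_u=\sum_v p_{uv}\hat c_v$, with the KKT relations providing the quantitative comparison between $p_{uv}$ and $q_{uv}$ needed to convert the hybrid $Q$-then-$p$ walk underlying $M_x$ into the pure $Q$-walk underlying $\langle c,Q\rangle_X$.
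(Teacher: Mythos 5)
Your first two steps---unfolding $\hat{c}$ into $p$-walk absorption probabilities, decomposing $Q$ over root--sink paths, and the per-path bound $\sum_{uv\in\gamma}\eta_{uv}=|\gamma|+\log(1/\theta(\gamma))\le\Delta_0(\dag)+\Delta_I(\hat{\dag})$---are exactly the ingredients of the paper's (very terse) proof. But the step you yourself flag as the main obstacle is never carried out: you correctly observe that the hybrid ``$Q$-then-$p$'' walk need not be dominated pointwise by the pure $Q$-walk, and then you offer only a plan (``use KKT stationarity to absorb per-arc slack into a downstream budget that telescopes to the root'') with no definition of the budget and no inequality chain. As written this is a genuine gap: the whole content of the lemma is precisely the comparison between the $p$-based quantities $\hat{c}_v$ and the $q$-based flow $Q$, and that comparison is asserted, not proved; it is also not clear the KKT route closes it in the form you sketch.

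The missing step has a short proof that needs no KKT conditions at all. Since $q^{(u)}$ is feasible in \pref{eq:locp}, the divergence term there vanishes at $q^{(u)}$ and is nonnegative at $p^{(u)}$, so optimality of $p^{(u)}$ gives $\hat{c}_u=\llangle p^{(u)},\hat{c}^{(u)}\rrangle\le\llangle q^{(u)},\hat{c}^{(u)}\rrangle=\sum_{v\spc uv\in A}q_{uv}\hat{c}_v$ for every internal $u$. Inducting from the sinks upward yields $\hat{c}_v\le\tilde{c}_v$, where $\tilde{c}$ is defined by the same recursion with $q$ in place of $p$ (equivalently, $\tilde{c}_v$ is the $q$-walk average of $c$ over sinks reachable from $v$). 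Substituting this into your expression and noting that $Q_{uv}\,\pi_q(v\to x)$ is exactly the mass that the canonical decomposition $\chi(\gamma)=\prod_{ab\in\gamma}q_{ab}$ places on paths through $uv$ ending at $x$, you obtain $\sum_{uv\in A}\eta_{uv}Q_{uv}\hat{c}_v\le\sum_{\gamma\in\cP_\dag}\chi(\gamma)\,c_{\bar{\gamma}}\sum_{uv\in\gamma}\eta_{uv}$, and your per-path bound together with $\sum_{\gamma\spc\bar{\gamma}=x}\chi(\gamma)=Q_x$ finishes the proof. This monotonicity ($p$-averages of $\hat{c}$ dominated by $q$-averages) is what the paper's one-line display implicitly uses when it replaces $\hat{c}_v$ by $c_{\bar{\gamma}}$ along the decomposition paths; with it, your outline becomes a complete argument.
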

\begin{proof}
	Consider a decomposition of $Q$ into flows on single source-sink paths. More precisely, let $\chi: \cP_\dag \to \R_+$ be so that 
\[
Q = \sum_{\gamma \in \cP_\dag} \chi(\gamma) \mathbb{1}_\gamma.
\]
Note that the existence of such a decomposition is guaranteed by \pref{eq:flow-def}. Now we have
\begin{align*} 
\sum_{uv \in A} \eta_{uv} Q_{uv} \hat{c}_v \leq 
 \sum_{\gamma \in \cP_\dag} c_{\bar{\gamma}} Q_{\bar{\gamma}} \chi(\gamma) \sum_{uv \in \gamma} \eta_{uv}
\leq (\Delta_0(\dag) + 
\Delta_I(\hat{\dag})) \llangle c,Q\rrangle_X,
\end{align*}
since for any $\gamma \in \cP_\dag$, we have
\[
\sum_{uv \in \gamma} \eta_{uv} = |\gamma| + \log(1/\theta(\gamma))  \leq \Delta_0(\dag)  + \Delta_I(\hat{\dag}).\qedhere
\]
\end{proof}



It only remains to bound the latter term in \pref{eq:mvmt0}.
In order to do so, we would need the following result from \cite{CL19}.

\begin{lemma}[{\cite[Lemma~2.11]{CL19}}]
   \label{lem:crucial}
	For any $u \in V \setminus X$, it holds that
	\begin{equation}\label{eq:hybrid}
	\Psi_u(p) - \Psi_u(q) \leq \frac{2}{\kappa}  \left(Q_u-P_u\right)_+ \cdot \max_{v: uv \in A} \omega_{uv} + 
	\sum_{v: uv \in A} (\hat{c}_v-\alpha_{uv}) \left[Q_{uv} - \theta_{uv} Q_u\right].\qedhere
	\end{equation}
\end{lemma}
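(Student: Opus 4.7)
The plan is to decompose $\Psi_u(p) - \Psi_u(q)$ into a $Q_u$-weighted divergence difference and a ``depletion'' term, then bound each piece separately. Writing $R \seteq \flow^{\dag}(r)$ so that $\Psi_u(r) = -R_u\, \vvD^{(u)}(\theta^{(u)} \dmid r^{(u)})$, we have
\[
\Psi_u(p) - \Psi_u(q) = Q_u \bigl[\vvD^{(u)}(\theta^{(u)} \dmid q^{(u)}) - \vvD^{(u)}(\theta^{(u)} \dmid p^{(u)})\bigr] + (Q_u - P_u)\, \vvD^{(u)}(\theta^{(u)} \dmid p^{(u)}).
\]
Note the lemma is purely local at $u$ — no global DAG structure enters — so the tree proof of \cite{CL19} transfers verbatim; the task is to recover its key estimates from the KKT conditions \pref{eq:kkt2}.

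For the first bracketed difference, I would apply the three-point Bregman identity
\[
\vvD^{(u)}(\theta^{(u)} \dmid q^{(u)}) - \vvD^{(u)}(\theta^{(u)} \dmid p^{(u)}) = \vvD^{(u)}(p^{(u)} \dmid q^{(u)}) + \bigl\langle \nabla \Phi^{(u)}(p) - \nabla \Phi^{(u)}(q),\, \theta^{(u)} - p^{(u)} \bigr\rangle.
\]
By \pref{eq:kkt2}, the $v$-coordinate of the gradient gap equals $\beta_u - \hat{c}_v + \alpha_{uv}$; the $\beta_u$ piece drops from the inner product because $\sum_v (\theta_{uv} - p_{uv}) = 0$, leaving $\sum_v (\hat{c}_v - \alpha_{uv})(p_{uv} - \theta_{uv})$. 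Using $Q_{uv} = q_{uv} Q_u$ and splitting $p_{uv} - \theta_{uv} = (q_{uv} - \theta_{uv}) + (p_{uv} - q_{uv})$ isolates exactly the target term $\sum_v (\hat{c}_v - \alpha_{uv})[Q_{uv} - \theta_{uv} Q_u]$, along with a residual
\[
Q_u \vvD^{(u)}(p^{(u)} \dmid q^{(u)}) + Q_u \sum_{v: uv \in A} (\hat{c}_v - \alpha_{uv})(p_{uv} - q_{uv}).
\]
The main obstacle is to verify that this residual is non-positive. Substituting $\hat{c}_v - \alpha_{uv} = \beta_u - \tfrac{\omega_{uv}}{\kappa \eta_{uv}} \log\tfrac{p_{uv}+\delta_{uv}}{q_{uv}+\delta_{uv}}$ from \pref{eq:kkt2} (with $\beta_u$ once more cancelling), and then expanding $\vvD^{(u)}(p \dmid q)$ from its definition, a termwise rearrangement collapses the expression to exactly $-Q_u\, \vvD^{(u)}(q^{(u)} \dmid p^{(u)}) \leq 0$ — the ``swap'' between $p$ and $q$ arises because the linear residuals $(q_{uv} - p_{uv})$ combine with $(p_{uv}+\delta_{uv})\rho_{uv}$ to produce $(q_{uv}+\delta_{uv})\rho_{uv}$.

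The second piece $(Q_u - P_u)\, \vvD^{(u)}(\theta^{(u)} \dmid p^{(u)})$ is non-positive when $Q_u \leq P_u$ and may be discarded. When $Q_u > P_u$, I bound $\vvD^{(u)}(\theta^{(u)} \dmid p^{(u)}) \leq \tfrac{2}{\kappa}\max_{v: uv \in A} \omega_{uv}$ directly from the definition. The key identity is $\theta_{uv} + \delta_{uv} = \theta_{uv}(1 + \eta_{uv}^{-1})$, so that $\log\bigl((\theta_{uv}+\delta_{uv})/(p_{uv}+\delta_{uv})\bigr) \leq \log(1 + \eta_{uv})$ when $p_{uv} \geq 0$; combined with $\log(1+\eta)/\eta \leq 1$, the normalization $\sum_v \theta_{uv} = 1$, and cancellation of the linear telescoping term against $\max_v \omega_{uv}/\eta_{uv}$, this yields the pointwise bound. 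Assembling the two pieces gives \pref{eq:hybrid}.
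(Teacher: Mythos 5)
Your proposal is correct, but note that the paper itself gives no argument here: it states the lemma as a citation of \cite[Lemma~2.11]{CL19} and explicitly omits the proof, remarking only that the CL19 argument (written for uniform $\theta^{(u)}$) goes through verbatim for an arbitrary distribution $\theta^{(u)}$. So there is no in-paper proof to diverge from; what you supply is a self-contained derivation in the same spirit as the CL19 analysis. Your two key moves both check out: (i) writing $\Psi_u(p)-\Psi_u(q) = Q_u[\vvD^{(u)}(\theta^{(u)}\dmid q^{(u)})-\vvD^{(u)}(\theta^{(u)}\dmid p^{(u)})] + (Q_u-P_u)\vvD^{(u)}(\theta^{(u)}\dmid p^{(u)})$, then handling the first bracket with the three-point Bregman identity and \pref{eq:kkt2} (the $\beta_u$ term indeed drops since $\theta^{(u)}$ and $p^{(u)}$ both sum to $1$), and verifying that the residual collapses to $-Q_u\,\vvD^{(u)}(q^{(u)}\dmid p^{(u)})\le 0$ — this cancellation is exactly right, since $(p_{uv}+\delta_{uv})\rho_{uv} - (p_{uv}-q_{uv})\rho_{uv} = (q_{uv}+\delta_{uv})\rho_{uv}$; and (ii) discarding the second term when $Q_u\le P_u$ and otherwise bounding $\vvD^{(u)}(\theta^{(u)}\dmid p^{(u)})\le \frac{2}{\kappa}\max_{v:uv\in A}\omega_{uv}$. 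The only place where your sketch is looser than it should be is step (ii): applying $\log(1+\eta_{uv})\le \eta_{uv}$ to the logarithmic term alone and then bounding the linear term $p_{uv}-\theta_{uv}$ by $p_{uv}$ gives constant $3$, not $2$. To land on $2$ you must keep the $-\theta_{uv}$ term grouped with the logarithm, i.e.\ use $p_{uv}+\delta_{uv}\ge\delta_{uv}=\theta_{uv}/\eta_{uv}$ and then the elementary inequality
\begin{equation*}
\Bigl(1+\tfrac{1}{\eta}\Bigr)\log(1+\eta) \;\le\; 1+\eta \qquad (\eta\ge 1),
\end{equation*}
which follows from $\log x\le x/2$; this yields $\frac{\omega_{uv}}{\eta_{uv}}\bigl[(\theta_{uv}+\delta_{uv})\log\frac{\theta_{uv}+\delta_{uv}}{p_{uv}+\delta_{uv}} + p_{uv}-\theta_{uv}\bigr]\le \omega_{uv}(\theta_{uv}+p_{uv})$, and summing with $\sum_v\theta_{uv}=\sum_v p_{uv}=1$ gives the stated $\frac{2}{\kappa}\max_v\omega_{uv}$. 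With that small repair, your argument is a complete and correct proof of \pref{eq:hybrid}, and it confirms the paper's claim that nothing in the CL19 proof uses uniformity of $\theta^{(u)}$ beyond $\sum_{v:uv\in A}\theta_{uv}=1$.
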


We omit a proof of the lemma as it is essentially identical to that of \cite[Lem.~2.11]{CL19}. 
In \cite{CL19}, for a fixed $u$, the probability distirbution specified by
$\langle \theta_{uv} : uv \in A \rangle$  is uniform, but the argument works verbatim for any probability.

\begin{lemma}
	\label{lem:movement}
   It holds that
	\[
	\frac{\tau-3}{\kappa \tau} \left\|\left(Q-P\right)_+\right\|_{\ell_1(\omega)} \leq (2 \Delta_0(\dag) + \Delta_I(\hat{\dag})) \langle c, Q\rangle_{X} +
	\left[\Psi(q)-\Psi(p)\right].
	\]
\end{lemma}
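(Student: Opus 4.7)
The plan is to assemble \eqref{eq:mvmt0}, \pref{lem:mv2}, and \pref{lem:crucial} into the desired inequality, using the $\tau$-geometric property to absorb an auxiliary ``max-$\omega$'' term back into the left-hand side. Setting $M \seteq \|(Q-P)_+\|_{\ell_1(\omega)}$, \eqref{eq:mvmt0} reads $M \leq \frac{\kappa\tau}{\tau-1}(S_1 + S_2)$, where $S_1 \seteq \sum_{uv \in A}\eta_{uv}Q_{uv}\hat{c}_v$ and $S_2 \seteq \sum_{uv \in A}Q_u\theta_{uv}(\hat{c}_v-\alpha_{uv})$. Applying \pref{lem:mv2} directly gives $S_1 \leq (\Delta_0(\dag)+\Delta_I(\hat{\dag}))\langle c,Q\rangle_X$, so the work is in bounding $S_2$.

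To bound $S_2$, I would sum the pointwise inequality of \pref{lem:crucial} over $u \in V\setminus X$ and rearrange to obtain
\[
S_2 \;\leq\; \sum_{uv\in A}(\hat{c}_v-\alpha_{uv})Q_{uv} + [\Psi(q)-\Psi(p)] + \frac{2}{\kappa}\sum_{u \in V\setminus X}(Q_u-P_u)_+\max_{v:uv\in A}\omega_{uv}.
\]
The crucial geometric step is bounding the last sum by $M/\tau$. For $u=\rt$ the summand vanishes, and for any $u \in V\setminus X$ with $u \neq \rt$, flow conservation gives $(Q_u-P_u)_+ \leq \sum_{w:wu\in A}(Q_{wu}-P_{wu})_+$, while the $\tau$-geometric condition forces $\omega_{uv}\leq \omega_{wu}/\tau$ for every predecessor arc $wu$ and successor arc $uv$, so in particular $\max_{v:uv\in A}\omega_{uv}\leq \omega_{wu}/\tau$. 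Summing and reindexing the resulting double sum by the predecessor arcs yields the $M/\tau$ bound, so this contribution to $S_2$ is at most $\frac{2M}{\kappa\tau}$. The remaining term $\sum_{uv}(\hat{c}_v-\alpha_{uv})Q_{uv}$ should be bounded by $\Delta_0(\dag)\langle c,Q\rangle_X$ via a flow-decomposition argument in the spirit of \pref{lem:mv2}, but without the $\eta_{uv}$ factor.

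Combining the pieces produces
\[
M \;\leq\; \frac{\kappa\tau}{\tau-1}\Bigl[(2\Delta_0(\dag)+\Delta_I(\hat{\dag}))\langle c,Q\rangle_X + \Psi(q)-\Psi(p)\Bigr] + \frac{2M}{\tau-1},
\]
and transferring the $\frac{2M}{\tau-1}$ term to the left leaves $M\cdot \frac{\tau-3}{\tau-1}$ there; dividing through by $\frac{\kappa\tau}{\tau-1}$ delivers the target coefficient $\frac{\tau-3}{\kappa\tau}$ on $M$ and completes the derivation.

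The main technical subtlety I anticipate is the $\Delta_0(\dag)$ bound on $\sum_{uv}(\hat{c}_v-\alpha_{uv})Q_{uv}$: since $\hat{c}_v$ is defined through $p$ whereas we are summing against the $Q$-flow, a naive path decomposition of $Q$ does not immediately mesh with $\hat{c}$, and the direct analogue of \pref{lem:mv2} would produce $(\Delta_0+\Delta_I)\langle c,Q\rangle_X$, which is too lossy by a factor of $\Delta_I(\hat{\dag})$ and would spoil the target coefficient $2\Delta_0+\Delta_I$. Shaving off the $\Delta_I$ factor presumably requires invoking the KKT complementary slackness $\alpha_{uv}>0 \Rightarrow p_{uv}=0 \Rightarrow P_{uv}=0$, together with a careful accounting on arcs where $p$ and $q$ disagree, to recover the statement as written.
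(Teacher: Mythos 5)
Your proposal is correct and is essentially the paper's proof: sum \pref{lem:crucial} over $u \in V\setminus X$, absorb the $\frac{2}{\kappa}\sum_u (Q_u-P_u)_+\max_{v:uv\in A}\omega_{uv}$ term into $\frac{2}{\kappa\tau}\left\|(Q-P)_+\right\|_{\ell_1(\omega)}$ exactly as you describe (flow conservation plus the $\tau$-geometric property, with the root term vanishing), then combine with \pref{eq:mvmt0} and \pref{lem:mv2} and rearrange to get the $\frac{\tau-3}{\kappa\tau}$ coefficient. The subtlety you flag at the end is not actually an obstacle: the paper simply discards $\alpha_{uv}$ via $\alpha_{uv}\geq 0$ and bounds $\sum_{uv\in A} Q_{uv}\hat{c}_v \leq \Delta_0(\dag)\,\langle c,Q\rangle_X$ by the same flow decomposition as in \pref{lem:mv2} with $\eta_{uv}$ replaced by $1$, so the per-path factor is $|\gamma|\leq\Delta_0(\dag)$ rather than $\Delta_0(\dag)+\Delta_I(\hat{\dag})$ (the $\Delta_I$ term in \pref{lem:mv2} comes only from the $\log(1/\theta_{uv})$ part of $\eta_{uv}$), and no appeal to complementary slackness is needed.
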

\begin{proof}
	Using \pref{lem:crucial} gives
\begin{align*}
 \sum_{uv \in A} Q_u \theta_{uv} (\hat{c}_v-\alpha_{uv}) 
&\stackrel{\mathclap{\pref{eq:hybrid}}}{\leq}\ 
[\Psi(q)-\Psi(p)] + \frac{2}{\kappa \tau} \left\|\left(Q-P\right)_{+}\right\|_{\ell_1(\omega)}
+ \sum_{uv \in A} Q_{uv} \hat{c}_v \\
&\leq\ 
[\Psi(q)-\Psi(p)] + \frac{2}{\kappa \tau} \left\|\left(Q-P\right)_{+}\right\|_{\ell_1(\omega)}
+ \Delta_0(\dag) \langle c,Q\rangle_{X}.
\end{align*}
Combining this inequality with \pref{eq:mvmt0} and \pref{lem:mv2} gives
\begin{equation*}
\kappa^{-1} \left\|\left(Q-P\right)_+\right\|_{\ell_1(\omega)} 
\leq \frac{\tau}{\tau-1}\left[
\left(2\Delta_0(\dag) + \Delta_I(\hat{\dag})\right) \langle c,Q\rangle_{X} + \left(\Psi(q)-\Psi(p)\right) 
+ \frac{2}{\kappa\tau} \left\|\left(Q-P\right)_+\right\|_{\ell_1(\omega)}\right],
\end{equation*}
completing the proof.
\end{proof}

\begin{proof}[Proof of \pref{lem:movement-analysis}]
	\pref{eq:mvmtx}
	follows from \pref{lem:movement} and \pref{lem:height}.
	To see that \pref{eq:mvmty} follows from \pref{eq:mvmtx} and \pref{lem:movement}, use the fact that
	\[
	\langle c,Q \rangle_{X} \leq \langle c,P\rangle_{X} + \frac{\|c\|_{\infty}}{\omega_{\min}} \left\|\left(Q-P\right)_+\right\|_{\ell_1(\omega)}.\qedhere
	\]
\end{proof}



\label{sec:movement-analysis}

\subsection*{Acknowledgements}

We thank the anonymous referees for their helpful comments. This research was partially supported by NSF CCF-2007079 and a Simons Investigator Award.
\bibliographystyle{alpha}
\bibliography{MTS}

\newcommand{\etalchar}[1]{$^{#1}$}
\begin{thebibliography}{BLMN05}

\bibitem[ABBS10]{ABBS10}
Jacob Abernethy, Peter Bartlett, Niv Buchbinder, and Isabelle Stanton.
\newblock A regularization approach to metrical task systems.
\newblock In {\em Algorithmic Learning Theory}, ALT 2010. Springer, 2010.

\bibitem[Bar96]{Bar96}
Yair Bartal.
\newblock Probabilistic approximations of metric spaces and its algorithmic
  applications.
\newblock In {\em 37th Annual Symposium on Foundations of Computer Science,
  {FOCS} '96, Burlington, Vermont, USA, 14-16 October, 1996}, pages 184--193,
  1996.

\bibitem[BBBT97]{BBBT97}
Yair Bartal, Avrim Blum, Carl Burch, and Andrew Tomkins.
\newblock A polylog(n)-competitive algorithm for metrical task systems.
\newblock In {\em Proceedings of the Twenty-ninth Annual ACM Symposium on
  Theory of Computing}, STOC '97, pages 711--719, New York, NY, USA, 1997. ACM.

\bibitem[BBM06]{BBM06}
Yair Bartal, B\'ela Bollob\'as, and Manor Mendel.
\newblock Ramsey-type theorems for metric spaces with applications to online
  problems.
\newblock {\em J. Comput. System Sci.}, 72(5):890--921, 2006.

\bibitem[BCL{\etalchar{+}}18]{BCLLM18}
S{\'{e}}bastien Bubeck, Michael~B. Cohen, Yin~Tat Lee, James~R. Lee, and
  Aleksander Madry.
\newblock $k$-server via multiscale entropic regularization.
\newblock In {\em Proceedings of the 50th Annual {ACM} {SIGACT} Symposium on
  Theory of Computing, {STOC} 2018, Los Angeles, CA, USA, June 25-29, 2018},
  pages 3--16, 2018.

\bibitem[BCLL21]{BCLL19}
S\'{e}bastien Bubeck, Michael~B. Cohen, James~R. Lee, and Yin~Tat Lee.
\newblock Metrical task systems on trees via mirror descent and unfair gluing.
\newblock {\em SIAM J. Comput.}, 50(3):909--923, 2021.

\bibitem[BCN14]{BCN14}
Niv Buchbinder, Shahar Chen, and Joseph~(Seffi) Naor.
\newblock Competitive analysis via regularization.
\newblock In {\em Proceedings of the Twenty-fifth Annual ACM-SIAM Symposium on
  Discrete Algorithms}, SODA '14, pages 436--444, Philadelphia, PA, USA, 2014.
  Society for Industrial and Applied Mathematics.

\bibitem[BKRS00]{BKRS00}
Avrim Blum, Howard Karloff, Yuval Rabani, and Michael Saks.
\newblock A decomposition theorem for task systems and bounds for randomized
  server problems.
\newblock {\em SIAM J. Comput.}, 30(5):1624--1661, 2000.

\bibitem[BLMN05]{BLMN05}
Yair Bartal, Nathan Linial, Manor Mendel, and Assaf Naor.
\newblock On metric {R}amsey-type phenomena.
\newblock {\em Ann. of Math. (2)}, 162(2):643--709, 2005.

\bibitem[BLS92]{BLS92}
Allan Borodin, Nathan Linial, and Michael~E. Saks.
\newblock An optimal on-line algorithm for metrical task system.
\newblock {\em J. ACM}, 39(4):745--763, October 1992.

\bibitem[CKR01]{CKR01}
Gruia Calinescu, Howard Karloff, and Yuval Rabani.
\newblock Approximation algorithms for the $0$-extension problem.
\newblock In {\em Proceedings of the 12th Annual ACM-SIAM Symposium on Discrete
  Algorithms}, pages 8--16, 2001.

\bibitem[CL19]{CL19}
Christian Coester and James~R. Lee.
\newblock Pure entropic regularization for metrical task systems.
\newblock In {\em Conference on Learning Theory, {COLT} 2019, 25-28 June 2019,
  Phoenix, AZ, {USA}}, pages 835--848, 2019.

\bibitem[FM03]{FM03}
Amos Fiat and Manor Mendel.
\newblock Better algorithms for unfair metrical task systems and applications.
\newblock {\em SIAM Journal on Computing}, 32(6):1403--1422, 2003.

\bibitem[FRT04]{FRT04}
Jittat Fakcharoenphol, Satish Rao, and Kunal Talwar.
\newblock A tight bound on approximating arbitrary metrics by tree metrics.
\newblock {\em J. Comput. Syst. Sci.}, 69(3):485--497, 2004.

\bibitem[Sei99]{Sei99}
Steve Seiden.
\newblock Unfair problems and randomized algorithms for metrical task systems.
\newblock {\em Inf. Comput.}, 148(2):219--240, February 1999.

\end{thebibliography}

\end{document}